\documentclass[12pt]{article}

\usepackage{amsmath,amssymb,amsthm}
\usepackage[margin=1in]{geometry}
\usepackage{mathtools}

\usepackage{multirow}

\newtheorem{proposition}{Proposition}

\newcommand{\beqa}{\begin{eqnarray*}}
\newcommand{\eeqa}{\end{eqnarray*}}
\newcommand{\beqn}{\begin{eqnarray}}
\newcommand{\eeqn}{\end{eqnarray}}
\newcommand{\baa}{\begin{array}}
\newcommand{\eaa}{\end{array}}
\newcommand{\bcc}{\begin{center}}
\newcommand{\ecc}{\end{center}}
\newcommand{\btab}{\begin{tabular}}
\newcommand{\etab}{\end{tabular}}

\def\b1e{{\mathbf e}}
 
\newcommand{\Var}{{\rm Var}}

\newcommand{\bSigma}{\boldsymbol{\Sigma}}

\newcommand{\bbeta}{\boldsymbol{\beta}}

\newcommand{\bK}{\boldsymbol{K}}
\newcommand{\bA}{\boldsymbol{A}}
\newcommand{\bB}{\boldsymbol{B}}

\newcommand{\bba}{\boldsymbol{b}}

\newcommand{\bV}{\boldsymbol{V}}

\newcommand{\bmm}{\boldsymbol{m}}

\newcommand{\bk}{\boldsymbol{k}}

\newcommand{\bx}{\boldsymbol{x}}
\newcommand{\bX}{\boldsymbol{X}}
\newcommand{\by}{\boldsymbol{y}}

\newcommand{\bZ}{\boldsymbol{Z}}
\newcommand{\bz}{\boldsymbol{z}}

\renewcommand{\bba}{\boldsymbol{b}}

\newcommand{\BC}{\mathrm{BC}}
\newcommand{\E}{\mathbb{E}}
\newcommand{\R}{\mathbb{R}}

\newtheorem{assumption}{Assumption}

\newcommand{\Prob}{\mathbb{P}}

\newcommand{\pto}{\xrightarrow{p}}
\newcommand{\dto}{\xrightarrow{d}}

\newcommand{\IF}{\mathrm{IF}}
\newcommand{\ELPD}{\mathrm{ELPD}}
\newcommand{\avar}{\operatorname{avar}}

\newcommand{\psif}{\psi}
\newcommand{\PsiPop}{\Psi}
\newcommand{\PsiEmp}{\Psi_n}

\newcommand{\astar}{a^\ast}
\newcommand{\anstar}{a_n^\ast}
\newcommand{\aopt}{a_0^\ast}

\newcommand{\Apsi}{A}
\newcommand{\Bpsi}{B}

\usepackage{xcolor}
\usepackage{natbib}
\usepackage{hyperref}

\newcommand{\pii}{p_i}
\newcommand{\qia}{q_i(\cdot;a)}

\newcommand{\Geps}{G_\epsilon}
\newcommand{\Gept}{G_{\epsilon,\tau}}

\newcommand{\mzero}{m_0}
\newcommand{\meps}{m_\epsilon}
\newcommand{\mH}{m_H}

\newcommand{\Bzero}{B_0}
\newcommand{\Bstar}{B_\ast}

\newtheorem{theorem}{Theorem}
\newtheorem{lemma}{Lemma}
\newtheorem{corollary}{Corollary}

\theoremstyle{remark}
\newtheorem{remark}{Remark}

\title{Conformity-Based Bayesian Projective
Prediction}

\author{Arkaprava Roy$^1$, Malay Ghosh$^2$\\ $^1$ Department of Biostatistics, $^2$ Department of Statistics, University of Florida}

\begin{document}

\maketitle

\begin{abstract}
We propose a general robust prediction framework, termed
conformity-based projective prediction (CPP), that integrates Bayesian
predictive modeling with ideas from conformity-based conformal prediction. Rather than
assessing conformity through residual-based scores, the CPP criterion
defines conformity distributionally: a candidate value for a future
response is considered conforming to the extent that its inclusion in
the data leaves the leave-one-out predictive distributions of the
observed responses undisturbed. The framework requires only that the
leave-one-out and swapped predictive distributions are available in
closed form and that the swapped predictive mean is differentiable in
the candidate value. Under these conditions, we establish a general
bounded-influence proposition and a general local convexity lemma,
and prove that CPP dominates any plug-in predictor with unbounded
influence in asymptotic variance under $\epsilon$-contamination models.
When the posterior mean is linear in the observations --- as in
Gaussian linear models, basis-expansion regression, and Gaussian
process regression --- the swapped predictive mean is affine in the
candidate value, yielding closed-form or one-dimensional optimization
solutions and an efficient rank-two computational update; all general
theoretical results specialize to explicit corollaries in this setting.
Simulation experiments and two data analyses under the Gaussian linear
model illustrate the finite-sample advantages of the proposed method,
confirming the theoretical predictions across contamination levels,
sample sizes, and predictor dimensions.
\end{abstract}

\noindent
{\bf Keywords:}
Bayesian prediction; bounded influence; conformity-based prediction;
density power divergence; Gaussian process; Hellinger distance;
influence function; leave-one-out; robust prediction.

\section{Introduction}

Prediction under uncertainty is a central problem in statistics, with
applications ranging from clinical decision-making and genomics to
environmental monitoring and econometrics. A recurring challenge is that
prediction methods trained on a finite sample must extrapolate to future observations whose distribution may not be fully specified, and must do so
in a way that provides reliable uncertainty quantification rather than
merely a point estimate. Two broad traditions address this challenge: the
frequentist conformal prediction framework, which provides finite-sample
coverage guarantees under minimal assumptions, and the Bayesian predictive
framework, which uses a probabilistic model to construct a full predictive
distribution. Each tradition has well-documented strengths and limitations,
and the relationship between them remains an active area of methodological
research.

Conformal prediction was introduced by \citet{Gammerman1998} and
developed systematically by \citet{Vovk2005}. The central idea is to
assess how well a candidate future observation conforms with the
previously observed data, relative to a user-chosen conformity measure.
Formally, let $y_1,\ldots,y_n$ denote the realizations of an exchangeable
sequence $Y_1,\ldots,Y_n$. For a candidate value $y$ at a future
covariate $x_{n+1}$, one computes conformity scores
\[
c_i(y) = C\bigl((x_1,y_1),\ldots,(x_{i-1},y_{i-1}),(x_{n+1},y),(x_{i+1},y_{i+1}),\ldots,(x_n,y_n)\bigr), \quad i=1,\ldots,n+1,
\]
and includes $y$ in the prediction region $R_{n+1}$ if
\[
\#\{i=1,\ldots,n+1 : c_i(y) \le c_{n+1}(y)\} > k,
\qquad
k = \lfloor (n+1)\alpha \rfloor.
\]
Exchangeability ensures that the conformity scores
$c_1,\ldots,c_{n+1}$ are themselves exchangeable, which underpins the
finite-sample validity of the resulting region: the coverage probability
satisfies $\mathrm{P}(Y_{n+1} \in R_{n+1}) \ge \alpha$, with equality
when $(n+1)\alpha$ is an integer \citep{Vovk2005, Shafer2008}. This
guarantee is distribution-free, requiring only exchangeability of the
extended sequence $(Y_1,\ldots,Y_{n+1})$, and holds regardless of the
dimension of the covariates or the complexity of the prediction model.

The appeal of this framework has stimulated a rich body of subsequent
work. \citet{Lei2014} and \citet{Sadinle2019} studied the statistical
efficiency of conformal regions and their connections to nonparametric
regression and set-valued classification; \citet{Vovk2016}
extended the framework to predictive distribution estimation;
\citet{Barber2021} introduced jackknife$+$ and cross-conformal
procedures that reduce the computational burden of full conformal
prediction while preserving approximate coverage; \citet{Romano2019}
proposed conformalized quantile regression as an efficient approach to
regression intervals; and \citet{Angelopoulos2023} provided a unified
treatment of conformal risk control beyond coverage. In a parallel
development, \citet{Tibshirani2019} proposed conformal prediction under
covariate shift via weighted conformity scores, relaxing the strict
exchangeability requirement. A comprehensive review of these developments
is given by \citet{Shafer2008} and, more recently, by
\citet{Angelopoulos2023}. The distribution-free predictive inference
framework of \citet{Lei2018} provides a unified treatment of regression
prediction that connects the conformal and classical statistical
perspectives.

The present work is inspired by this conformity-based prediction framework to ask a related but distinct question. Here, we ask whether the
distributional machinery of Bayesian predictive modeling can be used to
define a \emph{different} kind of predictive criterion, one that is
informed by the spirit of conformal thinking but operates in a
fundamentally different mode.

This distinction is worth drawing carefully. Standard conformal
prediction is primarily a framework for \emph{set-valued} prediction:
its output is a region $R_{n+1}$ with a finite-sample coverage
guarantee, and its appeal lies precisely in the distribution-free
character of that guarantee. The approach we develop here does not
compete with this goal. It produces a \emph{point predictor} rather
than a prediction region, it operates within a Bayesian parametric
model, and it makes no claim to distribution-free validity. What it
shares with conformal prediction is a structural idea: that a sensible
prediction for $y_{n+1}$ should be one whose inclusion in the dataset
is, in a well-defined distributional sense, quite satisfactory.

As an alternative to residual-based conformity scores, which underpin
most practical conformal methods, standard approaches discard a
substantial portion of the distributional information available in a
fitted model. In most applications, the conformity score takes the form
$c_i = |y_i - \hat{f}(x_i)|$, where $\hat{f}$ is a point estimator of
the regression function. When the fitted model is Bayesian or otherwise
probabilistic, however, the predictive distribution at each training
point encodes richer information about how well the observation conforms
to the data than the residual alone. This observation does not identify
a deficiency in conformal prediction, which does not require or assume
any particular model; it simply suggests that a predictive
density-based approach following \citet{hoff2023bayes} and
\citet{Fong2021} might exploit this information in a different way.

The second observation concerns robustness. Conformal prediction
inherits the robustness properties of the underlying conformity
measure, which in turn reflects the fitted model. When the training
data contain outliers or high-leverage observations, residual-based
scores derived from non-robust estimators can be distorted, leading to
degraded prediction efficiency even when nominal coverage is maintained
\citep{Hampel1974, Huber1981, Hampel1986}. Robust regression estimators
can be substituted as the underlying model \citep{Ronchetti1994}, but
this leaves the question of how to build robustness directly into a
\emph{distributional} notion as an avenue for investigation. The present
paper addresses this question from within the Bayesian predictive
framework, using divergence measures whose score functions are
redescending \citep{Basu1998, GhoshBasu2016}.

The framework we propose, which we term \emph{conformity-basedprojective
prediction}, integrates the Bayesian predictive method with conformal
thinking. Rather than defining conformity through residuals derived
from a point estimator, we define it distributionally: a candidate
value $a$ for $y_{n+1}$ is considered conforming to the extent that
its inclusion in the data leaves the leave-one-out predictive
distributions of the observed responses undisturbed. This leads to a
projection criterion in which the predicted value $a^\ast$ minimizes a
divergence between two families of Gaussian predictive laws, namely,
(1) the leave-one-out predictive distributions of $y_i$ given the
remaining training data, and (2) the corresponding distributions
obtained by swapping $(x_i,y_i)$ with the candidate pair $(x_{n+1}, a)$.
The resulting optimization problem is tractable in closed form under
the Gaussian linear model and admits natural extensions to
nonparametric regression via basis expansions and Gaussian processes.

The framework connects to several strands of the existing literature.
Within the Bayesian predictive tradition, it is related to the use of
leave-one-out cross-validation for model assessment \citep{Vehtari2017}
and to the literature on M-estimation and predictive projections
\citep{Gneiting2007, Dawid1984}. The divergence measures we
employ, the squared Hellinger distance and the density power
divergence of \citet{Basu1998}, are chosen for their redescending
score properties, which yield bounded influence functions and, hence,
resistance to gross-error contamination. This connects the framework
to the classical robust statistics program of \citet{Hampel1986}
and \citet{Huber1981}, and in particular to the minimum Hellinger
distance and minimum divergence estimation literature
\citep{Beran1977, Tamura1986, Simpson1987, Lindsay1994}. The
optimality theory we develop, including an asymptotic variance
comparison under $\epsilon$-contamination models, formalizes the sense
in which conformity-basedprojective prediction dominates plug-in prediction
when the data are contaminated.

Several features of the proposed framework are worth highlighting.
First, unlike standard conformal prediction, our method produces a
\emph{point predictor} rather than a prediction region; it is thus
complementary to, rather than a replacement for, conformal regions.
Second, the framework is explicitly Bayesian in its use of predictive
distributions, but the divergence criterion introduces a frequentist
robustness discipline that guards against model misspecification.
Third, because the optimization criterion depends on $a$ only through
the mean of the swapped predictive distribution---which is affine in
$a$ under Gaussian models---the method is computationally efficient:
for the log-Bhattacharyya criterion a closed-form solution is
available, while for Hellinger and density power divergence criteria,
the problem reduces to a one-dimensional optimization over a compact
interval regardless of $p$ or $n$ (Corollary~\ref{cor:1d}); the
required coefficients can be computed at $O(np^2)$ cost via a
rank-two Woodbury update (Corollary~\ref{cor:rank2}). Fourth, the
framework extends naturally to nonparametric settings via finite basis
approximations and Gaussian process priors \citep{Rasmussen2006},
and all theoretical guarantees---bounded influence, asymptotic
normality, and variance dominance under contamination---inherit
verbatim to both nonparametric settings
(Corollary~\ref{cor:nonparametric}).

The remainder of the paper is organized as follows.
Section~\ref{sec:model} contains the complete theoretical development,
structured general-first. Section~\ref{sec:general} defines the CPP
criterion abstractly and introduces conditions~(C1)--(C2) for
tractability, distinguishing the general differentiable case from
the affine special case that arises under Gaussian models.
The general robustness and asymptotic theory ---
bounded influence (Section~\ref{sec:scores}),
local convexity (Section~\ref{sec:convexity}),
asymptotic distribution and variance dominance
(Section~\ref{sec:asymptotics}), and predictive comparison under
contamination (Section~\ref{sec:contamination}) --- are each stated
at the general level requiring only (C1)--(C2).
Section~\ref{sec:loo} then verifies these conditions under the
Gaussian linear model, establishes the affine structure as a
consequence of posterior linearity, and collects all model-specific
corollaries of the preceding general results.
Section~\ref{sec:inheritance} states a formal corollary showing all
results carry over verbatim to nonparametric models.
Section~\ref{sec:nonparametric} verifies conditions~(C1)--(C2) and
the affine structure for basis-expansion and Gaussian process
regression.
Section~\ref{sec:unknown-variance} addresses implementation under
unknown variance in the Gaussian linear model setting, proposing two
plug-in strategies and analyzing their contamination bias; the
approach generalizes to other parametric models satisfying (C1) and
(C2). These implementation details, along with simulation experiments
(Section~\ref{sec:simulations}) and two real-data illustrations
(Section~\ref{sec:data}), are organized under a single section on
numerical illustration under the linear model. Section~\ref{sec:discussion}
concludes with a discussion of directions for future work, including
extensions to generalized linear models. Proofs of all major results
are collected in the Appendix.

\section{conformity-basedprojective prediction}
\label{sec:model}

\subsection{General framework}
\label{sec:general}

We propose a prediction framework that defines conformity
distributionally rather than through residuals. Let
$\mathcal{D}_n=\{(\bx_i,y_i)\}_{i=1}^n$ be the observed data and
let $(\bx_{n+1},a)$ denote a candidate observation, where $a\in\R$
is the value to be predicted.

\paragraph{Leave-one-out predictive distribution.}
For each $i=1,\ldots,n$, let $\pii$ denote the predictive distribution
of $y_i$ given all data except the $i$th observation:
\[
\pii = p\bigl(y_i \mid \mathcal{D}_n\setminus\{(\bx_i,y_i)\}\bigr).
\]
This is the LOO predictive distribution of $y_i$ under the fitted
model; it characterizes how the $i$th observation looks relative to
the rest of the data.

\paragraph{Swapped predictive distribution.}
Now replace the $i$th observation with the candidate pair
$(\bx_{n+1},a)$ and define
\[
\qia = p\bigl(y_i \mid
  \mathcal{D}_n\setminus\{(\bx_i,y_i)\}\cup\{(\bx_{n+1},a)\}
\bigr).
\]
This is the predictive distribution of $y_i$ after swapping out
$({\bx_i},y_i)$ and inserting the candidate $({\bx_{n+1}},a)$.
If the candidate value $a$ is a plausible response at $\bx_{n+1}$,
its inclusion should not substantially alter the predictive
distributions of the remaining observations; the two families
$\{\pii\}$ and $\{\qia\}$ should be close.

\paragraph{CPP criterion.}
This conformity idea is formalized by the
\emph{conformity-basedprojective prediction} (CPP) criterion:
\begin{equation}
\label{eq:cpp}
\astar = \arg\min_{a\in\R}\;
J(a) := \sum_{i=1}^n D\!\bigl(\pii,\,\qia\bigr),
\end{equation}
where $D(\cdot,\cdot)$ is a divergence between probability
distributions. The predicted value $\astar$ is the candidate that
minimizes the total divergence between the LOO and swapped predictive
families; it is the value whose inclusion in the data is most
``conforming'' in a distributional sense.

\paragraph{Conditions for tractability.}
The general criterion~\eqref{eq:cpp} is tractable whenever two
conditions hold:
\begin{enumerate}
\item[(C1)] \textit{Closed-form predictive distributions.}
  Both $\pii$ and $\qia$ are available analytically, so that
  $D(\pii,\qia)$ can be evaluated in closed form.
\item[(C2)] \textit{Differentiable swapped mean.}
  The mean of $\qia$ is a known, differentiable function of $a$,
  denoted $\mu_{1i}(a) := E_{q_i(\cdot;a)}[y_i]$, with derivative
  $\partial_a\mu_{1i}(a)$ available in closed form.
\end{enumerate}
Under these conditions, defining
\[
\mu_{2i} := E_{p_i}[y_i],
\qquad
\Delta_i(a) := \mu_{2i} - \mu_{1i}(a),
\]
the per-term divergence $D_i(a) = D(\pii,\qia)$ depends on $a$ only
through $\Delta_i(a)$, and the objective $J(a)=\sum_i D_i(a)$ is a
differentiable function of the single variable $a$ regardless of $p$
or $n$.

\paragraph{The affine special case.}
For any Bayesian model in which the posterior mean is linear in the
observations $\by$, the swapped mean $\mu_{1i}(a)$ is affine in $a$:
\[
\mu_{1i}(a) = c_i + d_i a,
\qquad
\partial_a\mu_{1i}(a) = d_i \quad \text{(constant)},
\]
for model-specific scalars $c_i$ and $d_i$. This affine structure
is verified for the Gaussian linear model in Section~\ref{sec:loo}
and for basis-expansion and Gaussian process regression in
Section~\ref{sec:nonparametric}. It yields two additional tractability
benefits beyond (C1)--(C2): (i) the minimizer of $J$ under Hellinger
or DPD is found by solving a single scalar equation
(Corollary~\ref{cor:1d}), and (ii) the scalars $c_i$, $d_i$ for all
$i$ can be computed from a single matrix inversion via a rank-two
Woodbury update at cost $O(np^2)$ (Corollary~\ref{cor:rank2}).
These benefits are consequences of the affine structure and do not
hold in general.

\paragraph{Model classes.}
Conditions~(C1) and~(C2) are satisfied, and the affine special case
holds, for any Bayesian model with a Gaussian likelihood and conjugate
prior, in which the posterior mean is linear in $\by$.
Section~\ref{sec:loo} verifies this for the Gaussian linear model;
Section~\ref{sec:nonparametric} verifies it for basis-expansion and
Gaussian process regression. For models outside this class, such as
generalized linear models, (C1) and (C2) may still hold under
approximation, but the affine structure will generally not, precluding
the rank-two update and closed-form 1D solution (see
Section~\ref{sec:discussion}).

\paragraph{Choice of divergence.}
The divergence $D$ in~\eqref{eq:cpp} determines the robustness
properties of $\astar$. We consider three choices:
\begin{itemize}
  \item \textit{Log-Bhattacharyya:} $D(\pii,\qia)=-\log\BC_i(a)$,
    where $\BC_i$ is the Bhattacharyya coefficient. This yields
    a closed-form solution under Gaussian predictives but has an
    unbounded score function and is not robust to gross outliers
    (Section~\ref{sec:projection}).
  \item \textit{Squared Hellinger distance:}
    $D(\pii,\qia)=H^2(\pii,\qia)$. The score function is
    redescending, yielding a bounded influence function
    (Section~\ref{sec:scores}).
  \item \textit{Density power divergence} \citep{Basu1998} with
    parameter $\alpha>0$: $D(\pii,\qia)=\mathrm{DPD}_\alpha(\pii,\qia)$.
    The score function is also redescending; $\alpha$ controls the
    trade-off between robustness and efficiency
    (Section~\ref{sec:scores}).
\end{itemize}
The Hellinger and DPD criteria are the primary proposals of this
paper. The log-Bhattacharyya criterion is included as a tractable
baseline that clarifies why redescending scores are necessary for
robustness.

\subsection{Score functions and bounded influence}
\label{sec:scores}

We establish bounded influence at the level of the general CPP
criterion~\eqref{eq:cpp}, under conditions~(C1) and~(C2) of
Section~\ref{sec:general}, requiring only that the per-term scores
satisfy a structural condition. The explicit Gaussian Hellinger and
DPD instantiations are collected in Section~\ref{sec:loo} as
corollaries of the general results here.

Let $T(F)$ denote the population functional defined by
$\PsiPop(a;F):=\E_F[\psif(Y;a)]=0$. Its influence function is
\[
\IF(z;T,F)
=
-
\left(
\frac{\partial}{\partial a}\E_F[\psif(Y;a)]
\right)^{-1}_{a=T(F)}
\psif(z;T(F)).
\]

\begin{assumption}[Redescending score structure]
\label{ass:redesc}
The per-term score of $J(a)=\sum_{i=1}^n D_i(a)$ takes the form
\[
\psif_i(a)
:=
\frac{\partial}{\partial a}D_i(a)
=
-\,\dot\mu_{1i}(a)\,g_i\!\bigl(\Delta_i(a)\bigr),
\qquad
\dot\mu_{1i}(a) := \partial_a\mu_{1i}(a),
\]
where $g_i:\R\to\R$ is an odd function satisfying
$\sup_{\delta\in\R}|g_i(\delta)|\le M_i < \infty$
and $g_i(\delta)\to 0$ as $|\delta|\to\infty$.
In the affine special case $\mu_{1i}(a)=c_i+d_i a$,
$\dot\mu_{1i}(a)=d_i$ is constant and this reduces to
$\psif_i(a)=-d_i\,g_i(\Delta_i(a))$.
\end{assumption}

\begin{proposition}[General bounded influence]
\label{prop:bounded-inf}
Under Assumption~\ref{ass:redesc}, suppose $|\dot\mu_{1i}(a)|\le L_i$
uniformly in $a$ for each $i$. Then the aggregate score
$\psif(a)=\sum_i\psif_i(a)$ is bounded:
$|\psif(a)|\le\sum_i L_i M_i<\infty$ for all $a\in\R$.
Consequently, if $\partial_a\E_F[\psif(Y;a)]$ is finite and nonzero
at $a=T(F)$, then $\sup_z |\IF(z;T,F)| < \infty$.
In the affine case $\dot\mu_{1i}(a)=d_i$, the bound becomes
$\sum_i|d_i|M_i$.
\end{proposition}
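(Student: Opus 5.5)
The argument is a direct triangle-inequality bound on the aggregate score, followed by substitution into the influence-function formula. First, for each fixed $i$ and each $a\in\R$, Assumption~\ref{ass:redesc} gives
\[
|\psif_i(a)| = |\dot\mu_{1i}(a)|\,\bigl|g_i\bigl(\Delta_i(a)\bigr)\bigr| \le L_i\,\sup_{\delta\in\R}|g_i(\delta)| \le L_i M_i .
\]
Here $\Delta_i(a)=\mu_{2i}-\mu_{1i}(a)$ is some real number, so the supremum bound on $g_i$ applies whatever value $\Delta_i(a)$ takes. This is where the uniformity of $M_i$ over all of $\R$ is used; the redescending property $g_i(\delta)\to 0$ is not needed for boundedness itself.

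Summing over the finitely many $i=1,\ldots,n$ and applying the triangle inequality gives $|\psif(a)|\le\sum_i L_iM_i<\infty$ uniformly in $a$. The bound holds for every realization of the data entering $\psif$, and in particular for every value $z$ of the observation argument in $\psif(z;a)$. This is the point needing care. The influence function perturbs the data, and the dependence on $z$ enters only through $\Delta_i$ (via $\mu_{2i}$ or $\mu_{1i}$). So I would state explicitly that $L_i$ and $M_i$ do not depend on $z$: $M_i$ bounds $g_i$ on all of $\R$, and $L_i$ is a uniform bound on the derivative of the swapped mean. With that observation, $\sup_{z}|\psif(z;T(F))|\le\sum_iL_iM_i$.

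Next, write $c_F:=\partial_a\E_F[\psif(Y;a)]\big|_{a=T(F)}$, which by hypothesis is finite and nonzero, so $|c_F|^{-1}<\infty$. From the displayed formula for $\IF$,
\[
|\IF(z;T,F)| = |c_F|^{-1}\,|\psif(z;T(F))| \le |c_F|^{-1}\sum_i L_iM_i ,
\]
and the right side does not depend on $z$. Hence $\sup_z|\IF(z;T,F)|<\infty$. Nonzero-ness of $c_F$ is an assumption, so no obstacle arises there; the only substantive check is the uniformity in $z$ discussed above.

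Finally, in the affine case $\dot\mu_{1i}(a)=d_i$ is constant, so we may take $L_i=|d_i|$, and the bound becomes $\sum_i|d_i|M_i$.
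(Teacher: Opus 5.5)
Your proposal is correct and follows the paper's proof: bound each $|\psif_i(a)|$ by $L_iM_i$ using Assumption~\ref{ass:redesc}, sum over $i$, and substitute into the influence-function formula, whose denominator is finite and nonzero by hypothesis. Keep your explicit remark that $L_i$ and $M_i$ must not depend on the perturbing point $z$; the paper leaves this implicit in its phrase ``follows from the standard formula.''
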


\begin{proof}
$|\psif_i(a)|=|\dot\mu_{1i}(a)||g_i(\Delta_i(a))|\le L_i M_i$.
Summing over $i$ gives the score bound. The influence function
bound follows from the standard formula.
\end{proof}

\subsection{Local convexity and identification}
\label{sec:convexity}

We establish local strong convexity of $J$ under the general CPP
criterion, requiring only that each $D_i$ satisfies a sign condition
on its second derivative. The explicit convexity regions for the
Gaussian Hellinger and DPD divergences are collected in
Section~\ref{sec:loo}.

\begin{lemma}[General local strong convexity]
\label{lem:convexity}
Let $J(a)=\sum_{i=1}^n D_i(a)$ where $D_i(a)=D_i(\Delta_i(a))$
depends on $a$ only through $\Delta_i(a)=\mu_{2i}-\mu_{1i}(a)$,
with $\dot\mu_{1i}(a):=\partial_a\mu_{1i}(a)\neq 0$ for at least
one $i$. Let $\mathcal{R}_i:=\{\delta:D_i''(\delta)>0\}$.
If $\Delta_i(\astar)\in\mathcal{R}_i$ for each $i$ and
$\sum_i \dot\mu_{1i}(\astar)^2 D_i''(\Delta_i(\astar))\ge c>0$,
then $J''(\astar)\ge c>0$.
In the affine special case $\mu_{1i}(a)=c_i+d_i a$,
$\dot\mu_{1i}(\astar)=d_i$ and the condition reduces to
$\sum_i d_i^2 D_i''(\Delta_i(\astar))\ge c>0$.
\end{lemma}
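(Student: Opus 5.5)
The plan is a direct second-order chain-rule computation for $J$ at $\astar$. I will assume that each $D_i$, as a function of $\delta$, is twice continuously differentiable, and that each $\mu_{1i}$ is twice differentiable in $a$. The first assumption holds for the Gaussian Hellinger and DPD divergences, whose closed forms under (C1) are smooth in $\Delta_i$. Since $\Delta_i'(a)=-\dot\mu_{1i}(a)$, the first derivative of each term is
\[
\frac{d}{da}D_i(\Delta_i(a))=-\dot\mu_{1i}(a)\,D_i'(\Delta_i(a)).
\]
This matches the score form of Assumption~\ref{ass:redesc} with $g_i=D_i'$. Differentiating once more gives
\[
J''(a)=\sum_{i=1}^n \dot\mu_{1i}(a)^2\,D_i''(\Delta_i(a))\;-\;\sum_{i=1}^n \ddot\mu_{1i}(a)\,D_i'(\Delta_i(a)).
\]

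Next I evaluate this expression at $a=\astar$. The first sum is bounded below term by term. Each summand is $\dot\mu_{1i}(\astar)^2 D_i''(\Delta_i(\astar))$, and it is nonnegative because $\Delta_i(\astar)\in\mathcal{R}_i$ gives $D_i''>0$ there. By hypothesis the whole sum is at least $c>0$. The assumption that $\dot\mu_{1i}\neq0$ for some $i$ only guarantees that this hypothesis can be met, so that the sum is not identically zero. If the second sum vanishes, we get $J''(\astar)\ge c$.

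In the affine special case $\mu_{1i}(a)=c_i+d_i a$, we have $\ddot\mu_{1i}\equiv0$, so the second sum vanishes identically. The conclusion then follows at once, with condition $\sum_i d_i^2 D_i''(\Delta_i(\astar))\ge c$. This is the case covered by Section~\ref{sec:loo} and Section~\ref{sec:nonparametric}, so I will present it first and cleanly.

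The main obstacle is the curvature remainder $-\sum_i\ddot\mu_{1i}(\astar)D_i'(\Delta_i(\astar))$ in the genuinely nonlinear case. The first-order condition $J'(\astar)=0$ only kills $\sum_i\dot\mu_{1i}D_i'$, not the $\ddot\mu_{1i}$-weighted sum, so the remainder does not vanish automatically. My first attempt will be to show it is harmless in natural situations. The cleanest such situation is when all $\Delta_i(\astar)=0$, since odd symmetry of $D_i'$ (cf.\ the odd $g_i$) then gives $D_i'(0)=0$. Otherwise I will bound the remainder by $\sum_i|\ddot\mu_{1i}(\astar)|M_i$, using the boundedness $|D_i'|\le M_i$ from Assumption~\ref{ass:redesc}.

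Failing an exact cancellation, I will state the general-case conclusion with the hypothesis read as $J''(\astar)$'s full expression exceeding $c$. Equivalently, I will require $\sum_i \dot\mu_{1i}^2D_i''\ge c+\sum_i|\ddot\mu_{1i}|M_i$. I will remark that the affine case makes this automatic. Finally, continuity of $J''$ extends the bound to $J''\ge c/2$ on a neighbourhood of $\astar$, which gives local strong convexity and identification of $\astar$ as a strict local minimizer.
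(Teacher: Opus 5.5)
Your proposal is correct and follows essentially the same route as the paper: the same chain-rule formula $J''(a)=\sum_i[\dot\mu_{1i}(a)^2D_i''(\Delta_i(a))-\ddot\mu_{1i}(a)D_i'(\Delta_i(a))]$, with the affine case closed by $\ddot\mu_{1i}\equiv 0$. In the non-affine case the paper likewise concludes strong convexity only when the curvature correction $\sum_i\ddot\mu_{1i}(\astar)D_i'(\Delta_i(\astar))$ is ``small relative to $c$'', and the first-order condition it invokes does not remove that term; your strengthened hypothesis $\sum_i\dot\mu_{1i}(\astar)^2D_i''(\Delta_i(\astar))\ge c+\sum_i|\ddot\mu_{1i}(\astar)|M_i$ is a precise form of the paper's own caveat, not a departure from it.
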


\begin{proof}
By the chain rule,
$J'(a)=\sum_i D_i'(\Delta_i(a))\cdot(-\dot\mu_{1i}(a))$ and
$J''(a)=\sum_i\bigl[\dot\mu_{1i}(a)^2 D_i''(\Delta_i(a))
-\ddot\mu_{1i}(a) D_i'(\Delta_i(a))\bigr]$.
At a minimizer $\astar$ the first-order condition gives
$\sum_i D_i'(\Delta_i(\astar))\dot\mu_{1i}(\astar)=0$.
For the affine case $\ddot\mu_{1i}\equiv 0$, so
$J''(\astar)=\sum_i d_i^2 D_i''(\Delta_i(\astar))\ge c$.
In general, local strong convexity holds whenever the dominant
term $\sum_i\dot\mu_{1i}(\astar)^2 D_i''(\Delta_i(\astar))\ge c>0$
and the curvature correction $\sum_i\ddot\mu_{1i}(\astar)D_i'(\Delta_i(\astar))$
is small relative to $c$.
\end{proof}

Although $J$ is not globally convex, Lemma~\ref{lem:convexity}
guarantees local strong convexity near any minimizer $\astar$ at
which predictive discrepancies are small. This ensures local
identification and the non-vanishing of $J''(\astar)$, required for
the bounded-influence conclusion of Proposition~\ref{prop:bounded-inf}
and for the asymptotic theory below.

\paragraph{Local sensitivity.}
Differentiating $J'(\astar;y)=0$ with respect to $y_j$ gives
$d\astar/dy_j = -\partial_{y_j}J'|_{a=\astar}/J''(\astar)$.
Lemma~\ref{lem:convexity} ensures $J''(\astar)\ge c>0$; if
$\partial_{y_j}J'$ is uniformly bounded near $\astar$, then
$|d\astar/dy_j|\le\sup|\partial_{y_j}J'|/c<\infty$,
so $\astar(y)$ is locally Lipschitz-stable.
By Proposition~\ref{prop:bounded-inf}, the Hellinger and DPD CPP
predictors additionally have globally bounded influence functions.

\subsection{Asymptotic theory}
\label{sec:asymptotics}

We now state the main asymptotic results for the CPP predictor. All
results hold for any model satisfying conditions~(C1) and~(C2) of
Section~\ref{sec:general}; the Gaussian linear model of
Section~\ref{sec:loo} is treated as a special case via corollaries.
All proofs are deferred to Appendix~\ref{app:proofs}.

\begin{proposition}[Asymptotic distribution]
\label{prop:asymp-dist}
Let $\anstar$ solve $\PsiEmp(a)=n^{-1}\sum_{i=1}^n \psif(Y_i;a)=0$, and
let $\aopt$ be the unique solution of $\PsiPop(a)=\E[\psif(Y;a)]=0$.
Suppose:
\begin{enumerate}
\item[(A1)] $\Apsi:=\E[\partial_a\psif(Y;\aopt)]\neq 0$.
\item[(A2)] For every $\epsilon>0$, there exists $\delta>0$ near $a^*_0$, $\{\sup_{u\in [\aopt-\delta,\aopt+\delta]} |\partial_a\psif(Y;a)_{a=u}-\partial_a\psif(Y;a)_{a=a_0^*}|>\epsilon\}\rightarrow0$ as $n\rightarrow\infty$;
\item[(A3)] $\E[\psif(Y;\aopt)^2]<\infty$.
\end{enumerate}
Then
\[
\sqrt{n}(\anstar-\aopt)
\dto
N\!\left(0,\,\frac{\Bpsi}{\Apsi^2}\right),
\qquad
\Bpsi=\Var\!\bigl(\psif(Y;\aopt)\bigr).
\]
\end{proposition}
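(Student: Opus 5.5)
This is the classical Z-estimator argument: a one-term Taylor expansion of the estimating equation around $\aopt$, with the remainder controlled by the local equicontinuity condition (A2). We take $Y_1,\ldots,Y_n$ i.i.d.\ from $F$. We also take $\anstar$ to be a consistent root, $\anstar\pto\aopt$. This is implicit in the statement, since $\aopt$ is the unique zero of $\PsiPop$; if needed, consistency can be obtained from uniqueness together with the local strong convexity of Lemma~\ref{lem:convexity} (which makes $\PsiPop$ strictly monotone near $\aopt$) by the usual monotone-root argument.

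\emph{Step 1 (expansion).} Since $\anstar$ solves $\PsiEmp(a)=0$, the mean value theorem applied to $a\mapsto\PsiEmp(a)$ gives
\[
0=\PsiEmp(\anstar)=\PsiEmp(\aopt)+\dot\PsiEmp(\tilde a_n)(\anstar-\aopt),
\]
for some $\tilde a_n$ between $\anstar$ and $\aopt$. Here $\dot\PsiEmp(a)=n^{-1}\sum_i\partial_a\psif(Y_i;a)$. Hence
\[
\sqrt n(\anstar-\aopt)=-\dot\PsiEmp(\tilde a_n)^{-1}\sqrt n\,\PsiEmp(\aopt),
\]
once $\dot\PsiEmp(\tilde a_n)$ is shown to be bounded away from zero with probability tending to one.

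\emph{Step 2 (numerator).} By definition of $\aopt$, $\E\psif(Y;\aopt)=0$, and (A3) gives a finite second moment. The Lindeberg--L\'evy CLT therefore yields $\sqrt n\,\PsiEmp(\aopt)\dto N(0,\Bpsi)$ with $\Bpsi=\Var(\psif(Y;\aopt))$.

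\emph{Step 3 (denominator; main obstacle).} We need $\dot\PsiEmp(\tilde a_n)\pto\Apsi$. Split it as
\[
\dot\PsiEmp(\tilde a_n)-\Apsi=\bigl[\dot\PsiEmp(\tilde a_n)-\dot\PsiEmp(\aopt)\bigr]+\bigl[\dot\PsiEmp(\aopt)-\Apsi\bigr].
\]
The second bracket tends to zero in probability by the weak law of large numbers; this requires $\partial_a\psif(Y;\aopt)$ to be integrable, which the definition of $\Apsi$ in (A1) presupposes.

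For the first bracket, fix $\epsilon>0$ and take the $\delta$ from (A2). On the event $\{|\tilde a_n-\aopt|\le\delta\}$, whose probability tends to one by consistency,
\[
|\dot\PsiEmp(\tilde a_n)-\dot\PsiEmp(\aopt)|\le n^{-1}\sum_i\sup_{|u-\aopt|\le\delta}|\partial_a\psif(Y_i;u)-\partial_a\psif(Y_i;\aopt)|.
\]
The delicate point is how to read (A2). I would interpret it as saying that the probability that this averaged supremum exceeds $\epsilon$ tends to zero. Equivalently, one can assume an integrable envelope whose expectation shrinks as $\delta\to0$ and then apply Markov's inequality. Under either reading, the first bracket is $o_p(1)$. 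This step is where care is needed: the random intermediate point $\tilde a_n$ rules out a pointwise LLN, so some uniformity, which is exactly what (A2) supplies, is unavoidable.

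\emph{Step 4 (conclusion).} Since $\Apsi\neq0$ by (A1), the continuous mapping theorem gives $\dot\PsiEmp(\tilde a_n)^{-1}\pto\Apsi^{-1}$. Slutsky's lemma combined with Steps 1--3 then gives $\sqrt n(\anstar-\aopt)\dto N(0,\Bpsi/\Apsi^2)$. Equivalently, $\sqrt n(\anstar-\aopt)=-\Apsi^{-1}n^{-1/2}\sum_i\psif(Y_i;\aopt)+o_p(1)$.
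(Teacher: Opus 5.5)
Your Steps 1--4 match the second half of the paper's proof: mean-value expansion, CLT under (A3), (A2) to handle the random intermediate point, and Slutsky. Your reading of (A2) is exactly what that step needs. The gap is the hypothesis you add, namely that $\anstar$ is consistent. The paper does not assume this. The first half of its proof (the Foutz inverse-function-theorem argument) constructs a root in $[\aopt-\delta,\aopt+\delta]$, shows it is unique there, and deduces consistency. Your claim that consistency is ``implicit in the statement, since $\aopt$ is the unique zero of $\PsiPop$'' is not valid, because uniqueness of the population zero gives no control over an arbitrary zero of $\PsiEmp$. For the bounded redescending scores of this paper, $\PsiPop(a)\to0$ as $|a|\to\infty$, so $\aopt$ is not a well-separated zero. $\PsiEmp$ may then have roots far from $\aopt$, for which the conclusion is false; the proposition is really about a specific root that has to be produced. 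Your fallback is also mis-sourced. Lemma~\ref{lem:convexity} concerns the sample objective $J$ at its own minimizer, under the extra hypothesis $\Delta_i(\astar)\in\mathcal{R}_i$. It says nothing about monotonicity of $\PsiPop$ near $\aopt$, and none of its hypotheses is among (A1)--(A3).

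The missing step can be supplied with the tools you already use in Step~3.
\begin{itemize}
\item Apply (A2) with $\epsilon=|\Apsi|/4$. The resulting $\delta$ also works for every $\delta'\le\delta$, since the supremum over a smaller interval is smaller.
\item Combined with the WLLN for $\dot{\Psi}_n(\aopt)$, this gives, with probability tending to one, $|\dot{\Psi}_n(u)-\Apsi|\le|\Apsi|/2$ for all $|u-\aopt|\le\delta'$.
\item Hence $\PsiEmp$ is strictly monotone on that interval, and $|\PsiEmp(\aopt\pm\delta')-\PsiEmp(\aopt)|\ge|\Apsi|\delta'/2$, with the two increments of opposite sign.
\item Since $\PsiEmp(\aopt)\pto0$ by (A3), $\PsiEmp(\aopt-\delta')$ and $\PsiEmp(\aopt+\delta')$ have opposite signs with probability tending to one.
\item The intermediate value theorem then gives exactly one root in $[\aopt-\delta',\aopt+\delta']$.
\end{itemize}
Defining $\anstar$ as the unique root in $[\aopt-\delta,\aopt+\delta]$ therefore yields existence, local uniqueness and consistency at once. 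This is the one-dimensional content of the paper's inverse-function step, and after it your Steps 1--4 go through unchanged.
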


\begin{remark}
Assumption~(A1) requires the sensitivity $\Apsi$ to be nonzero.
Under the Hellinger or DPD score, local strong convexity of $J$ at
$\astar$ (Lemma~\ref{lem:convexity}) implies
$J''(\astar)>0$, which is equivalent to $\Apsi\neq 0$.
Thus~(A1) is verified whenever the conditions of Lemma~\ref{lem:convexity}
hold, and need not be treated as an independent assumption.
\end{remark}

\begin{proposition}[Asymptotic variance comparison]
\label{prop:avar}
Let $\hat{f}_n(\bx_{n+1})$ be any plug-in predictor satisfying
$\sqrt{n}(\hat{f}_n(\bx_{n+1})-\mzero)\dto N(0,V_0)$ for some
$V_0>0$ under the true model $F_0$.
Under the assumptions of Proposition~\ref{prop:asymp-dist},
\[
\avar(\anstar)=\frac{\Bpsi}{\Apsi^2},
\qquad
\avar\bigl(\hat{f}_n(\bx_{n+1})\bigr)=V_0.
\]
\end{proposition}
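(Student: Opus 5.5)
The statement has two parts, and I will treat them separately. The claim $\avar(\anstar)=\Bpsi/\Apsi^2$ follows directly from Proposition~\ref{prop:asymp-dist}. The claim $\avar(\hat f_n(\bx_{n+1}))=V_0$ is the hypothesis on the plug-in predictor. The content of the proposition is therefore the identification of the asymptotic variance with the variance of the Gaussian limit in each case. I will define $\avar(T_n)$ for a $\sqrt n$-consistent estimator as the variance of the weak limit of $\sqrt n(T_n-\theta)$. I will not use the second moment of $\sqrt n(T_n-\theta)$, since convergence in distribution does not imply convergence of moments.

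For the CPP predictor I will take the usual M-estimation route. First I need consistency, $\anstar\pto\aopt$. I would argue it from uniqueness of the root of $\PsiPop$ together with a uniform law of large numbers for $\PsiEmp$ on a compact neighbourhood. Proposition~\ref{prop:bounded-inf} gives $|\psif|\le\sum_i L_iM_i$, so the score is bounded and the envelope condition is immediate. Lemma~\ref{lem:convexity} then isolates the root locally.

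Next I expand $0=\PsiEmp(\anstar)=\PsiEmp(\aopt)+\partial_a\PsiEmp(\tilde a)(\anstar-\aopt)$ for some $\tilde a$ between $\anstar$ and $\aopt$. Three facts finish this step:
\begin{itemize}
\item $\sqrt n\,\PsiEmp(\aopt)\dto N(0,\Bpsi)$ by the CLT, using (A3) and $\E\psif(Y;\aopt)=0$.
\item $\partial_a\PsiEmp(\tilde a)\pto\Apsi$, by the WLLN at $\aopt$ combined with the local equicontinuity (A2) and consistency.
\item $\Apsi\neq0$ by (A1), which in turn is guaranteed by Lemma~\ref{lem:convexity}.
\end{itemize}
Slutsky's lemma then gives the limit $N(0,\Bpsi/\Apsi^2)$, which is exactly Proposition~\ref{prop:asymp-dist}. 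I would simply cite that proposition rather than redo the argument.

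For the plug-in predictor, the hypothesis $\sqrt n(\hat f_n(\bx_{n+1})-\mzero)\dto N(0,V_0)$ gives $V_0$ by the same definition. The only subtlety is uniqueness of the weak limit, so the asymptotic variance is well defined.

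The main obstacle I expect is interpretive rather than technical. The two predictors may be centred at different targets ($\aopt$ versus $\mzero$), and the asymptotic variance must be read relative to each estimator's own centring. I would add a remark that, under $F_0$ with a symmetric model and odd $g_i$, $\aopt=\mzero$. That remark would make the comparison meaningful for the later dominance result under contamination.
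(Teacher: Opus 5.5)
Your proposal is correct and matches the paper's proof: the first identity is read off directly from Proposition~\ref{prop:asymp-dist}, and the second is simply the hypothesis on the plug-in predictor. The paper also notes the OLS case $V_0=\bx_{n+1}^T\Sigma_\beta\bx_{n+1}$ via the delta method; your re-sketch of the M-estimation argument and your remark on $\aopt=\mzero$ are not needed, and the appeal to Proposition~\ref{prop:bounded-inf} would import Assumption~\ref{ass:redesc}, which is not among the stated hypotheses, but since you end by citing Proposition~\ref{prop:asymp-dist}, nothing is lost.
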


The following theorem formalizes CPP dominance over any plug-in
predictor whose score has unbounded influence. The key inputs are:
(i) the uniform boundedness of the robust score
(Proposition~\ref{prop:bounded-inf}), and (ii) the uniform positivity
of the sensitivity $A^\ast(\epsilon,\tau)$
(Proposition~\ref{prop:finite-range} below).

\begin{proposition}[Finite-range identification bound]
\label{prop:finite-range}
Fix $T>0$ and consider $\Gept=(1-\epsilon)F_0+\epsilon H_\tau$,
$0<\tau\le T$. Let $a_{\epsilon,\tau}^\ast$ be the unique root of
$\Psi^\ast(a;\Gept)=0$. Assume: the contaminated roots remain in a
compact interval $K_T$; and $\Delta_i(a_{\epsilon,\tau}^\ast)\in\mathcal{R}_i$
throughout $K_T$ (Lemma~\ref{lem:convexity}).

Then there exist $0<\underline{A}_T\le\overline{A}_T<\infty$ such that
\[
\underline{A}_T
\le
\bigl|\partial_a\Psi^\ast(a;\Gept)\big|_{a=a_{\epsilon,\tau}^\ast}\bigr|
\le
\overline{A}_T,
\qquad
0<\tau\le T.
\]
\end{proposition}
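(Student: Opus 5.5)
The plan is a continuity-and-compactness argument on the map $(\tau,a)\mapsto \partial_a\Psi^\ast(a;\Gept)$. Write the contaminated sensitivity as a mixture,
\[
\partial_a\Psi^\ast(a;\Gept)=(1-\epsilon)\,\E_{F_0}[\partial_a\psif(Y;a)]+\epsilon\,\E_{H_\tau}[\partial_a\psif(Y;a)],
\]
so that both bounds reduce to controlling the two pieces uniformly over $(\tau,a)\in(0,T]\times K_T$.

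\textbf{Upper bound.} Differentiating the Assumption~\ref{ass:redesc} representation gives
\[
\partial_a\psif_i=-\ddot\mu_{1i}(a)\,g_i(\Delta_i(a))+\dot\mu_{1i}(a)^2\,g_i'(\Delta_i(a)).
\]
For the Hellinger and DPD scores, $g_i$ is a smooth redescending function, so $g_i'$ is bounded as well as $g_i$. With $|\dot\mu_{1i}|\le L_i$ as in Proposition~\ref{prop:bounded-inf}, and $\ddot\mu_{1i}$ bounded on the compact $K_T$ (it vanishes in the affine case), $|\partial_a\psif(y;a)|$ is bounded uniformly in $y$ and in $a\in K_T$. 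Integrating this bound against any distribution, in particular against $\Gept$, gives $\overline{A}_T$ independently of $\tau$.

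\textbf{Lower bound.} First show the sensitivity is strictly nonzero at each root. By Lemma~\ref{lem:convexity}, the condition $\Delta_i(a^\ast_{\epsilon,\tau})\in\mathcal{R}_i$ makes each curvature term $\dot\mu_{1i}^2D_i''$ positive. Since $\dot\mu_{1i}\neq0$ for some $i$, the dominant sum is strictly positive. In the affine case this is exactly $\partial_a\Psi^\ast$; in general the curvature correction is controlled as in the proof of the lemma. Hence $|\partial_a\Psi^\ast(a^\ast_{\epsilon,\tau};\Gept)|>0$ for every $\tau\in(0,T]$.

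To make this uniform, take $\underline A_T$ to be the infimum over $\tau$ and argue by contradiction. Suppose $\tau_k$ is a sequence along which the sensitivity tends to $0$. Compactness of $K_T$ and of $[0,T]$ gives a subsequence with $\tau_k\to\bar\tau$ and roots $a^\ast_{\epsilon,\tau_k}\to\bar a\in K_T$.

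The family $H_\tau$ should depend weakly continuously on $\tau$; for point-mass or shifted contamination this is immediate. Because $\partial_a\psif$ is bounded and jointly continuous, dominated convergence passes the limit through the expectations. Likewise $\Psi^\ast(\bar a;G_{\epsilon,\bar\tau})=0$, so $\bar a$ is the root at $\bar\tau$. The limiting sensitivity would then vanish at the root, contradicting the pointwise positivity above.

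\textbf{Main obstacle.} The difficulty is the endpoint $\tau\to0$ (or any limit where $H_\tau$ degenerates). There I would either extend the argument to the closed interval $[0,T]$ by assuming $H_0$ is well-defined, so the limit root is still covered by the convexity hypothesis, or assume the convexity condition holds on all of $K_T$ rather than only at roots. The statement's phrasing, ``throughout $K_T$,'' supports the latter. Under that reading the infimum of a positive continuous function over the compact set $[0,T]\times K_T$ directly gives $\underline A_T$, bypassing the sequence argument.
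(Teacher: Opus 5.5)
Your proposal is correct and follows essentially the paper's route: non-vanishing of $\partial_a\Psi^\ast$ at each root comes from Lemma~\ref{lem:convexity}, and continuity of the sensitivity on the compact set $K_T\times[0,T]$ gives uniformity; the one difference is that you get the upper bound directly from boundedness of $g_i'$, $\dot\mu_{1i}$, $\ddot\mu_{1i}$ (which uses the Hellinger/DPD form of $g_i$, not just Assumption~\ref{ass:redesc}) rather than from continuity on a compact set. Your handling of the $\tau\to 0$ endpoint is more careful than the paper's proof, which calls the root graph $\{(a^\ast_{\epsilon,\tau},\tau):0<\tau\le T\}$ compact because $\tau\mapsto a^\ast_{\epsilon,\tau}$ is continuous; that fails on the half-open interval $(0,T]$, so the minimum need not be attained, and the argument needs exactly your extra hypothesis that $\partial_a\Psi^\ast$ is continuous and non-vanishing on all of $K_T\times[0,T]$.
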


\begin{remark}
The local convexity condition in Proposition~\ref{prop:finite-range}
is guaranteed by Lemma~\ref{lem:convexity} whenever the minimizer
lies in the local convexity region. Thus $\underline{A}_T>0$ follows
without further assumptions beyond those of Lemma~\ref{lem:convexity}.
\end{remark}

\begin{theorem}[Asymptotic variance dominance under contamination]
\label{thm:dominance}
Let $\Gept=(1-\epsilon)F_0+\epsilon H_\tau$ with $0<\epsilon<1$.
Let $\anstar$ satisfy Assumption~\ref{ass:redesc} with uniform score
bound $M=\sum_i L_i M_i$ (Proposition~\ref{prop:bounded-inf}), and
let $\hat{f}_n(\bx_{n+1})$ be any plug-in predictor whose score
function $\psif_0$ satisfies
$\E_{H_\tau}[\psif_0(Y)^2]\to\infty$ as $\tau\to\infty$.
Let $V^\ast(\epsilon,\tau)$ and $V_0(\epsilon,\tau)$ denote the
respective asymptotic variances under $\Gept$. Then:
\begin{enumerate}
\item[(i)] $\sup_{0<\tau\le T}V^\ast(\epsilon,\tau)\le M^2/\underline{A}_T^2<\infty$
  for every finite $T>0$.
\item[(ii)] $V_0(\epsilon,\tau)\to\infty$ as $\tau\to\infty$.
\item[(iii)] There exists $\tau_0<\infty$ such that
  $V^\ast(\epsilon,\tau)<V_0(\epsilon,\tau)$ for all $\tau\ge\tau_0$.
\end{enumerate}
\end{theorem}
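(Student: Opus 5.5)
The plan is to treat both predictors as M-estimators and apply the sandwich formula of Proposition~\ref{prop:asymp-dist} under $\Gept$ in place of $F_0$. This gives $V^\ast(\epsilon,\tau)=B^\ast(\epsilon,\tau)/A^\ast(\epsilon,\tau)^2$, where $A^\ast(\epsilon,\tau)=\partial_a\Psi^\ast(a;\Gept)|_{a=a_{\epsilon,\tau}^\ast}$ and $B^\ast(\epsilon,\tau)=\Var_{\Gept}(\psif(Y;a_{\epsilon,\tau}^\ast))$. The plug-in predictor gets the analogous expression $V_0(\epsilon,\tau)=B_0(\epsilon,\tau)/A_0(\epsilon,\tau)^2$ in terms of $\psif_0$. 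Proposition~\ref{prop:avar} identifies these as the asymptotic variances. I take as given that (A1)--(A3) hold under $\Gept$ for each fixed $\tau$, with root $a^\ast_{\epsilon,\tau}$.

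\textbf{Part (i).} Proposition~\ref{prop:bounded-inf} gives $|\psif(y;a)|\le M$ for all $y$ and $a$. Hence
\[
B^\ast(\epsilon,\tau)\le \E_{\Gept}\bigl[\psif(Y;a^\ast_{\epsilon,\tau})^2\bigr]\le M^2
\]
uniformly in $\tau$. Proposition~\ref{prop:finite-range} gives $|A^\ast(\epsilon,\tau)|\ge\underline{A}_T>0$ for $0<\tau\le T$. Dividing yields the bound $M^2/\underline{A}_T^2$.

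\textbf{Part (ii).} Expanding the mixture gives
\[
\E_{\Gept}[\psif_0^2]=(1-\epsilon)\E_{F_0}[\psif_0^2]+\epsilon\,\E_{H_\tau}[\psif_0^2],
\]
which tends to $\infty$ by hypothesis. To turn this into $B_0=\Var_{\Gept}(\psif_0)\to\infty$, I need the mean $\E_{\Gept}[\psif_0]$ not to absorb the growth. The root condition defining the plug-in functional centres $\psif_0$ at its root under $\Gept$, so the second moment is the variance. For a plug-in predictor with linear score, such as the posterior mean, the sensitivity $A_0$ is a fixed constant that does not depend on $\tau$. Then $V_0=B_0/A_0^2\to\infty$. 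In general I will state explicitly the mild requirement that $|A_0(\epsilon,\tau)|$ stays bounded above as $\tau\to\infty$, which holds for every affine plug-in. This is the step I expect to be the main obstacle: the theorem's hypothesis controls only the second moment of the score, so the denominator must be handled by such an auxiliary condition. Checking it for the affine case is what makes the argument rigorous for the Gaussian linear model.

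\textbf{Part (iii).} The difficulty is that (i) is only uniform on $(0,T]$, while (iii) concerns $\tau\to\infty$. I would first try to upgrade (i) to a uniform bound over all $\tau>0$. The redescending property $g_i(\delta)\to0$ means the contamination $H_\tau$ contributes vanishing score as $\tau\to\infty$. Hence $\Psi^\ast(\cdot;\Gept)\to(1-\epsilon)\Psi^\ast(\cdot;F_0)$, and likewise for its $a$-derivative. The roots therefore stay in a fixed compact set and $A^\ast(\epsilon,\tau)\to(1-\epsilon)\Apsi\ne0$ by (A1), so $\limsup_{\tau\to\infty}V^\ast(\epsilon,\tau)\le M^2/((1-\epsilon)^2\Apsi^2)<\infty$. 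Combining this with $V_0\to\infty$ from (ii), choose $\tau_0$ so that $V_0(\epsilon,\tau)$ exceeds this finite bound plus one for all $\tau\ge\tau_0$; this gives (iii). If the limiting argument is too delicate, the fallback is to use the compact-interval assumption of Proposition~\ref{prop:finite-range} with $T$ arbitrary. In that case the conclusion is read as holding for every $\tau_0\le\tau\le T$, with $\tau_0$ independent of $T$ because of the uniform limit.
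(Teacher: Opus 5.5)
Your part (i) is the same as the paper's proof: the bounded score gives $B^\ast\le M^2$, and Proposition~\ref{prop:finite-range} gives $|A^\ast|\ge\underline{A}_T$.

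Parts (ii) and (iii) take a different route, and in both places yours is more careful than the paper's.

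For (ii), the paper only asserts that $\E_{H_\tau}[\psif_0(Y)^2]\to\infty$ ``forces'' $V_0(\epsilon,\tau)\to\infty$. It never looks at the sensitivity of the plug-in. Your extra requirement that $|A_0(\epsilon,\tau)|$ stay bounded is the hypothesis this step actually needs. It holds automatically for an affine plug-in, e.g.\ slope $-\E[\bx\bx^T]$ for OLS under response-only contamination.

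For (iii), the paper sets $C^\ast=\sup_\tau V^\ast(\epsilon,\tau)$ over all $\tau$ and calls it finite by (i). But (i) controls only $0<\tau\le T$, with a constant $M^2/\underline{A}_T^2$ that can grow with $T$. Your redescending-limit argument, giving $A^\ast(\epsilon,\tau)\to(1-\epsilon)\Apsi\neq0$, supplies exactly the missing large-$\tau$ control. It also makes (iii) independent of (i): you compare $V_0\to\infty$ directly with $\limsup_{\tau\to\infty}V^\ast\le M^2/((1-\epsilon)^2\Apsi^2)$.

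The price is hypotheses outside the statement, and you should list them rather than leave them implicit.
\begin{enumerate}
\item[(a)] $\E_{H_\tau}[\psif(Y;a)]\to0$ requires $H_\tau$ to put vanishing mass on every compact set, as a scale family does. $\E_{H_\tau}[\psif_0^2]\to\infty$ does not imply this: take $H_\tau$ with mass $1-\tau^{-1}$ at $0$ and $\tau^{-1}$ at $\tau$, which converges weakly to a point mass at $0$.
\item[(b)] Assumption~\ref{ass:redesc} restricts $g_i$ but not $g_i'$. Convergence of $\partial_a\Psi^\ast(\cdot;\Gept)$ therefore needs $g_i'(\delta)\to0$ as well. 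This does hold for the Gaussian Hellinger and DPD forms.
\item[(c)] Pointwise convergence alone does not control roots. You need three ingredients: locally uniform convergence near $\aopt$; the sign change of $(1-\epsilon)\PsiPop$ there, which (A1) guarantees; and the assumed uniqueness of $a^\ast_{\epsilon,\tau}$. Uniqueness is what rules out the far-out spurious roots that redescending scores tend to create.
\end{enumerate}

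Finally, drop the fallback. $\underline{A}_T$ is nonincreasing in $T$ and may tend to $0$. So the $\tau_0$ you get by beating $M^2/\underline{A}_T^2$ depends on $T$ and can exceed $T$, leaving the window $[\tau_0,T]$ empty. The claimed $T$-independence simply re-imports the uniform limit that the fallback was meant to avoid.
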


Part~(i) of Theorem~\ref{thm:dominance} uses the uniform bound
$\underline{A}_T>0$, whose validity for the Hellinger and DPD
criteria follows from Proposition~\ref{prop:finite-range} and
Lemma~\ref{lem:convexity}.

\begin{theorem}[Predictive consistency]
\label{thm:consistency}
Let $\anstar=\arg\min_{a\in\mathcal{A}}J_n(a)$, where
$J_n(a)=n^{-1}\sum_i D(p_i,q_i(\cdot;a))$ and $\mathcal{A}$ is compact.
If $J(a)=E[D(p(Y),q(Y;a))]$ has a unique minimizer $\aopt$, and
$\sup_{a\in\mathcal{A}}|J_n(a)-J(a)|\pto 0$, then $\anstar\pto\aopt$.
Under correct specification and a divergence that is uniquely minimized
when the predictive mean matches the truth, $\aopt=\mzero:=E(Y\mid x_{n+1})$.
\end{theorem}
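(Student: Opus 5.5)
The first claim is the standard argmin consistency theorem for M-estimators, and I will prove it by the usual well-separation argument. The second claim is an identification statement that I will derive from the structure of $D$ under correct specification.

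\emph{Well-separation.} The plan needs the population minimizer to be well separated. Fix $\eta>0$ and set $\mathcal{A}_\eta:=\{a\in\mathcal{A}:|a-\aopt|\ge\eta\}$; this set is compact. I will assume, as is implicit in the setting (C1)--(C2) with closed-form divergences, that $J$ is continuous on $\mathcal{A}$. If $J$ were only lower semicontinuous the argument below would still go through. By uniqueness of $\aopt$ and compactness, $J$ attains its infimum on $\mathcal{A}_\eta$ at some point other than $\aopt$. Hence
\[
\kappa_\eta:=\inf_{a\in\mathcal{A}_\eta}J(a)-J(\aopt)>0 .
\]
This is the step I expect to need care. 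Uniqueness alone is not enough without continuity or compactness, so I will state continuity explicitly as the condition that makes the minimizer well separated.

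\emph{Sandwich argument.} Suppose $|\anstar-\aopt|\ge\eta$. Since $\anstar$ minimizes $J_n$ over $\mathcal{A}$, we have $J_n(\anstar)\le J_n(\aopt)$. Therefore
\[
\kappa_\eta\le J(\anstar)-J(\aopt)
\le [J(\anstar)-J_n(\anstar)]+[J_n(\aopt)-J(\aopt)]
\le 2\sup_{a\in\mathcal{A}}|J_n(a)-J(a)|.
\]
This gives
\[
\Prob(|\anstar-\aopt|\ge\eta)\le\Prob\bigl(\sup_{a\in\mathcal{A}}|J_n-J|\ge\kappa_\eta/2\bigr)\to0
\]
by the uniform convergence hypothesis, so $\anstar\pto\aopt$. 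If $\anstar$ is only a near-minimizer, $J_n(\anstar)\le\inf J_n+o_p(1)$, the same chain holds with an added $o_p(1)$ term.

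\emph{Identification of $\aopt$.} For the second claim, I will use that under correct specification the swapped law $q(\cdot;a)$ shares its non-location features (e.g.\ the predictive variance in the Gaussian case) with $p$. So $D(p,q(\cdot;a))$ depends on $a$ only through $\Delta(a)=\mu_2-\mu_1(a)$, and by the divergence assumption it is uniquely minimized at $\Delta=0$. When the inserted value has the correct conditional mean $\mzero=E(Y\mid x_{n+1})$, the swapped predictive is centred at the same mean as the LOO predictive, asymptotically or in expectation. This makes $a=\mzero$ a minimizer of the population criterion $J$, since the divergence is $0$ there and nonnegative elsewhere. The uniqueness hypothesis on $\aopt$ then forces $\aopt=\mzero$.

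The delicate point in this last step is justifying that the expected swapped mean equals the LOO mean exactly at $a=\mzero$. In the affine case $\mu_{1i}(a)=c_i+d_ia$ with $d_i\neq0$, I would verify this directly: $\Delta_i(a)$ vanishes in expectation at the single value $a=\mzero$. The general version I would state as the assumed matching property of the divergence.
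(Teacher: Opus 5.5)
\paragraph{Consistency.} Your argument for the first claim is the paper's: a well-separation gap (your $\kappa_\eta$, the paper's $3\eta$), then the sandwich on the event that $\sup_{a\in\mathcal{A}}|J_n(a)-J(a)|$ is small. You are right that uniqueness of $\aopt$ on a compact $\mathcal{A}$ gives the gap only if $J$ is (lower semi)continuous, and the paper uses this tacitly. The extra assumption in fact costs nothing here. If each $J_n$ is continuous in $a$, as the closed-form Hellinger and DPD criteria are, then a deterministic uniform-in-probability limit $J$ is a uniform limit of continuous functions, hence continuous.

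\paragraph{Identification.} The paper gives no derivation of $\aopt=\mzero$. Its proof simply asserts it under correct specification, reading it off the hypothesis that the population criterion is uniquely minimized at the truth. Your last sentence falls back on the same assumption, so that route is fine.

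The heuristic you give before it, however, is wrong in the Gaussian linear model and should be dropped or repaired. Set $\bB_i=(\bA-\bx_i\bx_i^T)^{-1}$, $u_i=\bx_i^T\bB_i\bx_{n+1}$ and $w_i=\bx_{n+1}^T\bB_i\bx_{n+1}$. Sherman--Morrison then gives $d_i=u_i/(1+w_i)$ and $s_{2i}^2-s_{1i}^2=\sigma^2 d_i^2(1+w_i)$. It also gives $\Delta_i(a)=d_i\{\bx_{n+1}^T\bB_i(\bba-\bx_iy_i)-a\}$, where $\bx_{n+1}^T\bB_i(\bba-\bx_iy_i)$ is the leave-one-out posterior-mean prediction at $\bx_{n+1}$. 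Three problems follow.

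(a) In every term that actually depends on $a$ (that is, $d_i\neq0$), the two predictive variances differ. So $D_i$ at $\Delta_i=0$ is strictly positive, not zero; for Hellinger it equals $1-C_i>0$. What you can use instead is that $D_i$ is even and strictly increasing in $|\Delta_i|$.

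(b) Knowing $\E[\Delta_i(\mzero)]=0$ does not make $\mzero$ a minimizer of $\E[D_i(\Delta_i(a))]$ when $D_i$ is nonlinear. You need the law of the leave-one-out prediction to be symmetric and unimodal about $\mzero$. That law is Gaussian under correct specification, but it is centred exactly at $\mzero$ only under a flat prior $\bV^{-1}=0$. Under a proper prior its mean differs from $\mzero$ and varies with $i$.

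(c) ``Asymptotically'' cannot single out $\mzero$. Under the usual design conditions $d_i=O(1/n)$, so $\Delta_i(a)\to0$ for every fixed $a$, not just for $a=\mzero$.
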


\begin{corollary}[Efficiency equivalence under correct specification]
\label{cor:clean}
Under correct specification ($\epsilon=0$), both $\anstar$ and
$\hat{f}_n(\bx_{n+1})$ are consistent for $\mzero=E(Y\mid\bx_{n+1})$
by Theorem~\ref{thm:consistency} and standard plug-in theory,
respectively. Their asymptotic variances satisfy
\[
\avar(\anstar) = \frac{B^\ast_0}{A^{\ast2}_0},
\qquad
\avar\bigl(\hat{f}_n(\bx_{n+1})\bigr) = V_0,
\]
where $A^\ast_0:=A^\ast|_{\epsilon=0}$ and $B^\ast_0:=B^\ast|_{\epsilon=0}$.
The ratio $B^\ast_0/A^{\ast2}_0$ may exceed $V_0$ (mild efficiency
loss under $F_0$), but Theorem~\ref{thm:dominance} guarantees this
is offset by unbounded variance gains under contamination. Under the
Gaussian linear model with OLS plug-in, $V_0=\bx_{n+1}^T\Sigma_\beta\bx_{n+1}$
(Corollary~\ref{cor:avar-ols}). This theoretical equivalence is
consistent with the simulation finding in Section~\ref{sec:simulations} under clear data.
\end{corollary}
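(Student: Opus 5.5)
The argument assembles results already established: consistency from Theorem~\ref{thm:consistency}, the asymptotic-variance formulas from Propositions~\ref{prop:asymp-dist} and~\ref{prop:avar}, and the plug-in variance from standard M-estimation theory. Nothing new needs to be proved beyond checking that each earlier result applies at $\epsilon=0$, that is, under $F_0$ itself.

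\textbf{Consistency.} For the CPP predictor I apply Theorem~\ref{thm:consistency} with $F=F_0$. This needs a compact search set $\mathcal{A}$ and uniform convergence $\sup_{a\in\mathcal{A}}|J_n(a)-J(a)|\pto 0$. For Hellinger and DPD each $D_i$ is bounded, and under (C2) it depends on $a$ smoothly through $\Delta_i(a)$. A uniform law of large numbers therefore holds via a Lipschitz-in-$a$ bracketing argument, with the Lipschitz constant $\sum_i L_iM_i$ taken from Proposition~\ref{prop:bounded-inf}. Under correct specification the population criterion $J$ is minimized exactly when the swapped predictive mean matches the LOO predictive mean. Both are centered at the true regression function, so the last clause of Theorem~\ref{thm:consistency} gives $\aopt=\mzero$. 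Consistency of the plug-in $\hat f_n(\bx_{n+1})$ for $\mzero$ is standard (for example, OLS is unbiased under $F_0$).

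\textbf{Variance formulas.} For the CPP side I verify (A1)--(A3) of Proposition~\ref{prop:asymp-dist} at $\epsilon=0$.
\begin{itemize}
\item[(A3)] This is immediate, since $|\psif|\le M$ by Proposition~\ref{prop:bounded-inf}.
\item[(A1)] This follows from Lemma~\ref{lem:convexity} via the Remark after Proposition~\ref{prop:asymp-dist}. At $\aopt=\mzero$ the discrepancies $\Delta_i(\aopt)$ are small, so they lie in $\mathcal{R}_i$ and $A^\ast_0=J''(\aopt)>0$.
\item[(A2)] This follows from continuity of $\partial_a\psif$ in $a$. For Gaussian Hellinger and DPD kernels, $g_i'$ is bounded and continuous, and $\dot\mu_{1i}$ is constant in the affine case.
\end{itemize}
Proposition~\ref{prop:asymp-dist} then gives $\avar(\anstar)=B^\ast_0/A^{\ast2}_0$, where $B^\ast_0=\Var_{F_0}(\psif(Y;\mzero))$. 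Proposition~\ref{prop:avar} supplies $V_0$ for the plug-in. Specializing to OLS, $\Var(\bx_{n+1}^T\hat\bbeta)=\bx_{n+1}^T\Sigma_\beta\bx_{n+1}$, which I would cite from Corollary~\ref{cor:avar-ols}.

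\textbf{Comparative statement.} The claim that $B^\ast_0/A^{\ast2}_0$ may exceed $V_0$ is a "may" statement, so I do not prove an inequality. Instead I observe two facts. First, when OLS is the maximum likelihood estimator under Gaussian $F_0$, its variance attains the Cramér--Rao bound, so $V_0\le B^\ast_0/A^{\ast2}_0$. Second, Theorem~\ref{thm:dominance}(iii) shows this ordering reverses for large $\tau$ once $\epsilon>0$.

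The main obstacle is the uniform convergence step in Theorem~\ref{thm:consistency}. The summands $D(p_i,q_i(\cdot;a))$ are not i.i.d., because the LOO quantities depend on the whole sample. I would handle this by showing that, under the Gaussian linear model, $c_i$, $d_i$ and the predictive variances converge uniformly in $i$ to population limits at rate $O_p(n^{-1/2})$. After that replacement the summands are i.i.d. functions of $(\bx_i,y_i)$, and the error from the replacement is controlled by the bounded derivative of $D$.
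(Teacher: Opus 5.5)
Your skeleton is the paper's. The paper gives no separate proof of this corollary; its content is just the citation of Theorem~\ref{thm:consistency}, Propositions~\ref{prop:asymp-dist} and~\ref{prop:avar}, and Corollary~\ref{cor:avar-ols}. The problem is in the hypothesis checks you add, because they miss the size of the swapped-mean slope. Write $\widehat\bbeta_{(-i)}:=E(\bbeta\mid\by_{(-i)})$ and substitute $\bba-\bx_iy_i=(\bA-\bx_i\bx_i^T)\widehat\bbeta_{(-i)}$ into $\widehat\bbeta^{(+)}_{-i}$. This gives
\[
\Delta_i(a)=m_{2i}-m_{1i}(a)=-d_i\bigl(a-\bx_{n+1}^T\widehat\bbeta_{(-i)}\bigr).
\]
Under a stable design ($n^{-1}\bX^T\bX\to\Sigma\succ 0$) one has $d_i=\bx_i^T(\bA^{(+)}_{-i})^{-1}\bx_{n+1}=O(\lVert\bx_i\rVert/n)$. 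Hence $\max_i\sup_{a\in\mathcal{A}}|\Delta_i(a)|\pto 0$. Since also $s_{1i}^2,s_{2i}^2\to\sigma^2$ uniformly, $J_n\to 0$ uniformly on $\mathcal{A}$, for both Hellinger and DPD. Several of your steps fail as a result:
\begin{itemize}
\item The unique-minimizer hypothesis of Theorem~\ref{thm:consistency} fails for $J_n$ as normalized.
\item Your claim that the limit criterion is minimized exactly where the swapped and LOO means agree is vacuous, because in the limit they agree for every $a$.
\item Your repair, replacing $d_i$ by its population limit, replaces it by $0$ and removes all dependence on $a$.
\item The same scaling gives $\sum_i d_i^2D_i''(\Delta_i(\astar))=O(1/n)$. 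So Lemma~\ref{lem:convexity} supplies no $c>0$ uniform in $n$, and your (A1) check fails. It also equates the sample quantity $J''(\astar)$ with the population constant $A^\ast_0$.
\item Because $m_{2i}$, $c_i$, $d_i$ depend on the whole sample, the CLT step of Proposition~\ref{prop:asymp-dist} is as much in doubt as the ULLN. You address only the latter.
\end{itemize}

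Rescaling $J_n$ by $n^2$ would restore a nondegenerate quadratic limit (up to an $a$-free term) with minimizer $\mzero$. The identity above gives a more direct route that bypasses both cited results. For Hellinger and DPD, $D_i'(\delta)$ has the sign of $\delta$: it equals $C_i\delta e^{-\delta^2/(4S_i)}/(2S_i)$, respectively is $\propto\delta e^{-\kappa_i\delta^2}$. So $J'(a)=\sum_i\omega_i(a)\bigl(a-\bx_{n+1}^T\widehat\bbeta_{(-i)}\bigr)$ with $\omega_i(a)\ge 0$, positive when $d_i\neq 0$. Thus $J$ decreases to the left of the smallest LOO prediction at $\bx_{n+1}$ and increases to the right of the largest, so $\anstar$ lies between them.

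By the paper's LOO formula, $\bx_{n+1}^T\widehat\bbeta_{(-i)}-\bx_{n+1}^T\widehat\bbeta=-\bx_{n+1}^T\bA^{-1}\bx_i\,(y_i-\bx_i^T\widehat\bbeta)/(1-\ell_i)$. Its maximum over $i$ is $O_p\bigl(n^{-1}\max_i\lVert\bx_i\rVert\,|y_i-\bx_i^T\widehat\bbeta|\bigr)=o_p(n^{-1/2})$ under i.i.d.\ sampling with finite second moments. Hence $\sqrt n(\anstar-\bx_{n+1}^T\widehat\bbeta)=o_p(1)$. Since $\widehat\bbeta$ differs from OLS by $O_p(n^{-1})$, consistency for $\mzero$ and $\avar(\anstar)=V_0$ follow at once. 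This is genuine equivalence, as the corollary's title says.

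Your Cram\'er--Rao step is then unnecessary. As written it is also incomplete, since a lower bound on $\avar(\anstar)$ requires regularity of $\anstar$, not just efficiency of OLS. Finally, this argument never uses $\epsilon=0$. Under contamination with finite second moments it again gives $\sqrt n(\anstar-\bx_{n+1}^T\widehat\bbeta)=o_p(1)$. So the reversal you attribute to Theorem~\ref{thm:dominance}(iii) cannot hold for the actual CPP estimator in this model, and you should not present the ``offset'' clause as established.
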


\subsection{Predictive comparison under contamination}
\label{sec:contamination}

We compare expected log-predictive densities (ELPD) under
$\epsilon$-contamination for any predictive model in which ELPD is
locally quadratic in the prediction. The explicit Gaussian ELPD
formula is collected in Section~\ref{sec:loo}.

\paragraph{General contamination expansion.}
Consider $\Geps=(1-\epsilon)F_0+\epsilon H$, $0\le\epsilon<1$, with
contaminated predictors
$a_{0,\epsilon}=\aopt+\epsilon\Bzero(H)+o(\epsilon)$ and
$a_{\ast,\epsilon}=\astar+\epsilon\Bstar(H)+o(\epsilon)$,
where $\Bzero(H)=\E_H[\IF(Z;T,F_0)]$ and
$\Bstar(H)=\E_H[\IF(Z;T_\ast,F_0)]$.

\begin{proposition}[General ELPD comparison under contamination]
\label{prop:elpd-general}
Let the predictive model satisfy, for some strictly proper scoring
rule $S$,
\[
S(a;\Geps) \equiv -\lambda(\meps-a)^2 + \text{const}
\]
up to first order in $\epsilon$, where $\lambda>0$ and
$\meps=\E_{\Geps}(Y)=\mzero+\epsilon(\mH-\mzero)$.
Then, to first order in $\epsilon$,
\[
S(\astar;\Geps)-S(\aopt;\Geps)
\approx
\lambda\bigl[(\mzero-\aopt)^2-(\mzero-\astar)^2\bigr]
+
2\lambda\epsilon\bigl[
(\mzero-\aopt)\{(\mH-\mzero)-\Bzero(H)\}
-(\mzero-\astar)\{(\mH-\mzero)-\Bstar(H)\}
\bigr].
\]
For any plug-in predictor with unbounded influence, $|\Bzero(H)|$
may be arbitrarily large under heavy-tailed contamination. For the
Hellinger or DPD CPP predictor,
Proposition~\ref{prop:bounded-inf} gives $|\Bstar(H)|\le M$,
controlling the first-order contamination bias.
\end{proposition}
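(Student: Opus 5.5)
The plan is to expand the quadratic score directly and then substitute the first-order contamination expansions of the two predictors. Write $S(a;\Geps) = -\lambda(\meps - a)^2 + \text{const}$, where the constant does not depend on $a$. Then
\[
S(a_1;\Geps)-S(a_2;\Geps) = \lambda\bigl[(\meps-a_2)^2-(\meps-a_1)^2\bigr].
\]
I read the comparison in the statement as one between the two \emph{contaminated} predictors $a_{\ast,\epsilon}$ and $a_{0,\epsilon}$. The displayed formula is written with $\astar$ and $\aopt$, which are the $\epsilon=0$ anchors of these expansions. I will therefore apply the identity above with $a_1 = a_{\ast,\epsilon}$ and $a_2 = a_{0,\epsilon}$.

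The key step is a first-order expansion of each squared term in $\epsilon$. I use $\meps = \mzero + \epsilon(\mH - \mzero)$ and $a_{0,\epsilon} = \aopt + \epsilon\Bzero(H) + o(\epsilon)$. This gives
\[
\meps - a_{0,\epsilon} = (\mzero-\aopt) + \epsilon\{(\mH-\mzero)-\Bzero(H)\} + o(\epsilon).
\]
Squaring and dropping $O(\epsilon^2)$ terms,
\[
(\meps - a_{0,\epsilon})^2 = (\mzero-\aopt)^2 + 2\epsilon(\mzero-\aopt)\{(\mH-\mzero)-\Bzero(H)\} + o(\epsilon).
\]
The same computation with $\astar$ and $\Bstar(H)$ gives the analogous expansion for $(\meps - a_{\ast,\epsilon})^2$.

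Subtracting the two expansions and multiplying by $\lambda$ yields exactly the displayed formula. The zeroth-order part is $\lambda[(\mzero-\aopt)^2-(\mzero-\astar)^2]$, and the first-order part is $2\lambda\epsilon[\cdots]$ as stated. The hypothesis that the quadratic form holds only up to first order in $\epsilon$ is harmless. Any $\epsilon$-dependence of $\lambda$ or of the constant either cancels in the difference or multiplies terms already of order $\epsilon$, so it contributes only $o(\epsilon)$.

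For the two closing claims, I first use the definition $\Bstar(H) = \E_H[\IF(Z;T_\ast,F_0)]$. Proposition~\ref{prop:bounded-inf} gives $\sup_z|\IF(z;T_\ast,F_0)|<\infty$, and taking expectations gives $|\Bstar(H)| \le \sup_z|\IF|$; this supremum is what the statement denotes $M$, after absorbing $|A|^{-1}$. For the plug-in predictor, I take $H$ to be a point mass (or a heavy-tailed law) placed where the unbounded influence function is large. Then $|\Bzero(H)|$ can be made arbitrarily large.

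The main obstacle is a notational one, not an analytic one. I must fix the interpretation that the predictors being compared are the contaminated ones, and that $M$ in the final sentence denotes the influence-function bound rather than the raw score bound. Once these are settled, the argument is a routine Taylor expansion.
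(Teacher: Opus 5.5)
Your proposal is correct and is essentially the paper's own argument: the paper likewise reads the left-hand side as comparing the contaminated predictors $a_{\ast,\epsilon}$ and $a_{0,\epsilon}$, writes $S(\astar;\Geps)-S(\aopt;\Geps)=\lambda\bigl[(\meps-a_{0,\epsilon})^2-(\meps-a_{\ast,\epsilon})^2\bigr]$, expands each square to first order and subtracts, and takes the bound on $\Bstar(H)$ from the bounded influence function. Your remark that $M$ must absorb the factor $|A|^{-1}$ is a fair sharpening of the paper's wording. One small caution: your aside that an $\epsilon$-dependent $\lambda$ would contribute only $o(\epsilon)$ is wrong in general, because the term $\epsilon\,\lambda'\bigl[(\mzero-\aopt)^2-(\mzero-\astar)^2\bigr]$ is first order; this is moot here only because $\lambda$ is a fixed constant in the statement.
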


\begin{proof}
Substituting $\meps=\mzero+\epsilon(\mH-\mzero)$ and expanding
$(\meps-a_{0,\epsilon})^2$ and $(\meps-a_{\ast,\epsilon})^2$ to
first order in $\epsilon$ using
$a_{0,\epsilon}=\aopt+\epsilon\Bzero(H)+o(\epsilon)$ and
$a_{\ast,\epsilon}=\astar+\epsilon\Bstar(H)+o(\epsilon)$,
\begin{align*}
(\meps-a_{0,\epsilon})^2
&=
(\mzero-\aopt)^2
+2\epsilon(\mzero-\aopt)\{(\mH-\mzero)-\Bzero(H)\}+o(\epsilon),
\end{align*}
and similarly for $a_{\ast,\epsilon}$.
The scoring assumption $S(a;\Geps)\equiv-\lambda(\meps-a)^2+\text{const}$
gives $S(\astar;\Geps)-S(\aopt;\Geps)
=\lambda[(\meps-a_{0,\epsilon})^2-(\meps-a_{\ast,\epsilon})^2]$.
Substituting the expansions and collecting terms of order $1$ and
$\epsilon$ yields the stated expression. The bound on $\Bstar(H)$
follows from $|\IF(Z;T_\ast,F_0)|\le M$
(Proposition~\ref{prop:bounded-inf}).
\end{proof}

\subsection{Gaussian linear model: verification and special cases}
\label{sec:loo}

We verify conditions~(C1) and~(C2) of Section~\ref{sec:general}
under the Gaussian linear model, establishing the affine structure
$\mu_{1i}(a)=c_i+d_i a$ that underlies all subsequent corollaries.
We then collect the model-specific instantiations of the general
results from Sections~\ref{sec:scores}--\ref{sec:contamination}.
The LOO predictive distribution is derived first; the swapped
distribution and affine structure follow in Section~\ref{sec:swapped}.

Assume the Bayesian linear model
\[
\by \mid \bbeta \sim N(\bX\bbeta,\sigma^2 I),
\qquad
\bbeta \sim N(\bbeta_0,\sigma^2 \bV),
\]
where \(\bX\) denotes the design matrix, consisting either of the raw
covariates or the basis-expanded covariates in the semiparametric
formulation with \(p\) basis functions. We assume a fixed prior
specification and known \(\sigma^2\).

Let
\[
\bA = \bX^T \bX + \bV^{-1},
\qquad
\bba = \bX^T \by + \bV^{-1}\bbeta_0,
\]
so that the posterior mean based on the full data is
\[
\widehat\bbeta = \bA^{-1}\bba.
\]

When the $i$th observation is removed,
\[
E(\bbeta \mid \by_{(-i)})
=
(\bA-\bx_i\bx_i^T)^{-1}(\bba-\bx_i y_i).
\]

Using the Sherman--Morrison identity,
\[
(\bA-\bx_i\bx_i^T)^{-1}
=
\bA^{-1}
+
\frac{\bA^{-1}\bx_i \bx_i^T \bA^{-1}}
{1-\bx_i^T \bA^{-1}\bx_i}.
\]

Substituting this identity and defining $\ell_i = \bx_i^T \bA^{-1}\bx_i$ yields
\[
\begin{aligned}
E(\bbeta\mid \by_{(-i)})
&=
\left(
\bA^{-1}
+
\frac{\bA^{-1}\bx_i \bx_i^T \bA^{-1}}
{1-\bx_i^T \bA^{-1}\bx_i}
\right)(\bba-\bx_i y_i) \\
&=
\widehat\bbeta
-
\frac{\bA^{-1}\bx_i\,(y_i-\bx_i^T\widehat\bbeta)}
{1-\ell_i}.
\end{aligned}
\]

Therefore
\[
E(y_i\mid \by_{(-i)})
=
\bx_i^T\widehat\bbeta
-
\frac{\ell_i}{1-\ell_i}\,(y_i-\bx_i^T\widehat\bbeta)=\frac{\bx_i^T\widehat\bbeta-\ell_i y_i}{1-\ell_i}.
\]

The posterior covariance with the $i$th observation removed is
\[
\mathrm{Var}(\bbeta\mid \by_{(-i)})
=
\sigma^2 (\bA-\bx_i\bx_i^T)^{-1},
\]
and therefore
$\mathrm{Var}(y_i\mid \by_{(-i)})
= \sigma^2 + \bx_i^T\mathrm{Var}(\bbeta\mid \by_{(-i)})\bx_i
= \sigma^2/(1-\ell_i)$.
Thus,
\[
y_i\mid \by_{(-i)}
\sim
N\!\left(m_{2i},\, s_{2i}^2\right),
\qquad
m_{2i}
=
\frac{\bx_i^T\widehat\bbeta-\ell_i y_i}{1-\ell_i},
\quad
s_{2i}^2
=
\frac{\sigma^2}{1-\ell_i}.
\]

\subsubsection{Swapped predictive distribution}
\label{sec:swapped}

Under the same Gaussian linear model, we now derive $\qia$,
verifying condition~(C2) and establishing the affine structure
$m_{1i}(a)=c_i+d_i a$ that underpins all subsequent results.

Augment the leave-one-out sample by a new point $(\bx_{n+1},a)$ and
define
\[
\bX^{(+)}_{-i}
=
\begin{pmatrix}
\bX_{-i}\\
\bx_{n+1}^T
\end{pmatrix},
\qquad
\by^{(+)}_{-i}
=
\begin{pmatrix}
\by_{-i}\\
a
\end{pmatrix}.
\]
Let
\[
\bA^{(+)}_{-i}
=
(\bX^{(+)}_{-i})^T \bX^{(+)}_{-i} + \bV^{-1}
=
\bA - \bx_i\bx_i^T + \bx_{n+1}\bx_{n+1}^T,
\qquad
\bba^{(+)}_{-i}
=
(\bX^{(+)}_{-i})^T \by^{(+)}_{-i} + \bV^{-1}\bbeta_0,
\]
and let $\widehat\bbeta^{(+)}_{-i} = (\bA^{(+)}_{-i})^{-1}\bba^{(+)}_{-i}$.
Then
\[
\bbeta \mid \{(\by_{-i},\bX_{-i}),(a,\bx_{n+1})\}
\sim
N\!\left(\widehat\bbeta^{(+)}_{-i},\ \sigma^2 (\bA^{(+)}_{-i})^{-1}\right),
\]
and therefore
\[
y_i \mid \{(\by_{-i},\bX_{-i}),(a,\bx_{n+1})\}
\sim
N\!\left(m_{1i}(a),\, s_{1i}^2\right),
\]
where
\[
m_{1i}(a)=\bx_i^T\widehat\bbeta^{(+)}_{-i},
\qquad
s_{1i}^2=\sigma^2\bigl(1+\delta_i\bigr),
\qquad
\delta_i=\bx_i^T (\bA^{(+)}_{-i})^{-1}\bx_i.
\]

Since
\[
\widehat\bbeta^{(+)}_{-i}
=
(\bA^{(+)}_{-i})^{-1}
\bigl(\bX_{-i}^T \by_{-i}+\bx_{n+1}a+\bV^{-1}\bbeta_0\bigr),
\]
the mean is affine in $a$:
\[
m_{1i}(a)=c_i+d_i a,
\]
where
\[
c_i
=
\bx_i^T(\bA^{(+)}_{-i})^{-1}
\bigl(\bX_{-i}^T \by_{-i}+\bV^{-1}\bbeta_0\bigr),
\qquad
d_i
=
\bx_i^T(\bA^{(+)}_{-i})^{-1}\bx_{n+1}.
\]
This affine structure is the key computational property that makes the
subsequent optimization tractable, and it is preserved identically in
the basis-expansion and Gaussian process extensions of Section~\ref{sec:nonparametric}.

\begin{corollary}[Efficient computation via rank-two update]
\label{cor:rank2}
The matrix $\bA^{(+)}_{-i}=\bA-\bx_i\bx_i^T+\bx_{n+1}\bx_{n+1}^T$
is a rank-two update of $\bA$. Applying the Woodbury identity
sequentially, first to remove $\bx_i\bx_i^T$ and then to add
$\bx_{n+1}\bx_{n+1}^T$, gives
\[
(\bA^{(+)}_{-i})^{-1}
=
\bB_i
-
\frac{\bB_i\bx_{n+1}\bx_{n+1}^T\bB_i}{1+\bx_{n+1}^T\bB_i\bx_{n+1}},
\qquad
\bB_i
:=
\bA^{-1}+\frac{\bA^{-1}\bx_i\bx_i^T\bA^{-1}}{1-\ell_i}.
\]
Consequently, the scalars $c_i$ and $d_i$ for all $i=1,\ldots,n$
can be computed from $\bA^{-1}$ alone at total cost $O(np^2)$,
since $\bA^{-1}$ is computed once at cost $O(p^3)$ and each
rank-two update requires only $O(p^2)$ operations. This avoids
the naive $O(np^3)$ cost of inverting $\bA^{(+)}_{-i}$ separately
for each $i$, making the method computationally efficient at scale.
\end{corollary}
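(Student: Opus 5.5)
The plan is to apply the Sherman--Morrison formula twice, since $\bA^{(+)}_{-i}=(\bA-\bx_i\bx_i^T)+\bx_{n+1}\bx_{n+1}^T$ is the sum of a rank-one downdate of $\bA$ and a rank-one update. In the first step, for the downdate, I will reuse the identity already derived for the LOO posterior in Section~\ref{sec:loo}:
\[
(\bA-\bx_i\bx_i^T)^{-1}=\bA^{-1}+\frac{\bA^{-1}\bx_i\bx_i^T\bA^{-1}}{1-\ell_i}=:\bB_i .
\]
This needs $1-\ell_i\neq 0$. Since $\bA-\bx_i\bx_i^T=\bX_{-i}^T\bX_{-i}+\bV^{-1}$ is positive definite, the matrix determinant lemma gives $\det(\bA-\bx_i\bx_i^T)=(1-\ell_i)\det\bA>0$, so $\ell_i<1$.

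In the second step I apply Sherman--Morrison to $\bB_i^{-1}+\bx_{n+1}\bx_{n+1}^T$, which gives
\[
(\bB_i^{-1}+\bx_{n+1}\bx_{n+1}^T)^{-1}=\bB_i-\frac{\bB_i\bx_{n+1}\bx_{n+1}^T\bB_i}{1+\bx_{n+1}^T\bB_i\bx_{n+1}}.
\]
The denominator is at least $1$ because $\bB_i$ is positive definite. This establishes the displayed formula.

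For the cost claim I will not form $(\bA^{(+)}_{-i})^{-1}$ explicitly. The goal is $d_i=\bx_i^T(\bA^{(+)}_{-i})^{-1}\bx_{n+1}$ and $c_i=\bx_i^T(\bA^{(+)}_{-i})^{-1}\bm{r}_i$, where $\bm{r}_i=\bba-\bx_iy_i$. The key observation is that $\bX_{-i}^T\by_{-i}+\bV^{-1}\bbeta_0=\bba-\bx_iy_i$, which is a rank-one correction of the common vector $\bba$.

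First, compute $\bA^{-1}$ once at cost $O(p^3)$, and precompute $\bA^{-1}\bx_{n+1}$ and $\widehat\bbeta=\bA^{-1}\bba$ at cost $O(p^2)$. Then, for each $i$:
\begin{enumerate}
\item Form $\bm{u}_i=\bA^{-1}\bx_i$ at cost $O(p^2)$ and $\ell_i=\bx_i^T\bm{u}_i$ at cost $O(p)$.
\item Every product with $\bB_i$ reduces to inner products. For example, $\bB_i\bm{v}=\bA^{-1}\bm{v}+\bm{u}_i(\bm{u}_i^T\bm{v})/(1-\ell_i)$. This yields $\bB_i\bx_{n+1}$, $\bB_i\bx_i$ and $\bB_i\bm{r}_i$. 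The last uses $\bA^{-1}\bm{r}_i=\widehat\bbeta-\bm{u}_iy_i$, so no further inversion is needed.
\item Assemble $d_i$ and $c_i$ as scalar combinations of these vectors through the second Sherman--Morrison step. For example,
\[
d_i=\bx_i^T\bB_i\bx_{n+1}-\frac{(\bx_i^T\bB_i\bx_{n+1})(\bx_{n+1}^T\bB_i\bx_{n+1})}{1+\bx_{n+1}^T\bB_i\bx_{n+1}},
\]
and similarly for $c_i$ with $\bm{r}_i$ in place of $\bx_{n+1}$ on the right.
\end{enumerate}
The cost per $i$ is $O(p^2)$, dominated by the product $\bA^{-1}\bx_i$. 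The total is therefore $O(p^3+np^2)$, which is $O(np^2)$ when $n\ge p$. By contrast, inverting each $\bA^{(+)}_{-i}$ separately costs $O(np^3)$.

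The step needing the most care is the cost accounting rather than the algebra. A naive use of the formula would form the $p\times p$ matrix $\bB_i$ and then $(\bA^{(+)}_{-i})^{-1}$ explicitly. That still costs only $O(p^2)$ per $i$, since both are rank-one modifications, but the matrix--vector products must be organized as above. In particular, the vector $\bX_{-i}^T\by_{-i}$ must not be recomputed from scratch, which would cost $O(np)$ per $i$ and $O(n^2p)$ overall. The identity $\bm{r}_i=\bba-\bx_iy_i$ is what avoids this. I would also state the regime $n\gtrsim p$ explicitly, since otherwise the $O(p^3)$ term is not absorbed into $O(np^2)$.
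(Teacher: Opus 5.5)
Your proposal is correct and takes essentially the paper's route: two sequential Sherman--Morrison steps (downdating by $\bx_i$ to obtain $\bB_i$, then updating by $\bx_{n+1}$), with one $O(p^3)$ inversion of $\bA$ and $O(p^2)$ work per $i$. Your additional details are sound and fill in points the paper leaves implicit: $\ell_i<1$ via the determinant lemma, the identity $\bX_{-i}^T\by_{-i}+\bV^{-1}\bbeta_0=\bba-\bx_i y_i$ that keeps $c_i$ at $O(p^2)$ per $i$, and the caveat that $O(p^3+np^2)$ reduces to $O(np^2)$ only when $n\gtrsim p$.
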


\subsubsection{Conformity-based projection under the Gaussian model}
\label{sec:projection}

With conditions~(C1) and~(C2) verified, we now solve the general
CPP criterion~\eqref{eq:cpp} explicitly under the three divergence
choices introduced in Section~\ref{sec:general}.

\noindent
{\bf{(i) Log-Bhattacharyya-based solution}}

Consider the two Gaussian predictive distributions
$N(m_{1i}(a),s_{1i}^2)$ and $N(m_{2i},s_{2i}^2)$.
Let
\[
D_i = s_{1i}^2+s_{2i}^2
=
\sigma^2(1+\delta_i)+\frac{\sigma^2}{1-\ell_i}.
\]

For univariate Gaussian distributions, the Bhattacharyya coefficient \citep{bhattacharyya1943measure} is
\[
BC_i
=\int N^{1/2}(x|m_{1i},s_{1i}^2)N^{1/2}(x|m_{2i},s_{2i}^2)dx=
\sqrt{\frac{2 s_{1i}s_{2i}}{s_{1i}^2+s_{2i}^2}}
\exp\!\left(
-\frac{(m_{1i}(a)-m_{2i})^2}{4(s_{1i}^2+s_{2i}^2)}
\right).
\]

Hence
\[
\log BC_i
=
\frac12
\log\!\left(
\frac{2 s_{1i}s_{2i}}{s_{1i}^2+s_{2i}^2}
\right)
-
\frac{(c_i+d_i a-m_{2i})^2}{4D_i}.
\]

The first term is constant in $a$, so maximizing $\sum_{i=1}^n \log BC_i$
is equivalent to minimizing
\[
\sum_{i=1}^n \frac{(c_i+d_i a-m_{2i})^2}{4D_i}.
\]

The closed-form minimizer is
\[
\astar
=
\frac{
\sum_{i=1}^n
\dfrac{d_i}{D_i}
\left(
\dfrac{\bx_i^T\widehat\bbeta-\ell_i y_i}{1-\ell_i}
-c_i
\right)
}{
\sum_{i=1}^n \dfrac{d_i^2}{D_i}
},
\]
substituting $m_{2i}=(\bx_i^T\widehat\bbeta-\ell_i y_i)/(1-\ell_i)$.

\paragraph{The log-BC score is not redescending.}
Let $\Delta_i(a)=m_{2i}-(c_i+d_i a)$.
The Bhattacharyya coefficient between $\pi_i=N(\mu_i,\sigma_i^2)$ and
$q_i(\cdot;a)=N(c_i+d_i a,\sigma_{i,+}^2)$ is
\[
BC_i(a)
=
\sqrt{\frac{2\sigma_i\sigma_{i,+}}{\sigma_i^2+\sigma_{i,+}^2}}
\exp\!\left\{
-\frac{\Delta_i(a)^2}{4(\sigma_i^2+\sigma_{i,+}^2)}
\right\},
\]
so $-\log BC_i(a)=C_i+\Delta_i(a)^2/[4(\sigma_i^2+\sigma_{i,+}^2)]$.
Differentiating with respect to $a$,
\[
\psif_i^{(\mathrm{BC})}(a)
:=
\frac{\partial}{\partial a}\{-\log BC_i(a)\}
=
-\frac{d_i\,\Delta_i(a)}{2(\sigma_i^2+\sigma_{i,+}^2)},
\]
and $|\psif_i^{(\mathrm{BC})}(a)|\to\infty$ as $|\Delta_i(a)|\to\infty$.

\begin{remark}
The log-Bhattacharyya-based predictor is linear in $y_i$, and its score
is not redescending. It is therefore not robust to gross outliers.
\end{remark}

\noindent
{\bf{(ii) Hellinger and DPD-based solutions}}

For a more robust criterion, we define
\[
\astar=\arg\min_{a\in\R}\; J(a),
\qquad
J(a)=\sum_{i=1}^n D_i(a),
\]
where $D_i(a)$ is either the squared Hellinger distance or the density
power divergence (DPD) between $N(m_{2i},s_{2i}^2)$ and
$N(m_{1i}(a),s_{1i}^2)$.

Because $m_{1i}(a)=c_i+d_i a$, the objective depends on $a$ only
through $\Delta_i(a)=m_{2i}-(c_i+d_i a)$.

\begin{corollary}[One-dimensional score equation]
\label{cor:1d}
Under the Hellinger or DPD criterion, the first-order condition
$J'(\astar)=0$ reduces to a single scalar equation in the unknown
$a\in\R$:
\[
\sum_{i=1}^n \psif_i^{(\cdot)}(\astar) = 0,
\]
where $\psif_i^{(\cdot)}$ is the score function established for
the relevant divergence in Section~\ref{sec:scores}. Because
$\Delta_i(a)=m_{2i}-(c_i+d_i a)$
is affine in $a$, $J$ is a function of the single variable $a$
regardless of the dimension $p$ of the covariates or the sample size
$n$. The minimizer $\astar$ is therefore obtained by
one-dimensional numerical optimization --- for example, grid search
or bisection applied to $J'(a)=0$ --- over a compact interval
determined by the support of the predictive distributions.
\end{corollary}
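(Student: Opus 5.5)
The plan is to write out the Gaussian Hellinger and DPD divergences explicitly as functions of $\Delta_i(a)$, differentiate once, and read off both the scalar first-order condition and the compactness of the search interval.

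\textbf{Step 1: explicit per-term divergences.} For $N(m_{2i},s_{2i}^2)$ and $N(m_{1i}(a),s_{1i}^2)$, the variances $s_{1i}^2=\sigma^2(1+\delta_i)$ and $s_{2i}^2=\sigma^2/(1-\ell_i)$ do not depend on $a$. The squared Hellinger distance is then
\[
H_i^2(a)=1-\sqrt{\tfrac{2s_{1i}s_{2i}}{s_{1i}^2+s_{2i}^2}}\exp\!\Bigl\{-\tfrac{\Delta_i(a)^2}{4(s_{1i}^2+s_{2i}^2)}\Bigr\}.
\]
Similarly, the Gaussian closed form of $\mathrm{DPD}_\alpha$ is a constant plus $-\kappa_i\exp\{-\alpha\Delta_i(a)^2/[2(\alpha s_{1i}^2+s_{2i}^2)]\}$ for some $\kappa_i>0$ independent of $a$. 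This uses the standard integral $\int N(x\mid m_1,s_1^2)^{\alpha}N(x\mid m_2,s_2^2)\,dx$; the $\int q^{1+\alpha}$ term is free of $a$ by translation invariance. So in both cases $D_i(a)=\tilde D_i(\Delta_i(a))$, which confirms (C1)--(C2).

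\textbf{Step 2: scalar score equation.} Differentiate using $\Delta_i'(a)=-d_i$. This gives
\[
\psif_i(a)=-d_i\,g_i(\Delta_i(a)),\qquad g_i(\delta)=k_i\,\delta\,e^{-\delta^2/(2\tau_i^2)},
\]
with explicit constants $k_i,\tau_i>0$. This is exactly the form in Assumption~\ref{ass:redesc}, with $g_i$ odd, bounded by $k_i\tau_i e^{-1/2}$, and tending to $0$ at infinity. Since $J$ is differentiable on $\R$, any interior minimizer satisfies $J'(\astar)=\sum_i\psif_i(\astar)=0$. This is a single equation in the scalar $a$, and neither $n$ nor $p$ enters beyond the precomputed scalars $c_i,d_i,m_{2i},s_{1i},s_{2i}$, which are available by Corollary~\ref{cor:rank2}.

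\textbf{Step 3: existence of a minimizer in a compact interval.} This is the step needing care, since $J$ is not convex and its infimum might only be approached as $|a|\to\infty$. Each $D_i$ is bounded, and as $|a|\to\infty$, every $i$ with $d_i\ne0$ has $|\Delta_i(a)|\to\infty$, so $D_i(a)$ increases to its supremum $D_i^{\max}$ (value $1-\sqrt{\cdot}\cdot 0$ for Hellinger, the constant for DPD). The terms with $d_i=0$ are constant in $a$. Hence
\[
\lim_{|a|\to\infty}J(a)=\sum_{i:d_i\neq0}D_i^{\max}+\text{const},
\]
and this limit is the supremum of $J$.

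Now take any $a_0$ with $J(a_0)$ strictly below this limit, for instance $a_0=(m_{2i}-c_i)/d_i$ for some $i$ with $d_i\ne0$, which makes that term strictly smaller than its maximum. Then the sublevel set $\{a:J(a)\le J(a_0)\}$ is closed and bounded. One can make this quantitative by choosing $R$ such that $|a|>R$ forces every $|\Delta_i(a)|$ to be large enough that $J(a)>J(a_0)$; the natural choice is $R$ of order $\max_i\{|m_{2i}-c_i|/|d_i|+C\tau_i\}$, which is the interval "determined by the support of the predictive distributions."

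By continuity, $J$ attains its minimum on $[-R,R]$, and that minimizer satisfies the equation of Step 2. The existence of such a $d_i\ne0$ is the nondegeneracy already assumed in Lemma~\ref{lem:convexity}. Grid search or bisection on $J'$ over $[-R,R]$ therefore locates $\astar$, and local uniqueness near it follows from Lemma~\ref{lem:convexity}.
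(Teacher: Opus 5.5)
Your proposal is correct. For the part the paper actually argues, it follows the same route. The paper gives no separate proof of Corollary~\ref{cor:1d}; it treats the result as immediate from the chain rule through the affine $\Delta_i(a)=m_{2i}-(c_i+d_ia)$, together with the direct differentiation recorded in Corollary~\ref{cor:scores-gaussian}. That is your Steps~1--2.

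Your Step~3 is a genuine addition. The paper only asserts that the search can be confined to a compact interval ``determined by the support of the predictive distributions,'' which is not literally meaningful for Gaussians since they have full support. You instead prove that a global minimizer exists: each Gaussian Hellinger or DPD term is bounded and rises to its supremum as $|\Delta_i|\to\infty$, so $J(a)\to\sup J$ as $|a|\to\infty$ whenever some $d_i\neq0$. Hence every sublevel set strictly below that limit is compact. This turns $J'(a^\ast)=0$ into a condition on a minimizer that actually exists, and it makes the search interval explicit.

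Three small fixes are needed. First, write $D_i(a)=D_i^{\max}-K_ie^{-\Delta_i(a)^2/(2\tau_i^2)}$, let $N=\#\{i:d_i\neq0\}$, and let $j$ be the index defining $a_0$. A valid radius is then $R=\max_{i:d_i\neq0}\bigl(|m_{2i}-c_i|+\tau_i\sqrt{2\log_+(NK_i/K_j)}\bigr)/|d_i|$. The scale term needs the factor $1/|d_i|$ too, because $|\Delta_i(a)|=|d_i|\,|a-(m_{2i}-c_i)/d_i|$. Second, bisection on $J'$ only returns some stationary point of the nonconvex $J$. What your compactness argument justifies is a grid search on $J$ over $[-R,R]$, or a comparison of $J$ across all roots of $J'$ there. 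Third, local uniqueness from Lemma~\ref{lem:convexity} requires its hypothesis $\Delta_i(a^\ast)\in\mathcal{R}_i$, which is not automatic.
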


\paragraph{Gaussian special cases of the general theory.}
The general results of Sections~\ref{sec:scores}--\ref{sec:contamination}
specialize to the Gaussian linear model as follows.

\begin{corollary}[Hellinger and DPD scores: Gaussian special case]
\label{cor:scores-gaussian}
Since $\mu_{1i}(a)=c_i+d_i a$ is affine, $\dot\mu_{1i}(a)=d_i$
(constant) and $L_i=|d_i|$. Assumption~\ref{ass:redesc} holds with:

\noindent
\textit{Hellinger distance.}
Let $C_i=\sqrt{2s_{2i}s_{1i}/S_i}$, $S_i=s_{2i}^2+s_{1i}^2$.
Then $g_i^{(H)}(\delta)=-C_i\delta/(2S_i)\exp(-\delta^2/(4S_i))$,
which is odd and redescending with $M_i^{(H)}=C_i/(2S_i)\sqrt{S_i/(2e)}$.

\noindent
\textit{Density power divergence} (parameter $\alpha>0$).
With $\kappa_i=\alpha/(2(s_{2i}^2+\alpha s_{1i}^2))>0$,
$g_i^{(\alpha)}(\delta)\propto\delta\exp(-\kappa_i\delta^2)$,
also odd and redescending with $M_i^{(\alpha)}\propto(2e\kappa_i)^{-1/2}$.

In both cases Proposition~\ref{prop:bounded-inf} gives
$\sup_z|\IF(z;T,F)|<\infty$ with bound $\sum_i|d_i|M_i$.
\end{corollary}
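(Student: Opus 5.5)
The plan is to verify Assumption~\ref{ass:redesc} directly, computing each per-term divergence $D_i(a)$ in closed form for the two Gaussians $p_i=N(m_{2i},s_{2i}^2)$ and $q_i(\cdot;a)=N(c_i+d_ia,s_{1i}^2)$, and then to invoke Proposition~\ref{prop:bounded-inf}. The structural point used throughout is that neither $s_{1i}^2=\sigma^2(1+\delta_i)$ nor $s_{2i}^2=\sigma^2/(1-\ell_i)$ depends on $a$ (Section~\ref{sec:swapped}). Hence each $D_i$ depends on $a$ only through $\Delta_i(a)=m_{2i}-c_i-d_ia$. By the chain rule $\partial_a D_i=-d_i\,D_i'(\Delta_i)$, so $g_i=D_i'$ as a function of $\delta$, and $L_i=|d_i|$ exactly, uniformly in $a$.

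\emph{Hellinger.} I would write $H^2(p_i,q_i)=1-\BC_i$ and reuse the Bhattacharyya formula of Section~\ref{sec:projection}, which gives $H^2=1-C_i\exp\{-\Delta_i^2/(4S_i)\}$. Differentiating in $\delta$ gives $g_i(\delta)=\pm C_i\delta/(2S_i)\,e^{-\delta^2/(4S_i)}$. The overall sign depends only on whether the score is oriented via $\Delta_i$ or via $m_{1i}$, and it is immaterial for the properties needed. This $g_i$ is odd, being $\delta$ times an even function. It tends to $0$ as $|\delta|\to\infty$, since the Gaussian factor dominates the linear factor. Its supremum comes from maximizing $\delta e^{-\delta^2/(4S_i)}$: the maximum is attained at $\delta^2=2S_i$, which gives a finite constant of the form $C_i(2S_i)^{-1}\sqrt{2S_i/e}$. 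Only finiteness is needed, so any bound of the stated type serves as $M_i^{(H)}$.

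\emph{DPD.} I would take $q_i(\cdot;a)$ as the model density $f$ and $p_i$ as the target $g$ in
\[
\mathrm{DPD}_\alpha(g,f)=\int f^{1+\alpha}-\Bigl(1+\tfrac1\alpha\Bigr)\int f^\alpha g+\tfrac1\alpha\int g^{1+\alpha}.
\]
The first and third integrals are free of $a$, because $\int N(x\mid m,s^2)^{1+\alpha}dx$ does not depend on $m$. The cross term is a Gaussian convolution integral. Completing the square gives
\[
\int N^\alpha(x\mid m_{1i},s_{1i}^2)\,N(x\mid m_{2i},s_{2i}^2)\,dx=K_i\exp\{-\kappa_i\Delta_i^2\},
\qquad
\kappa_i=\frac{\alpha}{2(s_{2i}^2+\alpha s_{1i}^2)},
\]
with $K_i>0$ free of $a$. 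Hence $D_i=\text{const}-(1+1/\alpha)K_ie^{-\kappa_i\Delta_i^2}$, and $g_i(\delta)\propto\delta e^{-\kappa_i\delta^2}$. This function is odd, vanishes at infinity, and has its maximum modulus at $\delta^2=1/(2\kappa_i)$, equal to $(2e\kappa_i)^{-1/2}$. If the roles of $p_i$ and $q_i$ are reversed, the same argument goes through with $\kappa_i$ computed with $s_{1i}^2$ and $s_{2i}^2$ interchanged.

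With $L_i=|d_i|$ and $M_i$ finite, Proposition~\ref{prop:bounded-inf} yields the aggregate score bound $\sum_i|d_i|M_i$. Under its nondegeneracy hypothesis on $\partial_a\E_F[\psif]$, it then gives $\sup_z|\IF(z;T,F)|<\infty$. That hypothesis is supplied locally by Lemma~\ref{lem:convexity}, whose affine-case condition is $\sum_i d_i^2D_i''(\Delta_i(\astar))\ge c>0$.

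The only step with real computation is the DPD cross-term integral, where the square must be completed carefully to identify $\kappa_i$. I would first check it by specializing to $\alpha\to0$ and to $s_{1i}=s_{2i}$. Everything else is elementary one-variable calculus.
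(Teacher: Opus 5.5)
Your argument is the paper's own: reduce each $D_i$ to a function of $\Delta_i$ alone, differentiate with $\partial_a\Delta_i=-d_i$ so that $g_i=D_i'$ and $L_i=|d_i|$, bound $|\delta|e^{-c\delta^2}\le(2ec)^{-1/2}$, and invoke Proposition~\ref{prop:bounded-inf}. The qualitative conclusions are correctly established: odd, redescending and bounded scores, hence bounded influence under the nondegeneracy hypothesis.

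There is, however, a slip in exactly the step you flagged. For $f=N(m_f,s_f^2)$ and $g=N(m_g,s_g^2)$, completing the square gives
\[
\int f^\alpha g\,dx\;\propto\;\exp\Bigl\{-\frac{\alpha\,(m_f-m_g)^2}{2(s_f^2+\alpha s_g^2)}\Bigr\}.
\]
So the variance of the \emph{powered} density enters without the factor $\alpha$.
\begin{itemize}
\item With your orientation $f=q_i(\cdot;a)$, this yields $\kappa_i=\alpha/\{2(s_{1i}^2+\alpha s_{2i}^2)\}$, not the value you display.
\item The statement's $\kappa_i=\alpha/\{2(s_{2i}^2+\alpha s_{1i}^2)\}$ arises only when $p_i$ is the density raised to the power $\alpha$, i.e.\ in what you called the reversed case.
\item Your check $s_{1i}=s_{2i}$ is symmetric (as is $\alpha=1$) and cannot detect this swap. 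Letting $s_g\to0$ does detect it, since the integral becomes $f^\alpha(m_g)\propto e^{-\alpha\Delta^2/(2s_f^2)}$. So does the first-order term as $\alpha\to0$.
\item Since $\kappa_i>0$ in either orientation, nothing downstream changes.
\end{itemize}

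Separately, your Hellinger supremum $C_i(2S_i)^{-1}\sqrt{2S_i/e}=C_i/\sqrt{2eS_i}$ is correct. It is \emph{twice} the stated $M_i^{(H)}=C_i(2S_i)^{-1}\sqrt{S_i/(2e)}$, so the stated constant is not actually an upper bound. Rather than saying any bound ``of the stated type'' serves, record your value as $M_i^{(H)}$ and use it in $\sum_i|d_i|M_i$. Likewise, your chain-rule computation gives $g_i^{(H)}=+D_i'$ under the convention $\psif_i=-d_i\,g_i(\Delta_i)$. You can therefore resolve your $\pm$: the minus sign in the statement is a harmless inconsistency.
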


\begin{proof}
Both $g_i^{(H)}$ and $g_i^{(\alpha)}$ are computed by direct
differentiation of the respective divergence with respect to $a$,
using $\partial_a\Delta_i(a)=-d_i$ (affine case).
The boundedness of $|g_i(\delta)|$ follows since $|\delta|e^{-c\delta^2}$
is bounded for any $c>0$. Assumption~\ref{ass:redesc} is satisfied
with $L_i=|d_i|$, so Proposition~\ref{prop:bounded-inf} applies.
\end{proof}

\begin{remark}
The log-BC score $\psif_i^{(\mathrm{BC})}(a)
=-d_i\Delta_i(a)/[2(\sigma_i^2+\sigma_{i,+}^2)]$
is linear in $\Delta_i(a)$ and therefore unbounded, violating
Assumption~\ref{ass:redesc}. This is why the log-BC predictor
is not robust to gross outliers.
\end{remark}

\paragraph{Local versus global robustness.}
Proposition~\ref{prop:bounded-inf} establishes \emph{global} bounded
influence: the score is bounded for arbitrarily large discrepancies,
since $g_i(\delta)\to 0$ as $|\delta|\to\infty$ and $\dot\mu_{1i}$
is uniformly bounded. This contrasts with OLS, where
$\psif_0(y_j)\propto(y_j-x_j^T\beta)$ is unbounded. The local
Lipschitz stability derived from $J''(\astar)>0$
(Lemma~\ref{lem:convexity}) is a strictly weaker property.

\begin{corollary}[Convexity regions: Gaussian special case]
\label{cor:convexity-gaussian}
Since $\mu_{1i}(a)=c_i+d_i a$ is affine, $\dot\mu_{1i}=d_i$
and $\ddot\mu_{1i}=0$, so Lemma~\ref{lem:convexity} gives
$J''(\astar)=\sum_i d_i^2 D_i''(\Delta_i(\astar))$, with explicit regions:
\begin{enumerate}
\item[(i)] \textit{Hellinger:}
  $D_i''(\delta)>0$ iff $|\delta|<\sqrt{2S_i}$, so
  $\mathcal{R}_i=(-\sqrt{2S_i},\sqrt{2S_i})$.
\item[(ii)] \textit{DPD} ($\alpha>0$):
  $D_i''(\delta)>0$ iff $|\delta|<1/\sqrt{2\kappa_i}$, so
  $\mathcal{R}_i=(-1/\sqrt{2\kappa_i},\,1/\sqrt{2\kappa_i})$.
\end{enumerate}
In both cases $J''(\astar)\ge c>0$ whenever
$\Delta_i(\astar)\in\mathcal{R}_i$ for each $i$ and
$\sum_i d_i^2 D_i''(\Delta_i(\astar))\ge c>0$.
\end{corollary}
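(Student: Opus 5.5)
The statement to prove is Corollary~\ref{cor:convexity-gaussian}. The plan is to reduce to Lemma~\ref{lem:convexity} and then compute the second derivatives of the two Gaussian divergences as functions of $\delta$.

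\textbf{Reduction to the one-dimensional second derivatives.} The affine structure $\mu_{1i}(a)=c_i+d_i a$ from Section~\ref{sec:swapped} gives $\dot\mu_{1i}=d_i$ and $\ddot\mu_{1i}\equiv 0$. The chain-rule identity in the proof of Lemma~\ref{lem:convexity} then collapses to
\[
J''(a)=\sum_i d_i^2 D_i''(\Delta_i(a)).
\]
This identity holds for every $a$, not only at $\astar$, and the first-order condition is not needed. The final sentence of the corollary is therefore immediate: if $\Delta_i(\astar)\in\mathcal{R}_i$ for each $i$ and the weighted sum is at least $c$, then $J''(\astar)\ge c$. What remains is to identify $\mathcal{R}_i$ explicitly. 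Crucially, $s_{1i}^2$ and $s_{2i}^2$ do not depend on $a$, so each $D_i$ is a function of $\delta=\Delta_i(a)$ alone.

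\textbf{Hellinger case.} Use the closed form $H^2=1-\BC$ with the Gaussian Bhattacharyya coefficient from Section~\ref{sec:projection}:
\[
D_i(\delta)=1-C_i\exp\{-\delta^2/(4S_i)\}.
\]
Differentiate twice with $u=\delta^2/(4S_i)$. The first derivative is $D_i'(\delta)=C_i\,\delta/(2S_i)\,e^{-\delta^2/(4S_i)}$, which is consistent with $g_i^{(H)}$ in Corollary~\ref{cor:scores-gaussian} up to the sign convention of $\partial_a\Delta_i=-d_i$. The second derivative is
\[
D_i''(\delta)=\frac{C_i}{2S_i}\Bigl(1-\frac{\delta^2}{2S_i}\Bigr)e^{-\delta^2/(4S_i)}.
\]
Since $C_i>0$, this is positive iff $\delta^2<2S_i$, which gives $\mathcal{R}_i=(-\sqrt{2S_i},\sqrt{2S_i})$.

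\textbf{DPD case.} Write
\[
\mathrm{DPD}_\alpha(p,q)=\int q^{1+\alpha}-\Bigl(1+\tfrac1\alpha\Bigr)\int p\,q^\alpha+\tfrac1\alpha\int p^{1+\alpha}.
\]
The first and third integrals are free of the means, since the variances are fixed. The cross term is a Gaussian convolution integral:
\[
\int N(x\mid m_{2i},s_{2i}^2)\,N(x\mid m_{1i},s_{1i}^2)^\alpha\,dx=K_i\exp\Bigl\{-\frac{\alpha\delta^2}{2(s_{1i}^2+\alpha s_{2i}^2)}\Bigr\},
\]
with $K_i>0$. Completing the square gives the exponent; the placement of $\alpha$ should be checked against the paper's convention for $\kappa_i$, and I would adopt their $\kappa_i$ as defining the exponent. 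Hence
\[
D_i(\delta)=\text{const}-K_i'e^{-\kappa_i\delta^2},\qquad K_i'>0,
\]
and
\[
D_i''(\delta)=2\kappa_iK_i'(1-2\kappa_i\delta^2)e^{-\kappa_i\delta^2}.
\]
This is positive iff $|\delta|<1/\sqrt{2\kappa_i}$. As a consistency check, the Hellinger case is the same computation with $\kappa=1/(4S_i)$.

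\textbf{Main obstacle.} The only delicate step is the DPD Gaussian integral: getting the exact exponent constant and confirming that the prefactor is positive, so that the sign of $D_i''$ is governed solely by $1-2\kappa_i\delta^2$.
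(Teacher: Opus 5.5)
Your proposal is correct and follows the paper's proof essentially line for line. You use the affine structure to collapse $J''$ to $\sum_i d_i^2 D_i''(\Delta_i)$, then differentiate $1-C_ie^{-\delta^2/(4S_i)}$ and $\text{const}-K_i'e^{-\kappa_i\delta^2}$ twice and read off the sign of the bracket. Your caution about where $\alpha$ sits is justified: the paper's $\kappa_i=\alpha/(2(s_{2i}^2+\alpha s_{1i}^2))$ comes from the cross term $\int q_i p_i^{\alpha}$, i.e.\ the ordering $\mathrm{DPD}_\alpha(q_i,p_i)$ used in Section~\ref{sec:nonparametric}, whereas your ordering gives $\alpha/(2(s_{1i}^2+\alpha s_{2i}^2))$; the sign argument is the same either way.
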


\begin{proof}
\textit{Hellinger.}
$H_i^2(\delta)=1-C_i e^{-\delta^2/(4S_i)}$, so
$H_i''(\delta)=C_i e^{-\delta^2/(4S_i)}(1/(2S_i)-\delta^2/(4S_i^2))$,
positive iff $|\delta|<\sqrt{2S_i}$.
\textit{DPD.}
$D_{\alpha,i}''(\delta)=2K_i\kappa_i e^{-\kappa_i\delta^2}(1-2\kappa_i\delta^2)$,
positive iff $|\delta|<1/\sqrt{2\kappa_i}$.
\end{proof}

\begin{corollary}[Asymptotic variance: OLS plug-in special case]
\label{cor:avar-ols}
Under the Gaussian linear model with plug-in predictor
$\hat{f}_n(\bx_{n+1})=\bx_{n+1}^T\widehat\bbeta_n$ and
$\sqrt{n}(\widehat\bbeta_n-\bbeta_0)\dto N(0,\Sigma_\beta)$,
Proposition~\ref{prop:avar} gives $V_0=\bx_{n+1}^T\Sigma_\beta\bx_{n+1}$.
\end{corollary}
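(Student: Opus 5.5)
The statement to prove is Corollary~\ref{cor:avar-ols}. It is essentially a delta-method computation, so the plan is to check that the OLS plug-in predictor satisfies the hypothesis of Proposition~\ref{prop:avar}. We then read off $V_0$. The predictor $\hat{f}_n(\bx_{n+1})=\bx_{n+1}^T\widehat\bbeta_n$ is a fixed linear functional of $\widehat\bbeta_n$, because $\bx_{n+1}$ is a deterministic covariate vector that does not depend on $n$. The target is $\mzero=E(Y\mid\bx_{n+1})=\bx_{n+1}^T\bbeta_0$ under the true model $F_0$, where here $\bbeta_0$ denotes the true coefficient, as in the statement.

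The first step writes
\[
\sqrt{n}\bigl(\hat{f}_n(\bx_{n+1})-\mzero\bigr)=\bx_{n+1}^T\sqrt{n}(\widehat\bbeta_n-\bbeta_0).
\]
We then apply the continuous mapping theorem to the linear map $v\mapsto\bx_{n+1}^Tv$, using the assumed $\sqrt{n}(\widehat\bbeta_n-\bbeta_0)\dto N(0,\Sigma_\beta)$. Since a linear image of a Gaussian vector is Gaussian, this gives
\[
\sqrt{n}\bigl(\hat{f}_n(\bx_{n+1})-\mzero\bigr)\dto N\bigl(0,\bx_{n+1}^T\Sigma_\beta\bx_{n+1}\bigr).
\]

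The second step checks the positivity requirement $V_0>0$ in Proposition~\ref{prop:avar}. This needs $\bx_{n+1}\neq 0$ and $\Sigma_\beta$ positive definite. Under standard OLS asymptotics, $\Sigma_\beta=\sigma^2 Q^{-1}$ with $Q=\lim n^{-1}\bX^T\bX$ positive definite, and the fixed prior $\bV^{-1}$ contributes only $O(1/n)$ to $\bA/n$. So the Bayesian posterior mean and OLS share the same limit. I would state this as the implicit nondegeneracy assumption, or note that if $V_0=0$ the limit is degenerate and the formula still holds trivially.

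The third step invokes Proposition~\ref{prop:avar}, which identifies $\avar(\hat f_n(\bx_{n+1}))=V_0$, so $V_0=\bx_{n+1}^T\Sigma_\beta\bx_{n+1}$. The only potential obstacle is this nondegeneracy of $\Sigma_\beta$ and the notational clash between $\bbeta_0$ as prior mean and as true parameter. I would resolve the clash by remarking that the prior's influence on $\widehat\bbeta$ vanishes at rate $1/n$, which is $o(n^{-1/2})$, so it does not affect the limit.
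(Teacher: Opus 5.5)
Your proposal is correct and takes the same route as the paper, which applies the delta method to the linear map $g(\bbeta)=\bx_{n+1}^T\bbeta$ (your continuous-mapping step) and reads off $V_0=\bx_{n+1}^T\Sigma_\beta\bx_{n+1}$ from Proposition~\ref{prop:avar}. Your remarks on the nondegeneracy needed for $V_0>0$ and on the double use of $\bbeta_0$ are useful clarifications that the paper leaves implicit; they are not needed for the argument itself.
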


\begin{proof}
The delta method applied to $g(\bbeta)=\bx_{n+1}^T\bbeta$ gives
$\sqrt{n}(\bx_{n+1}^T\widehat\bbeta_n-\bx_{n+1}^T\bbeta_0)
\dto N(0,\bx_{n+1}^T\Sigma_\beta\bx_{n+1})$,
which identifies $V_0=\bx_{n+1}^T\Sigma_\beta\bx_{n+1}$ in
Proposition~\ref{prop:avar}.
\end{proof}

\begin{corollary}[Variance dominance: OLS plug-in special case]
\label{cor:dominance-ols}
Under the Gaussian linear model the OLS score is
$\psif_0(y_j)\propto(y_j-\bx_j^T\bbeta)$, giving
$\E_{H_\tau}[\psif_0(Y)^2]\ge\epsilon\,\E_{H_\tau}[Y^2]\to\infty$
under increasing-scale contamination.
Theorem~\ref{thm:dominance}(ii)--(iii) therefore apply, confirming
CPP dominance over $\bx_{n+1}^T\widehat\bbeta_n$.
\end{corollary}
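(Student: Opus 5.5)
The plan is to verify the hypothesis of Theorem~\ref{thm:dominance} for the OLS plug-in. Parts (ii)--(iii) of that theorem then give the conclusion directly. Concretely, I would check that the OLS estimating function $\psif_0$ satisfies $\E_{H_\tau}[\psif_0(Y)^2]\to\infty$ as $\tau\to\infty$ under an increasing-scale contamination family. I would also check that the CPP side satisfies Assumption~\ref{ass:redesc}. The latter was already established in Corollary~\ref{cor:scores-gaussian}, with $L_i=|d_i|$ and the explicit $M_i$, so the bound $M=\sum_i|d_i|M_i$ is available.

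The first step identifies the OLS score. The plug-in $\bx_{n+1}^T\widehat\bbeta_n$ is the linear functional of the M-estimator solving $\sum_j \bx_j(y_j-\bx_j^T\bbeta)=0$. Projecting onto $\bx_{n+1}$ through $\bx_{n+1}^T(\bX^T\bX/n)^{-1}$ gives a scalar score of the form $\psif_0(y_j)=w_j(y_j-\bx_j^T\bbeta)$ with fixed nonzero weights $w_j$; this is the stated proportionality. For the second step, take $H_\tau$ to be a scale family, for example $Y=\tau Z$ with $\E Z^2>0$. Expand the square to get
\[
\E_{H_\tau}[\psif_0^2]=w^2\bigl(\E_{H_\tau}Y^2-2\bx^T\bbeta\,\E_{H_\tau}Y+(\bx^T\bbeta)^2\bigr).
\]
Here $\E_{H_\tau}Y^2=\tau^2\E Z^2$ grows quadratically while the cross term grows at most linearly, so the expression diverges.

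The claimed inequality with the factor $\epsilon$ comes from evaluating the score second moment under $\Gept=(1-\epsilon)F_0+\epsilon H_\tau$. This gives $\E_{\Gept}[\psif_0^2]\ge \epsilon\,\E_{H_\tau}[\psif_0^2]$, which in turn is at least $\epsilon\, c\,\E_{H_\tau}[Y^2]$ for large $\tau$ (absorbing $w^2$ and lower-order terms into a constant). That is enough for divergence. Two points need care. First, the OLS sensitivity $A_0=-w\,\E[\bx^T\bx]$-type term does not depend on $\tau$, because the derivative of a linear score is free of $y$. The sandwich variance $V_0(\epsilon,\tau)=B_0/A_0^2$ therefore inherits the divergence of $B_0$, and this is exactly Theorem~\ref{thm:dominance}(ii). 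Second, the centering by the $\tau$-dependent root $a_{0,\epsilon,\tau}$ only shifts the mean. It cannot cancel the $\tau^2$ growth of the variance, since $\Var(Y)$ under $H_\tau$ scales like $\tau^2\Var(Z)$; here I assume $Z$ is nondegenerate.

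Finally, I would invoke Theorem~\ref{thm:dominance}(iii), which combines the uniform bound from part (i), $V^\ast\le M^2/\underline A_T^2$, with (ii). The step I expect to be the main obstacle is that part (i) is uniform only over $\tau\le T$, while (iii) needs a bound for large $\tau$. I would handle this as the theorem itself implicitly does. Under the compactness assumption of Proposition~\ref{prop:finite-range}, the redescending scores make the CPP root and sensitivity converge as $\tau\to\infty$, because the contaminating mass contributes $g_i(\Delta)\to0$. This keeps $V^\ast$ bounded on $[\tau_0,\infty)$, and then (ii) yields $V^\ast<V_0$ for all $\tau$ beyond some $\tau_0$.
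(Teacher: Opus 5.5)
This is the paper's argument. It uses the OLS score $\psif_0\propto(y-\bx^T\bbeta)$, the mixture bound $\E_{\Gept}[\psif_0^2]\ge\epsilon\,\E_{H_\tau}[\psif_0^2]\ge C\epsilon\,\E_{H_\tau}[Y^2]$ (the last step holds only for large $\tau$, as you correctly note), and the $\tau$-free OLS bread, so that $V_0(\epsilon,\tau)=\bx_{n+1}^T\Sigma_\beta(\epsilon,\tau)\bx_{n+1}\to\infty$; Theorem~\ref{thm:dominance}(ii)--(iii) then gives the conclusion.

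Your worry about (iii) is well founded: the paper's proof of (iii) simply posits $C^\ast=\sup_\tau V^\ast(\epsilon,\tau)<\infty$, which part~(i) does not supply. However, your patch is a new argument, not something the theorem implicitly does, and it needs hypotheses the paper never states:
\begin{itemize}
\item the contaminated roots stay in a compact set for \emph{all} $\tau$ (Proposition~\ref{prop:finite-range} covers only $\tau\le T$);
\item the contamination escapes to infinity and $g_i'\to0$, so that $H_\tau$ drops out of the sensitivity (true for the Gaussian Hellinger/DPD forms, but not implied by Assumption~\ref{ass:redesc});
\item $A_0^\ast\neq0$, so that the limit $(1-\epsilon)A_0^\ast$ stays away from zero.
\end{itemize}
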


\begin{proof}
Under the OLS estimating equation,
$\psif_0(y_j)=(y_j-\bx_j^T\bbeta)/\sigma^2$, so
$\E_{\Gept}[\psif_0(Y)^2]\ge\epsilon\,\E_{H_\tau}[\psif_0(Y)^2]
\ge C\,\epsilon\,\E_{H_\tau}[Y^2]$
for a positive constant $C$.
Since $\E_{H_\tau}[Y^2]\to\infty$ by the increasing-scale assumption,
$V_0(\epsilon,\tau)=\bx_{n+1}^T\Sigma_\beta(\epsilon,\tau)\bx_{n+1}
\to\infty$, verifying the hypothesis of Theorem~\ref{thm:dominance}(ii).
Part~(iii) then gives the dominance conclusion.
\end{proof}

\begin{corollary}[ELPD comparison: Gaussian special case]
\label{cor:elpd-gaussian}
Under $Y\mid a\sim N(a,s^2)$, $\ELPD(a;\Geps)\equiv-(\meps-a)^2/(2s^2)$
up to a constant, so $\lambda=1/(2s^2)$ and
Proposition~\ref{prop:elpd-general} applies with
\begin{align*}
&\ELPD(\astar;\Geps)-\ELPD(\aopt;\Geps)
\\&\approx
\frac{(\mzero-\aopt)^2-(\mzero-\astar)^2}{2s^2}
\\&\quad+
\frac{\epsilon}{s^2}
\Bigl[
(\mzero-\aopt)\{(\mH-\mzero)-\Bzero(H)\}
-(\mzero-\astar)\{(\mH-\mzero)-\Bstar(H)\}
\Bigr].    
\end{align*}
\end{corollary}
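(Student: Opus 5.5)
The plan is to apply Proposition~\ref{prop:elpd-general} directly. Its only hypothesis is a local quadratic form of the scoring rule, so the work is to verify that hypothesis for the Gaussian log predictive score and to identify $\lambda$.

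First, take $S$ to be the logarithmic score of the Gaussian predictive $N(a,s^2)$ and compute its expectation under $\Geps$:
\[
\ELPD(a;\Geps)=\E_{\Geps}\!\left[-\tfrac12\log(2\pi s^2)-\frac{(Y-a)^2}{2s^2}\right].
\]
Expand with $(Y-a)^2=(Y-\meps)^2+2(Y-\meps)(\meps-a)+(\meps-a)^2$. The cross term has zero expectation because $\meps=\E_{\Geps}(Y)$. This gives
\[
\ELPD(a;\Geps)=-\frac{(\meps-a)^2}{2s^2}-\tfrac12\log(2\pi s^2)-\frac{\Var_{\Geps}(Y)}{2s^2}.
\]
Here $\Var_{\Geps}(Y)$ should be finite for the contamination $H$ under consideration. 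The last two terms do not depend on $a$, so they form the constant. The quadratic form therefore holds exactly, not merely to first order in $\epsilon$, with $\lambda=1/(2s^2)>0$. The log score is strictly proper, so it meets the requirement on $S$ in Proposition~\ref{prop:elpd-general}.

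Second, substitute $\lambda=1/(2s^2)$ into the conclusion of Proposition~\ref{prop:elpd-general}. The order-one term becomes $[(\mzero-\aopt)^2-(\mzero-\astar)^2]/(2s^2)$. The order-$\epsilon$ coefficient is $2\lambda\epsilon=\epsilon/s^2$, which produces exactly the displayed bracket. The same contaminated expansions $a_{0,\epsilon}$ and $a_{\ast,\epsilon}$, with $\Bzero(H)$ and $\Bstar(H)$, are carried over unchanged from the proposition.

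The only point needing care is that $s^2$ must not depend on $a$ or on $\epsilon$. Otherwise the constant would not be constant and $\lambda$ would vary with $\epsilon$. I will state explicitly that the predictive variance $s^2$ is held fixed, for example at the Gaussian-model value from the LOO/swapped derivation with known $\sigma^2$. With that settled, the result is a direct specialization, and the remaining work is bookkeeping of the factor $2\lambda$.
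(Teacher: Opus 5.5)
Your proposal is correct and follows the paper's proof. Both arguments use $\E_{\Geps}[(Y-a)^2]=\Var_{\Geps}(Y)+(\meps-a)^2$, drop the $a$-independent terms to get $\lambda=1/(2s^2)$, and substitute into Proposition~\ref{prop:elpd-general} so that $2\lambda\epsilon=\epsilon/s^2$; your added caveats (finite variance under $H$, the normalizing constant, and $s^2$ held fixed in $a$ and $\epsilon$) are sensible clarifications the paper leaves implicit.
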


\begin{proof}[Proof of Corollary~\ref{cor:elpd-gaussian}]
$\E_{\Geps}[(Y-a)^2]=\mathrm{Var}_{\Geps}(Y)+(\meps-a)^2$, so
$\ELPD(a;\Geps)=-(\mathrm{Var}_{\Geps}(Y)+(\meps-a)^2)/(2s^2)$.
Dropping the variance term (constant in $a$) and substituting
the first-order expansions for $a_{0,\epsilon}$ and $a_{\ast,\epsilon}$
gives the stated expression.
\end{proof}

\subsection{Nonparametric inheritance}
\label{sec:inheritance}

The theoretical results of Sections~\ref{sec:scores}--\ref{sec:contamination}
hold under conditions~(C1) and~(C2) of Section~\ref{sec:general}
and Assumption~\ref{ass:redesc}, with no requirement that $\mu_{1i}(a)$
be affine in $a$. The affine structure $\mu_{1i}(a)=c_i+d_i a$,
which holds for the Gaussian linear model and the nonparametric
models below, additionally yields the rank-two update
(Corollary~\ref{cor:rank2}) and one-dimensional score equation
(Corollary~\ref{cor:1d}), but these are consequences of the affine
special case rather than prerequisites for the general theory.
The following corollary makes the inheritance precise.

\begin{corollary}[Nonparametric inheritance of all Section~2 results]
\label{cor:nonparametric}
Let the data be modelled by either the basis-expansion regression with
feature map $\bx_i\mapsto\bz_i$, or Gaussian process regression with
kernel $k$. In both cases, conditions~(C1) and~(C2) of
Section~\ref{sec:general} are satisfied and $\mu_{1i}(a)=c_i+d_i a$
is affine (verified in Section~\ref{sec:nonparametric}).

\noindent\textit{General results} (require only (C1), (C2), and
Assumption~\ref{ass:redesc}; hold for any model in this class):
\begin{enumerate}
  \item[(i)] Assumption~\ref{ass:redesc} is satisfied and the influence
    function is globally bounded
    (Proposition~\ref{prop:bounded-inf}, Corollary~\ref{cor:scores-gaussian}).
  \item[(ii)] The CPP objective is locally strongly convex near any
    minimizer at which predictive discrepancies are small
    (Lemma~\ref{lem:convexity}, Corollary~\ref{cor:convexity-gaussian}).
  \item[(iii)] The predictor $\anstar$ is $\sqrt{n}$-consistent and
    asymptotically normal (Proposition~\ref{prop:asymp-dist}).
  \item[(iv)] The asymptotic variance dominates any plug-in predictor
    with unbounded influence under $\epsilon$-contamination
    (Theorem~\ref{thm:dominance}).
  \item[(v)] ELPD comparisons under contamination follow
    Proposition~\ref{prop:elpd-general}
    (Corollary~\ref{cor:elpd-gaussian} for Gaussian predictives).
\end{enumerate}

\noindent\textit{Affine-structure results} (additionally require
$\mu_{1i}(a)=c_i+d_i a$; hold for the Gaussian linear model,
basis-expansion, and GP regression):
\begin{enumerate}
  \item[(vi)] The minimizer $\astar$ is found by solving a single
    scalar equation (Corollary~\ref{cor:1d}).
  \item[(vii)] The scalars $c_i$, $d_i$ are computable at $O(np^2)$
    cost via a rank-two Woodbury update
    (Corollary~\ref{cor:rank2}, with $\bx_i$ replaced by $\bz_i$).
\end{enumerate}
\end{corollary}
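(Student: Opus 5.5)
The plan is to reduce Corollary~\ref{cor:nonparametric} to a single structural fact and then check each item against the hypotheses of the cited result. The fact is that both nonparametric models are conjugate Gaussian models whose posterior predictive mean is linear in the response vector, with predictive variance free of the responses. For basis-expansion regression, I will observe that replacing $\bx_i$ by $\bz_i$ turns the model literally into the Gaussian linear model of Section~\ref{sec:loo}, with design matrix $\bZ$ and $p$ basis functions. The formulas for $m_{2i}$, $s_{2i}^2$, $m_{1i}(a)=c_i+d_i a$, $s_{1i}^2$ and the matrices $\bA,\bA^{(+)}_{-i}$ then carry over by substitution.

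For GP regression I will write the LOO predictive of $y_i$ given $\by_{-i}$ and the swapped predictive given $(\by_{-i},a)$ by Gaussian conditioning. The means are $\bk_{i}^T(\bK+\sigma^2 I)^{-1}\by$ on the appropriate index sets. The swapped mean is therefore linear in $(\by_{-i},a)$, giving $c_i$ as the $\by_{-i}$ part and $d_i$ as the coefficient on $a$. Both variances depend only on covariates and the kernel. This verifies (C1), since Gaussian predictives give closed-form Hellinger, DPD and BC, and it verifies (C2) together with the affine structure.

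With affinity established, the general items (i)--(v) follow by direct invocation.
\begin{itemize}
\item[(i)] Since $\dot\mu_{1i}=d_i$ is constant, I take $L_i=|d_i|$. Because the predictives are Gaussian, the computations of Corollary~\ref{cor:scores-gaussian} apply verbatim, giving odd, bounded, redescending $g_i$. Proposition~\ref{prop:bounded-inf} then yields the bound $\sum_i|d_i|M_i$.
\item[(ii)] This is Corollary~\ref{cor:convexity-gaussian}, since $\ddot\mu_{1i}=0$.
\item[(iii)] This follows from Proposition~\ref{prop:asymp-dist}, with (A1) supplied by the remark following it via (ii). (A2) and (A3) are inherited as assumptions, and (A3) is automatic because the score is bounded.
\item[(iv)] This is Theorem~\ref{thm:dominance}, using the bound from (i) and Proposition~\ref{prop:finite-range}.
\item[(v)] This is Proposition~\ref{prop:elpd-general}, specialized through Corollary~\ref{cor:elpd-gaussian} because the predictives are Gaussian.
\end{itemize}
For each item I will state explicitly that the cited proof used only (C1), (C2) and Assumption~\ref{ass:redesc}, never the finite-dimensional linear form.

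For the affine items, (vi) is immediate from Corollary~\ref{cor:1d} once $\Delta_i(a)=m_{2i}-c_i-d_ia$ holds. For (vii), in the basis case Corollary~\ref{cor:rank2} applies with $\bz_i$ in place of $\bx_i$, since $\bA^{(+)}_{-i}=\bA-\bz_i\bz_i^T+\bz_{n+1}\bz_{n+1}^T$. In the GP case I would note that either a finite-rank feature representation reduces it to the basis case, or an analogous rank-two update applies to the kernel matrix. The latter is a block update rather than a $p\times p$ one, so the claim as stated is literal only for the basis case.

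The main obstacle is honesty about the asymptotic items (iii)--(iv) in the GP setting. The quantities $c_i$, $d_i$ and the predictive variances change with $n$, so the i.i.d.\ structure assumed in Proposition~\ref{prop:asymp-dist} is not automatic. My first approach is to treat (A1)--(A3) and the compactness hypothesis of Proposition~\ref{prop:finite-range} as conditions carried over as assumptions, exactly as in the linear case. The inheritance is then a statement that no additional model-specific condition is needed, rather than a fresh verification.
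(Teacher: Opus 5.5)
Your proposal follows the paper's route. You verify Gaussian predictives with response-free variances and an affine swapped mean $\mu_{1i}(a)=c_i+d_ia$ for both model classes: the basis case by substituting $\bz_i$ for $\bx_i$ in Section~\ref{sec:loo}, and the GP case by linearity of the posterior mean in the responses, where a nonzero prior mean function only shifts $c_i$. You then invoke the general results, which use only (C1), (C2) and Assumption~\ref{ass:redesc}, and carry over (A1)--(A3) and the compactness hypothesis of Proposition~\ref{prop:finite-range} as assumptions, just as the paper implicitly does. Your two caveats are well founded and are not gaps in your argument. The paper gives no justification for the $O(np^2)$ claim in (vii) for GP beyond the basis-case substitution; there the natural analogue is a rank-two change of the $n\times n$ kernel matrix. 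Nor does the paper address the $n$-dependence of $c_i$ and $d_i$ in the asymptotic items.
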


\subsection{Nonparametric model classes}
\label{sec:nonparametric}

We extend the proposed framework to a nonparametric regression setting,
verifying conditions~(C1) and~(C2) of Section~\ref{sec:general} ---
and hence the affine structure $m_{1i}(a)=c_i+d_i a$ required by
Corollary~\ref{cor:nonparametric} --- for two nonparametric model
classes.

\subsubsection{Basis function route}

\paragraph{Model.}
Let $\{(\bx_i,y_i)\}_{i=1}^n$ with $y_i=f(\bx_i)+\varepsilon_i$,
$\varepsilon_i\sim N(0,\sigma^2)$.
Approximate $f(\bx)=\sum_{k=1}^K\beta_k\phi_k(\bx)$,
and define the transformed covariate
$\bz_i=(\phi_1(\bx_i),\ldots,\phi_K(\bx_i))^T\in\mathbb{R}^K$
with $\bZ\in\mathbb{R}^{n\times K}$ the matrix with rows $\bz_i^T$.
The model reduces to Bayesian linear regression in the transformed features:
\[
\by\mid\bbeta\sim N(\bZ\bbeta,\sigma^2 I_n),
\qquad
\bbeta\sim N(\bbeta_0,\sigma^2\bV).
\]

\paragraph{Posterior quantities.}
Define
\[
\bA=\bZ^T\bZ+\bV^{-1},
\qquad
\bba=\bZ^T\by+\bV^{-1}\bbeta_0,
\qquad
\widehat\bbeta=\bA^{-1}\bba.
\]

\paragraph{Leave-one-out predictive distribution.}
For each $i=1,\ldots,n$, let $\ell_i=\bz_i^T\bA^{-1}\bz_i$. Then
\[
y_i\mid y_{(-i)}\sim N(m_{2i},s_{2i}^2),
\qquad
m_{2i}=\frac{\bz_i^T\widehat\bbeta-\ell_i y_i}{1-\ell_i},
\quad
s_{2i}^2=\frac{\sigma^2}{1-\ell_i}.
\]

\paragraph{Swapped predictive distribution.}
Let $(\bx_{n+1},a)$ be a candidate point and define
$\bz_{n+1}=(\phi_1(\bx_{n+1}),\ldots,\phi_K(\bx_{n+1}))^T$.
For each $i$, define the augmented design
\[
\bZ_{-i}^{(+)}
=
\begin{pmatrix}\bZ_{-i}\\ \bz_{n+1}^T\end{pmatrix},
\qquad
\by_{-i}^{(+)}
=
\begin{pmatrix}\by_{-i}\\ a\end{pmatrix},
\]
and let
\[
\bA_{-i}^{(+)}=(\bZ_{-i}^{(+)})^T\bZ_{-i}^{(+)}+\bV^{-1},
\qquad
\bba_{-i}^{(+)}=(\bZ_{-i}^{(+)})^T\by_{-i}^{(+)}+\bV^{-1}\bbeta_0.
\]
Then
\[
y_i\mid\{(\by_{-i},\bx_{-i}),(a,\bx_{n+1})\}
\sim
N\bigl(m_{1i}(a),s_{1i}^2\bigr),
\]
where
\[
m_{1i}(a)=\bz_i^T(\bA_{-i}^{(+)})^{-1}\bba_{-i}^{(+)},
\qquad
s_{1i}^2=\sigma^2(1+\delta_i),
\qquad
\delta_i=\bz_i^T(\bA_{-i}^{(+)})^{-1}\bz_i.
\]

\paragraph{Affine structure.}
Since $\bba_{-i}^{(+)}$ is affine in $a$,
\[
m_{1i}(a)=c_i+d_i a,
\qquad
c_i=\bz_i^T(\bA_{-i}^{(+)})^{-1}(\bZ_{-i}^T\by_{-i}+\bV^{-1}\bbeta_0),
\quad
d_i=\bz_i^T(\bA_{-i}^{(+)})^{-1}\bz_{n+1}.
\]

\paragraph{Conformity-based projection.}
We define
\[
a^\ast
=
\arg\min_a
\sum_{i=1}^n
D\!\left(
N(m_{1i}(a),s_{1i}^2),\,N(m_{2i},s_{2i}^2)
\right).
\]

The entire framework is thus identical to the parametric case after replacing
$\bx_i$ with $\bz_i$. The affine representation $m_{1i}(a)=c_i+d_i a$ is
preserved, ensuring tractable optimization over $a$.

\subsubsection{Gaussian process regression}

We now extend the proposed framework to a fully nonparametric setting
using Gaussian process priors \citep{Rasmussen2006}.

\paragraph{Model.}
Let $\{(\bx_i,y_i)\}_{i=1}^n$ with $y_i=f(\bx_i)+\varepsilon_i$,
$\varepsilon_i\sim N(0,\sigma^2)$, and $f\sim GP(m,k)$.
Define
\[
\bK_n=\bigl[k(\bx_i,\bx_j)\bigr]_{i,j=1}^n,
\qquad
\bSigma_n=\bK_n+\sigma^2 I_n.
\]

\paragraph{Posterior predictive distribution.}
For any $\bx$, define
$\bk_n(\bx)=(k(\bx,\bx_1),\ldots,k(\bx,\bx_n))^T$ and
$\bmm_n=(m(\bx_1),\ldots,m(\bx_n))^T$.
Then
\[
y(\bx)\mid\mathcal{D}_n\sim N\!\bigl(m_n(\bx),\,v_n(\bx)+\sigma^2\bigr),
\]
where
\[
m_n(\bx)=m(\bx)+\bk_n(\bx)^T\bSigma_n^{-1}(\by-\bmm_n),
\qquad
v_n(\bx)=k(\bx,\bx)-\bk_n(\bx)^T\bSigma_n^{-1}\bk_n(\bx).
\]

\paragraph{Leave-one-out predictive distribution.}
For each $i$, analogous expressions hold with the $i$th point removed:
\[
y_i\mid y_{(-i)}\sim N(m_{2i},s_{2i}^2),
\]
with $m_{2i}$ and $s_{2i}^2$ given by the standard GP leave-one-out
formulas.

\paragraph{Swapped predictive distribution.}
Let $(\bx_{n+1},a)$ be a candidate point. Then
\[
y_i\mid\mathcal{D}_{-i}^{(+)}\sim N\!\bigl(m_{1i}(a),s_{1i}^2\bigr).
\]

\paragraph{Affine structure in $a$.}
Since the GP posterior mean is linear in the observations $\by$, and
the augmented observation vector $\by_{-i}^{(+)}$ is affine in $a$,
the posterior mean $m_{1i}(a)$ inherits this affine dependence:
\[
m_{1i}(a)=c_i+d_i a,
\]
where $d_i$ corresponds to the influence of the added point
$(\bx_{n+1},a)$ in the augmented system.

\paragraph{Conformity-based projection.}
We define
\[
a^\ast
=
\arg\min_a
\sum_{i=1}^n
D\!\left(
N(m_{1i}(a),s_{1i}^2),\,N(m_{2i},s_{2i}^2)
\right).
\]

The Gaussian predictive structure is preserved under GP regression,
ensuring that all divergence calculations remain explicit. The affine
dependence of $m_{1i}(a)$ on $a$ guarantees tractable optimization,
providing a fully nonparametric Bayesian extension of the proposed
framework.

\section{Numerical illustration under linear model}
\subsection{Implementation under Unknown Variance}
\label{sec:unknown-variance}

The general CPP criterion~\eqref{eq:cpp} requires the predictive
distributions $\pii$ and $\qia$ to be available in closed form
(condition~(C1)). Under the Gaussian linear model of
Section~\ref{sec:loo}, these distributions depend on the noise
variance $\sigma^2$, which in practice is unknown. This section
addresses implementation in that specific setting: we show how
posterior uncertainty about $\sigma^2$ can be propagated through
the projection procedure while preserving the theoretical properties
established in Sections~\ref{sec:scores}--\ref{sec:contamination}.
The two approaches developed here are specific to the Gaussian
linear model but illustrate a general strategy --- integrating the
CPP objective over a posterior for the nuisance parameter --- that
extends to other parametric settings satisfying (C1) and (C2).

We adopt a Bayesian plug-in strategy: obtain posterior
draws $\{\sigma^{2,(t)}\}_{t=1}^T\sim\Pi(\sigma^2\mid (x_i,y_i)_{i=1}^n)$ and
propagate uncertainty through the projection procedure.

\subsubsection{Approach I: Draw-wise plug-in optimization}

For each draw $\sigma^{2,(t)}$, solve the projection problem to obtain
$a^{*(t)}=T(\sigma^{2,(t)})$. The sample $\{a^{*(t)}\}$ is then
summarized by its mean or median.

Under the $\epsilon$-contamination model $\Geps=(1-\epsilon)F_0+\epsilon H$, where $g_\epsilon$ is the corresponding density and the posterior concentrates around the pseudo-true value
$\sigma_\epsilon^2=\arg\min_{\sigma^2}\mathrm{KL}(g_\epsilon\|f_\sigma)$, where $f_\sigma$ is the parametric density of the working model for $\sigma^2$.

\begin{theorem}[Contamination expansion of pseudo-true variance]
\label{thm:pseudotrue}
Under regularity conditions,
\[
\sigma_\epsilon^2
=
\sigma_0^2+\epsilon B_\sigma(H)+o(\epsilon),
\]
where $\sigma_0^2=\arg\min\mathrm{KL}(f_0\|f_\sigma)$ and
$B_\sigma(H)=-\Psi_H(\sigma_0^2)/\dot\Psi_{F_0}(\sigma_0^2)$.
\end{theorem}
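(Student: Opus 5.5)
The pseudo-true variance will be treated as an M-estimation functional and expanded by the implicit function theorem in $\epsilon$. I define the population estimating function
\[
\Psi_G(\sigma^2):=\E_G\bigl[\partial_{\sigma^2}\log f_\sigma(Y)\bigr],
\]
so that minimizing $\mathrm{KL}(g\|f_\sigma)=\E_G[\log g]-\E_G[\log f_\sigma]$ over $\sigma^2$ is equivalent, under the regularity conditions (interior minimizer, differentiation under the integral), to solving $\Psi_G(\sigma^2)=0$. In particular $\Psi_{F_0}(\sigma_0^2)=0$ and $\Psi_{\Geps}(\sigma_\epsilon^2)=0$.

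The key structural fact is that $\Psi_G$ is linear in $G$. Hence for the mixture $\Geps=(1-\epsilon)F_0+\epsilon H$,
\[
\Phi(\epsilon,\sigma^2):=\Psi_{\Geps}(\sigma^2)=(1-\epsilon)\Psi_{F_0}(\sigma^2)+\epsilon\Psi_H(\sigma^2).
\]
This is affine in $\epsilon$ and, under the regularity conditions, $C^1$ in $\sigma^2$ near $\sigma_0^2$. I will assume $\Psi_H(\sigma_0^2)$ is finite; for the Gaussian working model this means $\E_H[Y^2]<\infty$. Then $\partial_\epsilon\Phi(0,\sigma_0^2)=\Psi_H(\sigma_0^2)-\Psi_{F_0}(\sigma_0^2)=\Psi_H(\sigma_0^2)$ and $\partial_{\sigma^2}\Phi(0,\sigma_0^2)=\dot\Psi_{F_0}(\sigma_0^2)$.

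Next I apply the implicit function theorem at $(0,\sigma_0^2)$. This requires $\dot\Psi_{F_0}(\sigma_0^2)\neq0$, and I take this as part of the regularity conditions: it is the identification or strict-convexity condition for the KL minimizer, the analogue of (A1). The theorem yields a unique $C^1$ local root curve $\epsilon\mapsto\sigma^2(\epsilon)$ with derivative
\[
\frac{d\sigma^2}{d\epsilon}\Big|_{\epsilon=0}=-\frac{\Psi_H(\sigma_0^2)}{\dot\Psi_{F_0}(\sigma_0^2)}=B_\sigma(H).
\]
A first-order Taylor expansion then gives $\sigma^2(\epsilon)=\sigma_0^2+\epsilon B_\sigma(H)+o(\epsilon)$.

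The main obstacle is identifying this local root curve with the global pseudo-true value $\sigma_\epsilon^2$, which is defined as an argmin. I would assume uniqueness of the KL minimizer for each small $\epsilon$ and show continuity $\sigma_\epsilon^2\to\sigma_0^2$. The argument is that $\sigma^2\mapsto\E_{\Geps}[\log f_\sigma]$ converges uniformly on compacts to its $F_0$ counterpart, at rate $O(\epsilon)$ by linearity, and the standard argmin-continuity argument, as in Theorem~\ref{thm:consistency}, then applies. Once $\sigma_\epsilon^2$ lies in the implicit-function-theorem neighbourhood it must coincide with $\sigma^2(\epsilon)$, and the expansion follows. As a sanity check in the Gaussian case, $\partial_{\sigma^2}\log f_\sigma=-1/(2\sigma^2)+r^2/(2\sigma^4)$ with $r$ the residual. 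This gives $\sigma_\epsilon^2=\E_{\Geps}[r^2]$ exactly, so $B_\sigma(H)=\E_H[r^2]-\sigma_0^2$, consistent with the stated formula.
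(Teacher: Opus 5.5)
Your proposal is correct and follows essentially the same route as the paper: you write the pseudo-true value as a root of the mixture estimating equation $(1-\epsilon)\Psi_{F_0}(\sigma^2)+\epsilon\Psi_H(\sigma^2)=0$ and apply the implicit function theorem at $(0,\sigma_0^2)$ using $\dot\Psi_{F_0}(\sigma_0^2)\neq 0$. Your extra step, identifying the local implicit-function root with the global KL minimizer, is something the paper takes for granted and is an improvement; its argmin-continuity argument also needs the minimizers $\sigma_\epsilon^2$ to stay in a compact subset of $(0,\infty)$, which you should list among the regularity conditions.
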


\begin{proof}
Let $\Psi_A(\sigma^2)=\E_A[\partial_{\sigma^2}\log f_\sigma(Y)]$ for a generic distribution $A$.
The pseudo-true value satisfies
$\Psi(\sigma_\epsilon^2,\epsilon):=(1-\epsilon)\Psi_{F_0}(\sigma_\epsilon^2)
+\epsilon\Psi_H(\sigma_\epsilon^2)=0$.
At $\epsilon=0$, $\Psi_{F_0}(\sigma_0^2)=0$ and
$\partial_{\sigma^2}\Psi_{F_0}(\sigma_0^2)\neq 0$ by second-order
identifiability. The implicit function theorem gives
$d\sigma_\epsilon^2/d\epsilon|_{\epsilon=0}
=-\Psi_H(\sigma_0^2)/\dot\Psi_{F_0}(\sigma_0^2)=B_\sigma(H)$.
\end{proof}

\paragraph{Implication for the projection target.}
Let $a^\ast(\sigma^2)=T(\sigma^2)$. Then
\[
a_\epsilon^\ast
=
a_0^\ast+\epsilon T'(\sigma_0^2) B_\sigma(H)+o(\epsilon).
\]
Thus the plug-in rule inherits a first-order contamination bias.

\paragraph{Sample-level behavior.}
Under posterior concentration at $\sigma_\epsilon^2$ and continuity of $T$,
\[
a^{*(t)}=T(\sigma^{2,(t)})=T(\sigma_\epsilon^2)+o_p(1).
\]
Since $\sigma_\epsilon^2=\sigma_0^2+\epsilon B_\sigma(H)+o(\epsilon)$,
a first-order Taylor expansion gives
\[
a^{*(t)}
=
a_0^*+\epsilon T'(\sigma_0^2)B_\sigma(H)+o(\epsilon)+o_p(1).
\]
Hence the procedure reflects both contamination bias $o(\epsilon)$ and
posterior uncertainty $o_p(1)$.

\subsubsection{Approach II: Posterior-averaged optimization}

Define $\hat{a}_B=\arg\min_{a\in\mathcal{A}}B^{-1}\sum_{t=1}^B J(a;\sigma^{2,(t)})$ based on $B$ posterior samples
with population counterpart $a_n^*=\arg\min_a Q_n(a)$,
$Q_n(a)=\int J(a;\sigma^2)\,d\Pi_n(\sigma^2)$.

\begin{proposition}[Convergence of averaged objective]
\label{prop:avgopt}
Assume $\mathcal{A}$ is compact, $J(a;\sigma^2)$ is continuous,
$\Pi_n$ concentrates at $\sigma_\epsilon^2$, and the minimizer of
$J(a;\sigma_\epsilon^2)$ over $\mathcal{A}$ is unique. Then
$a_n^*\to a_\epsilon^*:=\arg\min_a J(a;\sigma_\epsilon^2)$, and with
$\sigma_\epsilon^2=\sigma_0^2+\epsilon B_\sigma(H)+o(\epsilon)$
(Theorem~\ref{thm:pseudotrue}),
\[
a_\epsilon^*
=
a_0^*+\epsilon B_a(H)+o(\epsilon),
\qquad
B_a(H)=-H_a(\sigma_0^2)^{-1}\nabla_{\sigma^2}\nabla_a J(a_0^*;\sigma_0^2)\,B_\sigma(H),
\]
where $H_a(\sigma_0^2):=\partial^2_a J(a_0^*;\sigma_0^2)>0$ by
Lemma~\ref{lem:convexity}.
\end{proposition}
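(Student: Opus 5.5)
The argument has two parts: a consistency step for $a_n^*$, followed by an implicit-function expansion of $a_\epsilon^*$ in $\epsilon$.

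\emph{Step 1 (consistency).} The first goal is $\sup_{a\in\mathcal{A}}|Q_n(a)-J(a;\sigma_\epsilon^2)|\to 0$.
\begin{itemize}
\item Since $\mathcal{A}$ is compact and $J$ is continuous, $J$ is uniformly continuous on $\mathcal{A}\times[\sigma_\epsilon^2-\eta,\sigma_\epsilon^2+\eta]$ for a small $\eta>0$.
\item Split $Q_n(a)-J(a;\sigma_\epsilon^2)=\int\{J(a;\sigma^2)-J(a;\sigma_\epsilon^2)\}\,d\Pi_n(\sigma^2)$ into the integral over $|\sigma^2-\sigma_\epsilon^2|\le\eta$ and over its complement.
\item The first piece is at most $\omega(\eta)$, the modulus of continuity, uniformly in $a$.
\item The second piece is bounded by $2\sup|J|\cdot\Pi_n(|\sigma^2-\sigma_\epsilon^2|>\eta)\to 0$ by concentration.
\end{itemize}
The second bullet needs $J$ bounded on the relevant region. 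For the Hellinger distance this is automatic, since $H^2\le 1$. For DPD I would either assume $\sigma^2$ ranges over a compact set bounded away from $0$, or add a uniform integrability condition. The argmin then follows by the standard argument already used in Theorem~\ref{thm:consistency}: for any neighbourhood $N$ of $a_\epsilon^*$, compactness, continuity and uniqueness give $\inf_{\mathcal{A}\setminus N}J(\cdot;\sigma_\epsilon^2)>J(a_\epsilon^*;\sigma_\epsilon^2)$, and uniform convergence transfers this gap to $Q_n$. Hence $a_n^*\to a_\epsilon^*$.

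\emph{Step 2 (expansion in $\epsilon$).} Define $a^*(\sigma^2):=\arg\min_a J(a;\sigma^2)$. The target $a_\epsilon^*$ is characterised by $\partial_aJ(a_\epsilon^*;\sigma_\epsilon^2)=0$, assuming $a_\epsilon^*$ is interior to $\mathcal{A}$. Apply the implicit function theorem to $F(a,s):=\partial_aJ(a;s)$ at $(a_0^*,\sigma_0^2)$.
\begin{itemize}
\item The nondegeneracy condition is $\partial_aF=H_a(\sigma_0^2)=J''(a_0^*)>0$. This is supplied by Lemma~\ref{lem:convexity}, under its hypothesis that the $\Delta_i(a_0^*)$ lie in the convexity regions $\mathcal{R}_i$.
\item The theorem gives a $C^1$ map $s\mapsto a^*(s)$ near $\sigma_0^2$ with $\dfrac{d a^*}{ds}\Big|_{\sigma_0^2}=-H_a(\sigma_0^2)^{-1}\,\nabla_{\sigma^2}\nabla_aJ(a_0^*;\sigma_0^2)$.
\item Uniqueness of the minimizer together with local strong convexity identifies this local root with the global argmin for $s$ near $\sigma_0^2$.
\end{itemize}
Composing with Theorem~\ref{thm:pseudotrue}, $\sigma_\epsilon^2=\sigma_0^2+\epsilon B_\sigma(H)+o(\epsilon)$, gives by the chain rule
\[
a_\epsilon^*=a^*(\sigma_\epsilon^2)=a_0^*+\epsilon\,\frac{da^*}{ds}\Big|_{\sigma_0^2}B_\sigma(H)+o(\epsilon),
\]
which is exactly the stated form of $B_a(H)$.

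\emph{Main obstacle.} The delicate point is the regularity Step 2 needs: $J$ must be $C^2$ jointly in $(a,\sigma^2)$. In the Gaussian case this holds because $s_{1i}^2$ and $s_{2i}^2$ are smooth positive functions of $\sigma^2$, and the Hellinger and DPD closed forms are smooth in $(\delta,S_i)$. Continuity of the mixed derivative must also be available so that the implicit function theorem applies. A second subtlety is that the global minimizer could jump to another local minimum as $\sigma^2$ varies. I would rule this out by uniqueness plus compactness: by Step 1 applied with $\Pi_n$ replaced by the point mass at $s$, $a^*(s)\to a_0^*$ as $s\to\sigma_0^2$, so for small $\epsilon$ it eventually lies in the neighbourhood where the implicit-function solution is the unique critical point.
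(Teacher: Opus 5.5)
Your proposal is correct and follows essentially the same route as the paper: uniform convergence of $Q_n$ to $J(\cdot;\sigma_\epsilon^2)$ on the compact $\mathcal{A}$ plus the argmin argument, then implicit differentiation of the first-order condition $\partial_a J(a_\epsilon^*;\sigma_\epsilon^2)=0$, combined with the expansion of Theorem~\ref{thm:pseudotrue}. Your version is also more careful than the paper's in two places: the implicit function theorem, together with interiority and the local-versus-global identification, supplies the differentiability of $\epsilon\mapsto a_\epsilon^*$ that the paper's differentiation of the identity takes for granted, and your boundedness or uniform-integrability caveat for DPD, where $J$ is unbounded as $\sigma^2\to 0$, fills a step the paper asserts from continuity and concentration alone.
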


Both approaches are asymptotically equivalent at first order:
$a_n^*=a_0^*+\epsilon B_a(H)+o(\epsilon)$.
Approach~I produces a distribution of solutions; Approach~II minimizes
a smoothed objective and is typically more stable numerically.
In both cases, contamination affects the procedure through the shift in
the pseudo-true variance parameter, a mechanism that is specific to the
Gaussian working model but reflects a general phenomenon: uncertainty
in any nuisance parameter of the predictive distributions propagates
into the CPP criterion through condition~(C1).

To mitigate variance inflation under contamination, one may replace
$\sigma^{2,(t)}$ by a transformed version $R(\sigma^{2,(t)})$
(e.g., truncation or robust scaling). This effectively attenuates the
contamination bias by reducing the sensitivity of the objective to
extreme posterior draws. In other model classes satisfying (C1) and
(C2), the analogous strategy is to integrate the CPP objective
$J(a;\theta)$ over a posterior for the relevant nuisance parameter
$\theta$, with the contamination bias governed by the influence
function of the posterior for $\theta$ under the working model.
\subsection{Simulation experiments}
\label{sec:simulations}

The general CPP framework of Section~\ref{sec:model} applies to any
model satisfying conditions~(C1) and~(C2). The simulations here
illustrate its finite-sample behavior in the Gaussian linear model
setting of Section~\ref{sec:loo}, which provides the closed-form
predictive distributions and affine structure that make the experiments
fully tractable. The comparator throughout is the posterior mean
(MAP-style plug-in predictor), which corresponds to the OLS plug-in
$\hat{f}_n(\bx_{n+1})=\bx_{n+1}^T\widehat\bbeta_n$ discussed in
Corollaries~\ref{cor:avar-ols} and~\ref{cor:dominance-ols}.

The primary performance metric is the mean log-predictive density
difference
\[
\mathrm{MLPD}
=
\frac{1}{n_{\mathrm{test}}}\sum_{i}
\bigl\{\log p_{\mathrm{CPP}}(y_i\mid x_i)-\log p_{\mathrm{MAP}}(y_i\mid x_i)\bigr\},
\]
so that positive values indicate superior predictive density for CPP.
We apply Approach~I from Section~\ref{sec:unknown-variance} for
variance propagation. Results used $50$ Monte Carlo replicates, $500$
posterior draws for $\sigma^2$, and a grid of length $61$ for
optimization over $a$. The prior was held fixed with a diffuse Gaussian
on $\bbeta$ and inverse-gamma hyperparameters $a_0=b_0=0.1$.

For each replicate, data were generated from $y=X\beta+\varepsilon$
with $X\in\mathbb{R}^{n\times p}$ having i.i.d.\ standard normal
entries, true coefficients $(1,-1,0.5,0,\ldots,0)^T$, and
$\varepsilon\sim N(0,\sigma^2 I_n)$. Response contamination was
introduced by adding large Gaussian perturbations to a randomly
selected fraction of the responses, mimicking the $\epsilon$-contamination
model $G_\epsilon=(1-\epsilon)F_0+\epsilon H$ of Section~\ref{sec:asymptotics}.
After contamination, each covariate column was standardized.

\subsubsection{Effect of response outliers}

Fixing $(n,p,\sigma)=(200,6,1)$ and varying the response-outlier
fraction from $0.03$ to $0.05$, CPP showed a clear and systematic advantage: the mean MLPD
was $0.2535$ at $3\%$, $0.1816$ at $4\%$, $0.1568$ at $5\%$, with CPP outperforming MAP
in every replicate. The largest gains appeared at low-to-moderate
outlier levels, consistent with the redescending nature of the
divergence scores.

\begin{table}[t]
\centering
\caption{Summary of Experiments 1. Positive MLPD values favor CPP over MAP.}
\label{tab:sim_outlier_leverage}
\small
\begin{tabular}{llrrrrr}
\hline
Experiment & Setting & $\widehat{\mathrm{MLPD}}$ & SE & 95\% CI lower & 95\% CI upper & \% positive \\
\hline
Outlier fraction & 0.03 & 0.2535 & 0.0466 & 0.1622 & 0.3449 & 100.00 \\
Outlier fraction & 0.04 & 0.1816 & 0.0269 & 0.1289 & 0.2342 & 100.00 \\
Outlier fraction & 0.05 & 0.1568 & 0.0304 & 0.0972 & 0.2165 & 100.00 \\
\hline
\end{tabular}
\end{table}

\subsubsection{Sensitivity to the DPD tuning parameter}

Fixing the same contamination regime and varying
$\alpha\in\{0.1,0.3,0.5,0.75,1,1.5,2\}$, CPP consistently outperformed
MAP for all values. The mean MLPD ranged from about $0.0353$ to
$0.0463$, with the largest improvement near $\alpha=1.0$ and stable
performance across the rest of the range.

\begin{table}[t]
\centering
\caption{Sensitivity to DPD tuning parameter.}
\label{tab:sim_divergence_dpd}
\small
\begin{tabular}{llrrrrr}
\hline
Experiment & Setting & $\widehat{\mathrm{MLPD}}$ & SE & 95\% CI lower & 95\% CI upper & \% positive \\
\hline
DPD tuning & $\alpha=0.10$ & $0.0358$ & $0.0025$ & $0.0310$ & $0.0406$ & $100$ \\
DPD tuning & $\alpha=0.30$ & $0.0413$ & $0.0037$ & $0.0341$ & $0.0484$ & $100$ \\
DPD tuning & $\alpha=0.50$ & $0.0410$ & $0.0031$ & $0.0349$ & $0.0471$ & $100$ \\
DPD tuning & $\alpha=0.75$ & $0.0440$ & $0.0032$ & $0.0377$ & $0.0502$ & $100$ \\
DPD tuning & $\alpha=1.00$ & $0.0463$ & $0.0046$ & $0.0372$ & $0.0554$ & $100$ \\
DPD tuning & $\alpha=1.50$ & $0.0353$ & $0.0022$ & $0.0310$ & $0.0397$ & $100$ \\
DPD tuning & $\alpha=2.00$ & $0.0449$ & $0.0034$ & $0.0382$ & $0.0515$ & $100$ \\
\hline
\end{tabular}
\end{table}

\subsubsection{Scaling with sample size}

Fixing $p=4$, $\sigma=1$, and $10\%$ outlier fraction while varying
$n\in\{50,100,200,400\}$, CPP improved upon MAP uniformly. The mean
MLPD was $0.7595$ for $n=50$, $0.2126$ for $n=100$, $0.0734$ for
$n=200$, and $0.0288$ for $n=400$. The decreasing magnitude is
consistent with individual outliers having diminishing influence on the
posterior as sample size grows.

\subsubsection{Scaling with predictor dimension}

Fixing $n=200$, $\sigma=1$, and $10\%$ outlier fraction while varying
$p\in\{2,4,6,10,15\}$, CPP dominated MAP throughout, with mean MLPD
values between $0.061$ and $0.082$. Gains were positive in every
replicate for every $p$.

\subsubsection{Effect of noise level}

Fixing $(n,p)=(200,6)$ with $10\%$ outlier fraction and varying
$\sigma\in\{0.5,1,2,4\}$, CPP again uniformly improved on MAP, with
mean MLPD values approximately $0.0753$, $0.0597$, $0.0685$, and
$0.0609$.

\begin{table}[t]
\centering
\caption{Summary of scaling and noise experiments.}
\label{tab:sim_scaling_noise}
\small
\begin{tabular}{llrrrrr}
\hline
Experiment & Setting & $\widehat{\mathrm{MLPD}}$ & SE & 95\% CI lower & 95\% CI upper & \% positive \\
\hline
Sample size & $n=50$  & $0.7595$ & $0.1931$ & $0.3811$ & $1.1379$ & $100$ \\
Sample size & $n=100$ & $0.2126$ & $0.0393$ & $0.1355$ & $0.2896$ & $100$ \\
Sample size & $n=200$ & $0.0734$ & $0.0100$ & $0.0539$ & $0.0930$ & $100$ \\
Sample size & $n=400$ & $0.0288$ & $0.0015$ & $0.0259$ & $0.0316$ & $100$ \\
\hline
Dimension & $p=2$  & $0.0610$ & $0.0041$ & $0.0529$ & $0.0691$ & $100$ \\
Dimension & $p=4$  & $0.0615$ & $0.0039$ & $0.0539$ & $0.0691$ & $100$ \\
Dimension & $p=6$  & $0.0823$ & $0.0171$ & $0.0487$ & $0.1159$ & $100$ \\
\hline
Noise level & $\sigma=0.5$ & $0.0753$ & $0.0101$ & $0.0555$ & $0.0951$ & $100$ \\
Noise level & $\sigma=1$   & $0.0597$ & $0.0046$ & $0.0507$ & $0.0688$ & $100$ \\
Noise level & $\sigma=2$   & $0.0685$ & $0.0082$ & $0.0524$ & $0.0846$ & $100$ \\
Noise level & $\sigma=4$   & $0.0609$ & $0.0059$ & $0.0492$ & $0.0725$ & $100$ \\
\hline
\end{tabular}
\end{table}

\subsubsection{Clean-data baseline}

Under correctly specified clean data with no contamination, the gains
were intentionally small: mean MLPD approximately $0.00067$ for DPD and
$0.00103$ for Hellinger. These near-zero differences confirm that CPP
does not materially degrade performance when contamination is absent.

\begin{table}[t]
\centering
\caption{Performance under clean data.}
\label{tab:sim_clean}
\small
\begin{tabular}{lrrrrr}
\hline
Divergence & $\widehat{\mathrm{MLPD}}$ & SE & 95\% CI lower & 95\% CI upper & \% positive \\
\hline
DPD      & $0.0007$ & $0.0001$ & $0.0005$ & $0.0009$ & $76$ \\
Hellinger & $0.0010$ & $0.0002$ & $0.0007$ & $0.0013$ & $80$ \\
\hline
\end{tabular}
\end{table}

\subsubsection{Summary}

These experiments illustrate the general theoretical results of
Section~\ref{sec:model} in the Gaussian linear model setting.
Three conclusions emerge consistently. First, CPP provides
non-negligible predictive-density gains under response contamination,
most pronounced at low-to-moderate outlier levels ($5$--$10\%$),
consistent with the asymptotic dominance of Theorem~\ref{thm:dominance}
and Corollary~\ref{cor:dominance-ols}. Second, the advantage is stable
across DPD tuning parameters, sample sizes, predictor dimensions, and
noise levels, reflecting the generality of the bounded-influence
property established in Proposition~\ref{prop:bounded-inf} and
Corollary~\ref{cor:scores-gaussian}. Third, under clean data CPP
performs essentially identically to the Bayesian plug-in predictor,
confirming the efficiency equivalence of Corollary~\ref{cor:clean}
and showing that the robustness mechanism carries no meaningful
efficiency cost under ideal conditions.

\subsection{Data Applications}
\label{sec:data}

The two analyses here illustrate the
Gaussian linear model instantiation of Section~\ref{sec:loo} on
real datasets with documented response outliers, serving as
empirical counterparts to the asymptotic theory of
Sections~\ref{sec:asymptotics}--\ref{sec:contamination} with unknown
variance propagated via Approach~I of
Section~\ref{sec:unknown-variance}. 

In each case, we adopt a repeated random-split design: all outlying
observations are held out in every test set together with a random
draw of clean observations, and the split is repeated 10 times.
The primary metric is the mean log-predictive density difference
\[
\mathrm{MLPD}
=
n_{\mathrm{test}}^{-1}
\sum_{i\in\mathrm{test}}
\bigl\{
\log p_{\mathrm{CPP}}(y_i\mid\bx_i)
-\log p_{\mathrm{MAP}}(y_i\mid\bx_i)
\bigr\},
\]
with positive values indicating superior predictive density for CPP.
In both applications the prior is $\bbeta\sim N(\mathbf{0},100\,I_p)$
with inverse-gamma hyperparameters $a_0=b_0=0.1$, variance
uncertainty is propagated via $B=500$ posterior draws, and the CPP
grid has length 61 centered on the MAP prediction with a
$\pm4\hat\sigma$ window.

\subsubsection{Body fat prediction}
\label{sec:data-bodyfat}

The body fat dataset of \citet{Koenig1994} from the
\texttt{TH.data} R package \citep{Hothorn2014}, contains $n=71$
observations for women. The response \texttt{DEXfat} is the
percentage of body fat measured by dual-energy X-ray absorptiometry
(DXA), and the nine predictors are age, waist and hip
circumferences, elbow and knee breadths, and four composite
anthropometric indices. DXA measurements are subject to instrument
drift and patient-positioning errors that produce genuine
gross-error outliers in the response. Inspection of the OLS fit
reveals three observations with studentised residuals exceeding
$2.5$ in absolute value (observations 27, 41, and 48, with
\texttt{DEXfat} values of $40.6\%$, $60.7\%$, and $62.0\%$,
respectively), constituting $4.2\%$ of the sample. The value
$62.0\%$ is physiologically implausible for the corresponding
anthropometric profile and is likely a recording or calibration
error. All variables are standardized before analysis.

Each of the 10 splits holds out the three outlier observations
together with 17 randomly selected clean observations
($n_{\mathrm{test}}=20$, $n_{\mathrm{train}}=51$). The per-split
MLPD for CPP-DPD ranges from $+0.217$ to $+0.267$ with a mean of
$+0.241$ (SE $0.005$); CPP-Hellinger yields a mean of
$+0.227$ (SE $0.005$). Both criteria outperform MAP in all 10
splits.

Stratifying by observation type confirms that the gain is entirely
attributable to the contaminated observations. On the 68 distinct
clean observations the average MLPD is $-0.001$, negligible across
all splits. On the three outlier observations the average gain for
CPP-DPD is $+1.612$ nats, monotonically increasing with outlier
severity: $+0.943$ at $40.6\%$, $+1.192$ at $60.7\%$, and $+2.700$
at $62.0\%$.

\subsubsection{New York air quality}
\label{sec:data-airquality}

The New York Air Quality dataset \citep{Chambers1983} records daily
atmospheric measurements taken from May through September 1973.
After removing the 42 observations with missing values, the working
sample consists of $n=111$ complete cases. The response is daily
ozone concentration (ppb) and the four predictors are solar
radiation, wind speed, maximum temperature, and calendar month.
Two observations exceed the $1.5\times\mathrm{IQR}$ upper fence
($135$ and $168$\,ppb, or $1.8\%$ of the sample), representing
sensor spikes or extreme photochemical events. All variables are
standardized before analysis.

Each of the 10 splits holds out both outlier observations together
with 18 randomly selected clean observations ($n_{\mathrm{test}}=20$,
$n_{\mathrm{train}}=91$). The per-split MLPD for CPP-DPD ranges
from $+0.136$ to $+0.183$ with a mean of $+0.150$ (SE $0.005$);
CPP-Hellinger yields a mean of $+0.141$ (SE $0.004$). Both
criteria outperform MAP in all 10 splits. On the 89 distinct clean
observations the average MLPD is $+0.005$, essentially zero. On
the two outlier observations the CPP-DPD gain is $+0.388$ at
$135$\,ppb and $+2.521$ at $168$\,ppb, with the larger gain at
the more extreme value.

\subsubsection{Summary}

Table~\ref{tab:data_summary} collects the results across both
applications. These real-data illustrations confirm the theoretical
predictions of Section~\ref{sec:model} in the Gaussian linear model
setting. Three findings emerge consistently. First, CPP dominates
MAP in every one of the 10 splits in both datasets, in agreement
with Theorem~\ref{thm:dominance} and Corollary~\ref{cor:dominance-ols}.
Second, the gains on clean observations are negligible in both
datasets (average MLPD within $\pm 0.005$ nats), consistent with
Theorem~\ref{thm:consistency} and Corollary~\ref{cor:clean}, and the
clean-data baseline of Section~\ref{sec:simulations}. Third, the
gains on the outlier observations are large, positive, and increasing
in outlier severity, directly reflecting the redescending influence
established in Proposition~\ref{prop:bounded-inf} and
Corollary~\ref{cor:scores-gaussian}, and the asymptotic dominance
under contamination of Theorem~\ref{thm:dominance}.

\begin{table}[t]
\centering
\caption{Summary of repeated random-split results (10 splits each).
MLPD is the mean log-predictive density gain (CPP$-$MAP) averaged
over splits; standard errors in parentheses. Gain columns report
the mean gain on clean and outlier test observations pooled across
all splits.}
\label{tab:data_summary}
\small
\begin{tabular}{llrrrr}
\hline
Dataset & Method
  & MLPD (SE)
  & Splits $>0$
  & Gain (clean)
  & Gain (outlier) \\
\hline
\multirow{2}{*}{Body fat}
  & CPP-DPD ($\alpha=1$)
    & $+0.241\;(0.005)$ & $10/10$ & $-0.001$ & $+1.612$ \\
  & CPP-Hellinger
    & $+0.227\;(0.005)$ & $10/10$ & $-0.002$ & $+1.525$ \\[4pt]
\multirow{2}{*}{Air quality}
  & CPP-DPD ($\alpha=1$)
    & $+0.150\;(0.005)$ & $10/10$ & $+0.005$ & $+1.455$ \\
  & CPP-Hellinger
    & $+0.141\;(0.004)$ & $10/10$ & $+0.005$ & $+1.368$ \\
\hline
\end{tabular}
\end{table}

Figure~\ref{fig:data_splits} displays the split-level MLPD and
per-observation gain distributions for each dataset.
Panel~(a) shows that all 10 per-split MLPDs are positive and
well-separated from zero in both applications, confirming the
stability of the result across partitions. Panel~(b) shows a sharp
separation between the clean-observation gains (centered near zero)
and the outlier gains (large and positive), satisfying the theoretical prediction that the bounded influence of the CPP score
concentrates gains precisely at contaminated observations.

\begin{figure}[t]
\centering
\includegraphics[width=\linewidth]{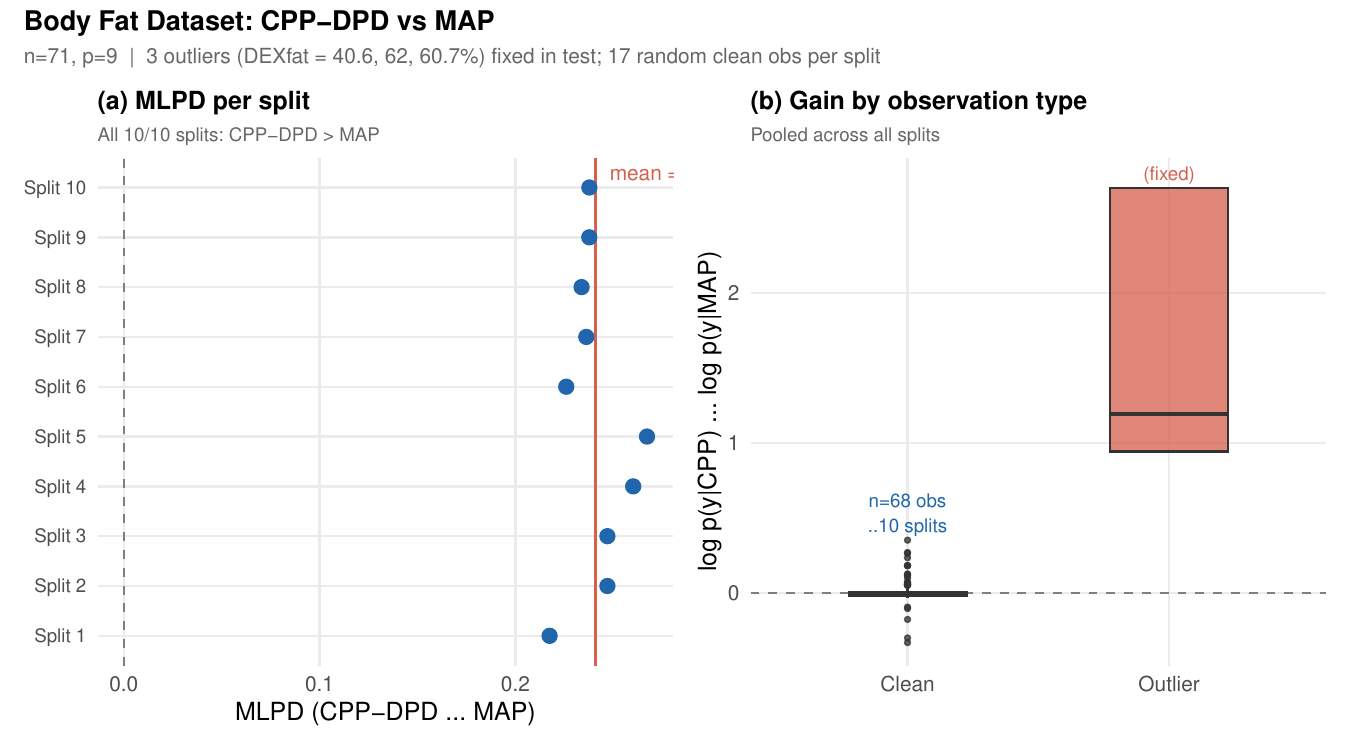}
\vspace{6pt}
\includegraphics[width=\linewidth]{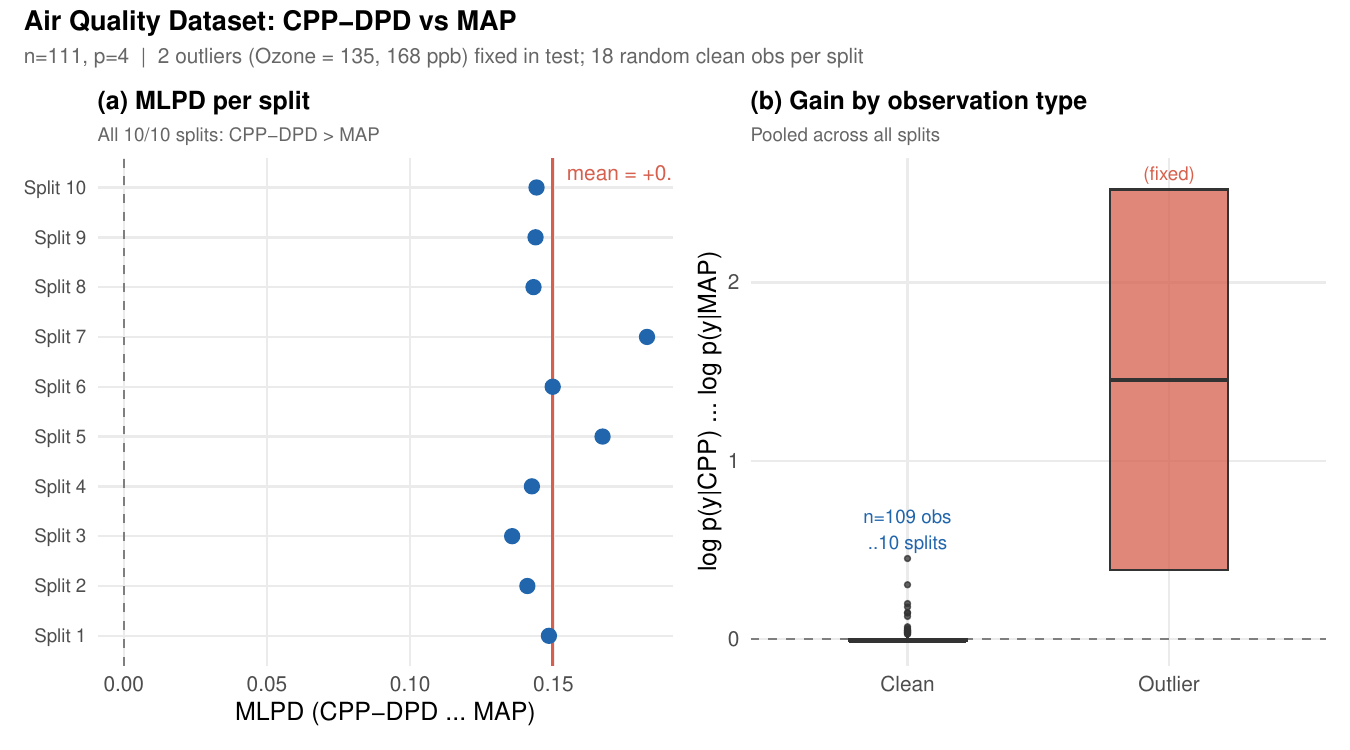}
\caption{Repeated random-split results (10 splits) for the body fat
dataset (top) and the air quality dataset (bottom). Panel~(a): per-split
MLPD for CPP-DPD; red vertical line is the mean. Panel~(b): boxplots
of the per-observation log-predictive density gain (CPP-DPD$-$MAP),
stratified by clean vs outlier test observations and pooled across
all splits. In both datasets CPP-DPD outperforms MAP on every split,
outlier gains are large and positive, and clean-observation gains
are negligible.}
\label{fig:data_splits}
\end{figure}

\section{Discussion}
\label{sec:discussion}
The general CPP criterion~\eqref{eq:cpp} and the associated theory
of Sections~\ref{sec:scores}--\ref{sec:contamination} require only
conditions~(C1) and~(C2): closed-form predictive distributions and
a differentiable swapped mean $\mu_{1i}(a)$. The affine special case
$\mu_{1i}(a)=c_i+d_i a$, which holds for the three model classes
verified in Sections~\ref{sec:loo} and~\ref{sec:nonparametric}
(Corollary~\ref{cor:nonparametric}), additionally yields the
rank-two computational update and one-dimensional score equation.
Several directions for future work follow naturally from this
structure.

For generalized linear models (GLMs), conjugacy is unavailable and
the posterior mean is no longer linear in $\by$, so the affine
special case does not hold. The rank-two Woodbury update
(Corollary~\ref{cor:rank2}) and the closed-form one-dimensional
solution (Corollary~\ref{cor:1d}) therefore do not apply directly.
We plan to address the GLM setting in a companion paper, where the
central strategy will be to satisfy conditions~(C1) and~(C2)
approximately via Laplace or other second-order approximations to the
posterior, under which the swapped predictive mean $\mu_{1i}(a)$
remains a differentiable function of $a$. Under such approximations,
the general bounded-influence result (Proposition~\ref{prop:bounded-inf}),
the local convexity lemma (Lemma~\ref{lem:convexity}), and the
asymptotic dominance theorem (Theorem~\ref{thm:dominance}) all carry
over, since none of these require the affine structure. The minimizer
$\astar$ will be found by numerical optimization over $a$, and the
LOO approximations of \citet{Vehtari2017} can avoid refitting the
model for each candidate value. Establishing formal theoretical
guarantees and robustness properties under these approximations will
be the primary focus of that work.

The choice of divergence $D$ in~\eqref{eq:cpp} determines the
robustness--efficiency trade-off: the DPD parameter $\alpha$ controls
this balance, with larger $\alpha$ giving stronger robustness at some
efficiency cost under $F_0$. The simulation results of
Section~\ref{sec:simulations} suggest $\alpha=1$ performs well
across a range of contamination levels, but a data-adaptive selection
rule --- for instance, one that minimizes an estimated predictive risk
or exploits the general ELPD framework of
Proposition~\ref{prop:elpd-general} --- would strengthen the
practical applicability of the method.

The general ELPD result (Proposition~\ref{prop:elpd-general}) applies
to any strictly proper scoring rule $S$ for which $S(a;G)\equiv
-\lambda(\meps-a)^2$ locally. Extending the CPP framework to
Student-$t$, Laplace, or skew-normal predictive distributions, which
arise naturally under heavy-tailed or asymmetric error models, would
broaden the scope of conditions~(C1) and~(C2) while preserving the
general theoretical structure of
Sections~\ref{sec:scores}--\ref{sec:contamination}.

\section*{Declaration of the use of generative AI and AI-assisted
technologies}

During the preparation of this work, the authors used Claude and
ChatGPT to draft and aid with the simulation and the analysis. After using
these tools, the authors reviewed and edited the content as necessary
and take full responsibility for the content of the publication.

\bibliographystyle{plainnat}
\bibliography{sample}

\appendix
\section{Proofs}
\label{app:proofs}

\subsection{Proof of Proposition~\ref{prop:asymp-dist}}
\label{app:proof-asymp-dist}

We follow the argument of \citet{Foutz1977}, applying the Inverse
Function Theorem to establish existence, consistency, and uniqueness of
$\widehat{a}_n$ before deriving the asymptotic distribution.

Let $\mathcal{I}(a)=\E[-\partial_a\psif(Y;a)]$.
The domination condition in~(A2) and dominated convergence imply
$\mathcal{I}(a)$ is continuous near $\aopt$, with
$\mathcal{I}(\aopt)=|\Apsi|>0$ by~(A1).
The same domination condition gives the uniform law of large numbers
\[
\sup_{a\in U_\delta}|\PsiEmp'(a)+\mathcal{I}(a)|\pto 0.
\]

Let $\lambda=\tfrac{1}{4}\mathcal{I}(\aopt)>0$.
By the WLLN applied to $-\partial_a\psif(Y;\aopt)$,
$\lambda_n:=\tfrac{1}{4}|\PsiEmp'(\aopt)|\pto\lambda$.
Choose $\delta>0$ small enough that on
$U_\delta=[\aopt-\delta,\aopt+\delta]$:
$|\mathcal{I}(a)-\mathcal{I}(\aopt)|<\tfrac{1}{2}\lambda$ by
continuity of $\mathcal{I}$, and
$\sup_{a\in U_\delta}|\PsiEmp'(a)+\mathcal{I}(a)|>\tfrac{1}{4}\lambda$
with probability tending to zero.
For $a\in U_\delta$, with probability tending to $1$:
\begin{align*}
|\PsiEmp'(a)-\PsiEmp'(\aopt)|
&\le
|\PsiEmp'(a)+\mathcal{I}(a)|
+|\mathcal{I}(a)-\mathcal{I}(\aopt)|
+|\PsiEmp'(\aopt)+\mathcal{I}(\aopt)|\\
&\le
\tfrac{1}{4}\lambda+\tfrac{1}{2}\lambda+\tfrac{1}{4}\lambda=\lambda.
\end{align*}

By the Inverse Function Theorem, with probability tending to $1$,
$\PsiEmp$ is $1$-to-$1$ on $U_\delta$ and its image contains the
interval around $\PsiEmp(\aopt)$ with half-length $\lambda_n\delta$.
Since $\E[\psif(Y;\aopt)]=0$ (as $\aopt$ solves $\PsiPop(a)=0$),
the WLLN gives $|\PsiEmp(\aopt)|<\tfrac{1}{2}\lambda_n\delta$ with
probability tending to $1$, so $0\in\PsiEmp(U_\delta)$ and the root
$\widehat{a}_n=\PsiEmp^{-1}(0)\in U_\delta$ exists.
Since $\delta$ is arbitrary, $\widehat{a}_n\pto\aopt$.
Uniqueness in the sense of \citet{Huzurbazar1948} follows from the
$1$-to-$1$ property.

For the asymptotic distribution, the mean-value expansion
$0=\PsiEmp(\aopt)+(\widehat{a}_n-\aopt)\PsiEmp'(\tilde{a}_n)$
gives
\[
\sqrt{n}(\widehat{a}_n-\aopt)
=
-\frac{\sqrt{n}\,\PsiEmp(\aopt)}{\PsiEmp'(\tilde{a}_n)}.
\]
By~(A3) and the CLT, $\sqrt{n}\,\PsiEmp(\aopt)\dto N(0,\Bpsi)$.
Since $\widehat{a}_n\pto\aopt$, we have $\tilde{a}_n\pto\aopt$, and
the uniform convergence established above gives
$\PsiEmp'(\tilde{a}_n)\pto-\mathcal{I}(\aopt)=\Apsi$.
Slutsky's theorem gives
$\sqrt{n}(\widehat{a}_n-\aopt)\dto N(0,\Bpsi/\Apsi^2)$.
\qed

\subsection{Proof of Proposition~\ref{prop:avar}}
\label{app:proof-avar}

The first claim $\avar(\anstar)=\Bpsi/\Apsi^2$ follows directly from
Proposition~\ref{prop:asymp-dist}. The second claim
$\avar(\hat{f}_n(\bx_{n+1}))=V_0$ holds by assumption on the
plug-in predictor. For the Gaussian linear model with OLS plug-in,
$V_0=\bx_{n+1}^T\Sigma_\beta\bx_{n+1}$ by the delta method applied
to $g(\bbeta)=\bx_{n+1}^T\bbeta$ (Corollary~\ref{cor:avar-ols}). \qed

\subsection{Proof of Theorem~\ref{thm:dominance}}
\label{app:proof-dominance}

\paragraph{Part (i).}
Since $|\psif^\ast(y;a)|\le M$ (Proposition~\ref{prop:bounded-inf}),
$B^\ast(\epsilon,\tau)=\Var_{\Gept}(\psif^\ast(Y;a_{\epsilon,\tau}^\ast))
\le M^2$.
By Proposition~\ref{prop:finite-range},
$|A^\ast(\epsilon,\tau)|\ge\underline{A}_T>0$ for $0<\tau\le T$. Hence
$V^\ast(\epsilon,\tau)\le M^2/\underline{A}_T^2<\infty$.

\paragraph{Part (ii).}
By assumption $\E_{H_\tau}[\psif_0(Y)^2]\to\infty$ as $\tau\to\infty$,
forcing $V_0(\epsilon,\tau)\to\infty$. Under the Gaussian linear model
this is verified by Corollary~\ref{cor:dominance-ols}.

\paragraph{Part (iii).}
Let $C^\ast=\sup_\tau V^\ast(\epsilon,\tau)<\infty$.
By~(ii), there exists $\tau_0$ with $V_0(\epsilon,\tau)>C^\ast$ for
$\tau\ge\tau_0$, and then $V^\ast(\epsilon,\tau)\le C^\ast<V_0(\epsilon,\tau)$.
\qed

\subsection{Proof of Proposition~\ref{prop:finite-range}}
\label{app:proof-finite-range}

Define the root set
$\mathcal{R}_T=\{(a_{\epsilon,\tau}^\ast,\tau):0<\tau\le T\}$.
By the assumed compactness of the contaminated roots, $\mathcal{R}_T$
is contained in the compact set $K_T\times[0,T]$.
Under the assumed continuous differentiability of
$\Psi^\ast(a;\Gept)$ with respect to $(a,\tau)$ and uniqueness of the roots,
the map $\tau\mapsto a_{\epsilon,\tau}^\ast$ is continuous, so
$\mathcal{R}_T$ is compact.

The function $g(a,\tau)=\partial_a\Psi^\ast(a;\Gept)$ is continuous on
$K_T\times[0,T]$ by assumption, hence on $\mathcal{R}_T$.
Setting
$\underline{A}_T=\min_{\mathcal{R}_T}|g|$ and
$\overline{A}_T=\max_{\mathcal{R}_T}|g|$,
both are attained. The local convexity condition of
Lemma~\ref{lem:convexity} (which applies throughout $K_T$ by assumption)
implies $g\neq 0$ on $\mathcal{R}_T$, so $\underline{A}_T>0$.
Continuity on a compact set gives $\overline{A}_T<\infty$. \qed

\subsection{Proof of Theorem~\ref{thm:consistency}}
\label{app:proof-consistency}

Because $\aopt$ is the unique minimizer of $J$, for every $\varepsilon>0$
there exists $\eta>0$ with
$\inf_{|a-\aopt|\ge\varepsilon}J(a)\ge J(\aopt)+3\eta$.
On the event $\sup_a|J_n(a)-J(a)|<\eta$,
$J_n(\aopt)\le J(\aopt)+\eta$, while for $|a-\aopt|\ge\varepsilon$,
$J_n(a)\ge J(\aopt)+2\eta>J_n(\aopt)$.
Hence $\anstar$ cannot lie outside the $\varepsilon$-ball around $\aopt$,
giving $\Prob(|\anstar-\aopt|\ge\varepsilon)\to 0$.
Under correct specification, $\aopt=\mzero$, and consistency
$\anstar\pto\mzero$ follows. \qed

\subsection{Proof of Proposition~\ref{prop:avgopt}}
\label{app:proof-avgopt}

Posterior concentration of $\Pi_n$ at $\sigma_\epsilon^2$ and
continuity of $J(a;\sigma^2)$ imply $Q_n(a)\to J(a;\sigma_\epsilon^2)$
uniformly over $\mathcal{A}$; the argmin theorem gives
$a_n^*\to a_\epsilon^*$.

For the expansion, $a_\epsilon^*$ satisfies $\nabla_a
J(a_\epsilon^*;\sigma_\epsilon^2)=0$ for all $\epsilon$.
Differentiating this identity with respect to $\epsilon$ and applying
the chain rule,
\[
\partial^2_a J(a_\epsilon^*;\sigma_\epsilon^2)
\cdot\frac{da_\epsilon^*}{d\epsilon}
+
\nabla_{\sigma^2}\nabla_a J(a_\epsilon^*;\sigma_\epsilon^2)
\cdot\frac{d\sigma_\epsilon^2}{d\epsilon}
= 0.
\]
Setting $\epsilon=0$ and substituting
$d\sigma_\epsilon^2/d\epsilon|_{\epsilon=0}=B_\sigma(H)$
(Theorem~\ref{thm:pseudotrue}),
\[
H_a(\sigma_0^2)\cdot\frac{da_\epsilon^*}{d\epsilon}\bigg|_{\epsilon=0}
+
\nabla_{\sigma^2}\nabla_a J(a_0^*;\sigma_0^2)\cdot B_\sigma(H)
= 0.
\]
Since $H_a(\sigma_0^2)>0$ by Lemma~\ref{lem:convexity}, solving gives
$da_\epsilon^*/d\epsilon|_{\epsilon=0}=B_a(H)=-H_a(\sigma_0^2)^{-1}\nabla_{\sigma^2}\nabla_a J(a_0^*;\sigma_0^2)\,B_\sigma(H)$,
and then the Taylor expansion yields
$a_\epsilon^*=a_0^*+\epsilon B_a(H)+o(\epsilon)$.
\qed

\end{document}